\def\cl@chapter{}
\DeclareMathOperator*{\colim}{colim}
\newcommand{\LTO}[1]{\overset{\sf{#1}}{\bullet}}
\newcommand{\DTO}[1]{\underset{\sf{#1}}{\bullet}}
\newcommand{\DTOW}[1]{\color{white}{\underset{\sf{#1}}{\bullet}}}
\newcommand{\init}{\mathit{init}}
\newcommand{\ax}{\mathfrak{A}}
\newcommand{\To}[1]{\xrightarrow{#1}}
\newcommand{\Too}[1]{\xhookrightarrow{\ \ #1\ \ }}
\newcommand{\pullbackcorner}[1][dl]{\save*!/#1-1pc/#1:(-1,1)@^{|-}\restore}
\newcommand{\myvdots}{\raisebox{.006\baselineskip}{\ensuremath{\vdots}}}
\newcommand{\Set}{{\sf Set}}
\renewcommand\tableofcontents{%
  \section*{\contentsname}%
  \begingroup
    \small
    \@starttoc{toc}%
  \endgroup
}
\begin{document}

\title{Fast Left Kan Extensions Using The Chase}
%
%
\author{Joshua Meyers \and David I. Spivak \and Ryan Wisnesky (corresponding author)}

%
%

\institute{All authors at Conexus AI}

\date{\today} 

\maketitle              
\begin{abstract}
We show how computation of left Kan extensions can be reduced to computation of free models of cartesian (finite-limit) theories.  We discuss how the standard and parallel chase compute weakly free models of regular theories and free models of cartesian theories, and compare the concept of ``free model'' with a similar concept from database theory known as ``universal model''.  We prove that, as algorithms for computing finite free models of cartesian theories, the standard and parallel chase are complete under fairness assumptions.  Finally, we describe an optimized implementation of the parallel chase specialized to left Kan extensions that achieves an order of magnitude improvement in our performance benchmarks compared to the next fastest left Kan extension algorithm we are aware of.  
\end{abstract}

\keywords{
Computational category theory \and left Kan extensions \and the Chase \and Data migration \and Data integration \and Lifting Problems \and Regular Logic \and Existential Horn Logic \and Datalog-E \and Model Theory
}

\section{Introduction}

{\it Left Kan extensions}~\cite{CARMODY1995459} are used for many purposes in automated reasoning: to enumerate the elements of finitely-presented algebraic structures such as  monoids; to construct semi-decision procedures for Thue (equational) systems; to compute the cosets of groups; to compute the orbits of a group action; to compute quotients of sets by equivalence relations; and more.  

Left Kan extensions are described category-theoretically, and we assume a knowledge of category theory~\cite{BW} in this paper, but see the next section for a review.  Let $C$ and $D$ be categories and $F: C\to D, I:C\to {\sf Set}$ be functors.  The left Kan extension (formally defined in Example~\ref{leftkan}) $\Sigma_F(I):D\to {\sf Set}$ always exists when $C$ is small\footnote{A benign set-theoretic assumption to avoid set-of-all-sets paradoxes.} and is unique up to unique isomorphism, but it need not be finite ($\Sigma_F(I)(d)$ need not have finite cardinality for any $d$).  In this paper we describe how to compute finite left Kan extensions when $C$, $D$, and $F$ are finitely presented and $I$ is finite, a semi-computable problem originally solved in \cite{CARMODY1995459} and significantly improved upon  in~\cite{BUSH2003107}.




\subsection{Motivation} 

Our interest in left Kan extensions comes from their use in data migration~\cite{wadt,relfound,DBLP:journals/jfp/SchultzW17}, where $C$ and $D$ represent database schemas, $F$ represents a ``schema mapping''~\cite{Haas:2005:CGU:1066157.1066252} defining a translation from schema $C$ to $D$, and $I$ represents an input $C$-database (often called an {\it instance}) that we wish to migrate to schema $D$.  Our implementation of the fastest left Kan algorithm we knew of from existing literature~\cite{BUSH2003107} was impractical for large input instances, yet it bore a striking operational resemblance to an algorithm from relational database theory known as the {\it chase}~\cite{Deutsch:2008:CR:1376916.1376938}, which is also used to solve data migration problems, and for which efficient implementations are known~\cite{Benedikt:2017:BC:3034786.3034796}.  The chase takes an input instance $\mathcal{I}$ and a set of formulae $\mathcal{F}$ in a subset of first-order logic known to logicians as existential Horn logic~\cite{Deutsch:2008:CR:1376916.1376938}, to category theorists as {\it regular logic}~\cite{relolog}, to database theorists as datalog-E and/or {\it embedded dependencies}~\cite{Deutsch:2008:CR:1376916.1376938}, and to topologists as lifting problems~\cite{spivak2014}, and constructs an $\mathcal{F}$-model $chase_\mathcal{F}(\mathcal{I})$ that is ``universal'' among other such ``$\mathcal{F}$-repairs'' of $\mathcal{I}$.

\subsection{Related Work}

In this paper, we show how left Kan extensions can be computed by way of constructing a free model of a cartesian theory on a given instance.  As described in the next paragraph, construction of the free model of a cartesian theory on a given instance resembles the classical {\it universal model construction}~\cite{Deutsch:2008:CR:1376916.1376938} in database theory, except for three important technical differences:
\begin{enumerate}
    \item In database theory, databases are assumed to contain two disjoint kinds of value, constants and labelled nulls, with database homomorphisms required to preserve constants.  In this terminology, the databases that we are left Kan extending are always assumed to be made up entirely of labelled nulls.    
    \item In database theory, the ``universal model'' solution concept is prevalent; whereas in category theory, the ``free model'' and ``weakly free model'' solution concepts are prevalent.  We will compute left Kan extensions by way of free models rather than universal models.
    \item In database theory, theories are typically assumed to be ``regular'', i.e., in a $\forall x \exists y P(x,y)$ form.  The theories we require for computing left Kan extensions are always ``cartesian'', i.e., in $\forall x \exists ! y P(x,y)$ (exists unique) form.
\end{enumerate}

\subsection{Contributions} 

In this paper, we:

\begin{itemize}

\item show how the problem of computing left Kan extensions of set-valued functors can be reduced to the problem of computing free models of {\it cartesian} theories~\cite{adamek_rosicky_1994} (regular theories where every $\exists$ quantifier is read as ``exists-unique'') on input instances; and,


\item prove that the standard chase and parallel chase~\cite{Deutsch:2008:CR:1376916.1376938} compute finite weakly free models of regular theories and finite free models of cartesian theories; and,

\item prove completeness of the standard and parallel chase on cartesian theories under fairness assumptions; and,


\item describe an optimized left Kan extension algorithm, inspired by the parallel chase, that achieves an order of magnitude improvement in our performance benchmarks compared to the next fastest left Kan extension algorithm we are aware of~\cite{BUSH2003107}.

\end{itemize}



 \subsection{Outline}

This paper is structured as follows.  In the next section we review category theory~\cite{BW} and then describe a running example of a left Kan extension.  In section~\ref{lk} we show that left Kan extensions can be considered as free models of cartesian theories.  In section~\ref{chasecart} we discuss how chase algorithms can be used for computing such free models of cartesian theories, as well as the more general case of weakly free models of regular theories.  
In section~\ref{impl} we describe our particular left Kan algorithm implementation, compare it to the algorithm in~\cite{BUSH2003107}, and provide experimental performance results.  We conclude in section~\ref{comp} by discussing additional differences between the chase as used in relational database theory and as used in this paper.  We assume knowledge of formal logic and algebraic specification at the level of~\cite{Baader:1998:TR:280474}, and knowledge of left Kan extensions at the level of~\cite{CARMODY1995459} and knowledge of the chase at the level of~\cite{Deutsch:2008:CR:1376916.1376938} is helpful.

\subsection{Review of Category Theory}
\label{section.ct}

In this section, we review standard definitions and results from category theory~\cite{BW}.  We make the technical distinction between ``{\it class}'' and ``{\it set}'' -- all sets are classes, but not all classes are sets.  This distinction allows us to speak of ``the class of all sets'', whereas invoking ``the set of all sets'' would run into Cantor's paradox.  A {\it class function} is defined similarly to a function, except that it uses the word ``class'' where the definition of ``function'' uses the word ``set''.

\begin{definition} A {\it quiver}, (aka directed multi-graph) $Q$ consists of
 a class ${\sf Ob}(Q)$, the members of which we call {\it objects} (or {\it nodes}), and
for all objects $c_1, c_2$, a set $Q(c_1, c_2)$, the members of which we call {\it morphisms} (or {\it arrows}) from $c_1$ to $c_2$.
\end{definition}

We may write $f : c_1 \to c_2$ or $c_1\To{f} c_2$ instead of $f \in C(c_1,c_2)$.

\begin{definition} For an arrow $f:c_1\to c_2$ in a quiver, we call
  $c_1$ the {\it source} of $f$ and $c_2$ the {\it target} of
  $f$.
\end{definition}

\begin{definition} In a quiver $Q$, a {\it path} from $c_1$ to $c_k$
  is a non-empty finite list of nodes and arrows
  $c_1\To{e_1}c_2\To{e_2}\cdots\To{e_{k-1}}c_k$.
\end{definition}

\begin{definition}\label{parallel} In a quiver $Q$, two paths from $c$ to $c'$ are called {\it parallel}.
\end{definition}

\begin{definition} A {\it category} $C$ is a quiver equipped with the
  following structure:
  \begin{itemize}
\item for all objects $c_1,c_2,c_3$, a function $\circ_{c_1,c_2,c_3} : C(c_2,c_3) \times C(c_1,c_2) \to C(c_1,c_3)$, which we call {\it composition}, and
\item for every object $c$, an arrow ${\sf id}_c \in C(c,c)$, which we call the {\it identity} for $c$. 
\end{itemize}
We may drop subscripts on ${\sf id}$ and $\circ$, when doing so does not create ambiguity.  These data  must obey axioms stating that $\circ$ is associative and ${\sf id}$ is its unit:
$$
{\sf id} \circ f = f \ \ \ \ f \circ {\sf id} = f \ \ \ \ f \circ (g \circ h) = (f \circ g) \circ h
$$
\end{definition}

\begin{definition} In a category $C$, the composition of a path
  $c_1\To{e_1}c_2\To{e_2}\cdots\To{e_{k-1}}c_k$ from $c_1$ to $c_k$ is
  defined recursively as ${\sf id}_{c_1}$ if $k=1$ and, if $k>1$, the composition of $e_1$ with the composition of the path $c_2\To{e_2}c_3\To{e_2}\cdots\To{e_{k-1}}c_k$.
\end{definition}

\begin{definition} A category $C$ is {\it small} if ${\sf Ob}(C)$ is a
  set and $C(c_1,c_2)$ is a set for all objects $c_1, c_2$.
\end{definition}

\begin{definition} Two morphisms $f : c_1 \to c_2$ and
  $g : c_2 \to c_1$ such that $f \circ g = {\sf id}$ and
  $g \circ f = {\sf id}$ are said to be an {\it isomorphism}.  We may
  also say in this situation that $f$ is an isomorphism.
\end{definition}
We write $c \in C$ to indicate $c \in {\sf Ob}(c)$ when it is clear that $c$ is an
object.


\begin{definition} An object $c$ of a category $C$ is called {\it initial} if for all $c'\in C$, there is a unique morphism $c\to c'$.  It is called {\it weakly initial} if for all $c'\in C$, there is a (not necessarily unique) morphism $c\to c'$.
\end{definition}
\begin{lemma}
All initial objects of a category are uniquely isomorphic (that is, for any two initial objects $c$ and $d$, there is exactly one isomorphism $c\to d$).  All weakly initial objects of a category are homomorphic (that is, for any two weakly initial objects $c$ and $d$, there is at least one morphism $c\to d$).
\end{lemma}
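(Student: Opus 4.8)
The plan is to dispatch the two claims separately, noting that the second is essentially immediate while the first is where the actual work lies. Both are instances of the familiar slogan that uniqueness of maps forces rigidity, so I would organize the argument around the uniqueness clause in the definition of an initial object.

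For the weak-initiality claim, I would simply unfold the definition: if $c$ is weakly initial, then for every object---in particular for $d$---there exists a (not necessarily unique) morphism $c \to d$. This already yields the desired ``at least one morphism $c \to d$'', so nothing further is needed. Note that only weak initiality of $c$ is used here; weak initiality of $d$ plays no role in this direction.

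For the initiality claim, I would run the standard argument. First, initiality of $c$ supplies a unique morphism $f : c \to d$, and initiality of $d$ supplies a unique morphism $g : d \to c$. Now consider the composite $g \circ f : c \to c$. This is a morphism from $c$ to itself; but ${\sf id}_c$ is another such morphism, and since $c$ is initial there is exactly one morphism $c \to c$, so
$$
g \circ f = {\sf id}_c .
$$
Symmetrically, $f \circ g : d \to d$ and ${\sf id}_d$ are both morphisms $d \to d$, and initiality of $d$ forces $f \circ g = {\sf id}_d$. By the definition of isomorphism, this exhibits $f$ as an isomorphism with inverse $g$.

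It remains to argue uniqueness of the isomorphism. Here I would observe that the required uniqueness is actually inherited for free from the stronger uniqueness built into initiality: because $c$ is initial there is exactly one morphism $c \to d$ of \emph{any} kind, and we have just shown that this single morphism is an isomorphism. Hence there is exactly one isomorphism $c \to d$, as claimed. I do not expect any genuine obstacle; the only points requiring care are to invoke initiality of \emph{both} $c$ and $d$ (one object for each composite identity) and to notice that the ``exactly one isomorphism'' clause is a consequence of, rather than an addition to, the ``exactly one morphism'' clause.
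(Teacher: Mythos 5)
Your proof is correct: the weak-initiality half follows by unfolding the definition, and the initiality half is the standard rigidity argument, with the key observation (that uniqueness of the isomorphism is inherited from uniqueness of the morphism) handled properly. The paper itself states this lemma without proof, treating it as a standard fact, and your argument is exactly the canonical one that fills that gap.
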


\begin{example}
The category ${\sf Set}$ has for objects all the sets in some set theory, such as ZFC, and for morphisms $X$ to $Y$ the (total, deterministic, not necessarily computable) functions
$X \to Y$.  The isomorphisms of ${\sf Set}$ are
exactly the bijections.
\end{example}
\begin{example}
A typed programming language gives a category,
with its types as objects and programs taking inputs of type $t_1$ and
returning outputs of type $t_2$ as morphisms $t_1 \to t_2$.  The composition of morphisms $f:t_1\to t_2$ and $g:t_2\to t_3$ is then defined as $(g\circ f)(x)\{ \textrm{return } g(f(x));\}$.
\end{example}


\begin{definition}
A {\it functor} $F : C \to D$ between categories $C$ and $D$ consists of:
\begin{itemize}
    \item a class function $F : {\sf Ob}(C) \to {\sf Ob}(D)$, and
    \item for every $c_1,c_2 \in {\sf Ob}(C)$, a function $F_{c_1,c_2} : C(c_1,c_2) \to D(F(c_1),F(c_2))$, where we may omit object subscripts when they can be inferred, such that 
\end{itemize}
$$
F({\sf id}_c) = {\sf id}_{F(c)} \ \ \ \ \ \ F(f \circ g) = F(f) \circ F(g).
$$
\end{definition}

\begin{example} The category ${\sf Cat}$ of all small categories, with functors as morphisms.
\end{example}

\begin{definition}
A {\it natural transformation} $h : F \to G$ between functors $F, G : C \to D$ consists of a family of morphisms $h_c : F(c) \to G(c)$, indexed by objects in $C$, called the {\it components} of $h$,
 such that for every $f : c_1 \to c_2$ in $C$ we have
$
h_{c_2} \circ F(f) = G(f) \circ h_{c_1}
$.
\end{definition}

The family of equations defining a natural transformation may be depicted as a {\it commutative diagram}: 
$$
\xymatrix{
F(c_1) \ar[d]_{h_{c_1}} \ar[r]^{F(f)} & F(c_2) \ar[d]^{h_{c_2}} \\
G(c_1) \ar[r]_{G(f)}& G(c_2) \\ 
}
$$
The commutativity of such a diagram means that any two parallel
paths in the diagram have the same composition in $D$; in this case,
the only non-trivial case is the two paths east-south and south-east.

\begin{example}
Given (small) categories $C$ and $D$, the functors from $C$ to $D$
form the functor category $D^C$, whose morphisms are
natural transformations.  
\end{example}

\begin{example} 
A relational database schema consisting of single-part foreign keys and
single-part unique identifiers also forms a category, say $C$, and we
may consider $C$-databases as functors $C \to {\sf Set}$~\cite{wadt},
with the pleasant property that natural transformations of such
functors correspond exactly to the $C$-database homomorphisms in the
sense of relational database
theory~\cite{Deutsch:2008:CR:1376916.1376938} (bearing in mind some
caveats alluded to in the introduction and discussed further in the conclusion to this paper and elsewhere).  Thus ${\sf Set}^C$ is the category of $C$-databases.
\end{example}

\begin{definition} A natural transformation is called a {\it natural isomorphism} when, considered as a morphism in a  category of functors and natural transformations, it is an isomorphism, or equivalently, when all of its components are isomorphisms. \end{definition}

Because categories are algebraic objects, they can be presented by
{\it generators and relations} (or as we like to say, generators and
equations) in a manner similar to e.g. groups~\cite{relfound}.

\begin{definition}
  The {\it free category} generated by a quiver $Q$ is the category
  ${\sf free}(Q)$ defined by
  \begin{itemize}
  \item ${\sf Ob}({\sf free}(Q))$ is defined as ${\sf Ob}(Q)$.
  \item for objects $c_1,c_2$, ${\sf free}(Q)(c_1,c_2)$ is defined as the set of all paths in $Q$ from $c_1$ to $c_2$.
  \item for paths $p=(c_1\To{e_1}\cdots\To{e_{k-1}}c_k)$ and $q=(c_k\To{e_k}\cdots\To{e_{l-1}}c_l)$, define $q\circ p$ as $c_1\To{e_1}\cdots\To{e_{l-1}}c_l$ (composition is path concatenation).
  \item for every object $c$, ${\sf id}_c$ is defined as the trivial path $c$.
  \end{itemize}
\end{definition}

\begin{definition}
  A {\it category presentation} $(Q,E)$ consists of a quiver $Q$
  and a set $E$ of pairs of parallel paths (see Definition~\ref{parallel}).  An element $(p,q)$ of
  $E$ is called an {\it path equation} and written $p=q$.
\end{definition}

\begin{definition}
  Let $C$ be a category.  A relation $\sim$ on the arrows of $C$ is
  called a {\it congruence} on $C$ if
  \begin{itemize}
  \item whenever $f\sim g$, $f$ and $g$ have the same source and target
  \item $\sim$ is an equivalence relation
  \item whenever $f,g:c\to d$ and $h:b\to c$, $f\sim g$ implies $f\circ h\sim g\circ h$
  \item whenever $f,g:c\to d$ and $k:d\to e$, $f\sim g$ implies $k\circ f\sim k\circ g$
  \end{itemize}
\end{definition}

\begin{definition}
  Given a category $C$ and a congruence $\sim$, the {\it quotient
    category} $C/\sim$ has the same objects as $C$ and its morphisms
  are $\sim$-classes of morphisms of $C$.  We define the source and
  target of a class $[f]$ as the source and target of $f$, define the
  identity of $c$ as $[{\sf id}_c]$, and define the composition 
  $[f]\circ [g]$ to be $[f\circ g]$.  Well-definedness follows from
  $\sim$ being a congruence.
\end{definition}

\begin{definition}
  In a category presentation $(Q,E)$, let $\sim_E$ be the
  smallest congruence on ${\sf free}(Q)$ containing $E$.  Then the
  category {\it presented by} $(Q,E)$ is the category
  ${\sf free}(Q)/\sim_E$.
\end{definition}

\begin{lemma}
  Let $(Q,E)$ be a category presentation.  Consider the following
  inference rules:

  \begin{alignat*}{3}
    \text{\sf Axiom }&\frac{(f=g)\in E}{f=g}& \text{\sf Ref }&\frac{}{f=f}& \text{\sf Sym }&\frac{f=g}{g=f} \\
    \text{\sf Trans }&\frac{f=g,\ g=h}{f=h}\quad  &\text{\sf RCong }&\frac{f=g:c\to d,\ h:b\to c}{f\circ h = g\circ h} &\quad\text{\sf LCong }&\frac{f=g:c\to d,\ k:d\to e}{k\circ f = k\circ g} \\
  \end{alignat*}

  If $f=g$ is provable in this calculus, we write $E\vdash f=g$.

  Then for morphisms $f,g:c\to d$, $f\sim_E g$ iff $E\vdash f=g$.
\end{lemma}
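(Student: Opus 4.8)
The plan is to prove that the two relations agree by establishing mutual inclusion on the set of parallel pairs. Write $R$ for the relation $\{(f,g) : E\vdash f=g\}$ on the arrows of ${\sf free}(Q)$; the goal is to show $R$ and $\sim_E$ coincide on parallel morphisms $f,g:c\to d$. Note that $\sim_E$, being a congruence, already relates only parallel arrows, so the biconditional amounts to showing $R={\sim_E}$ as relations.

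For the inclusion $R\subseteq{\sim_E}$ (soundness of the calculus), I would argue by induction on the derivation witnessing $E\vdash f=g$. The key observation is that $\sim_E$ validates every inference rule: the rule \textsf{Axiom} is justified because $E\subseteq{\sim_E}$ by the definition of the smallest congruence containing $E$; the rules \textsf{Ref}, \textsf{Sym}, and \textsf{Trans} are justified because $\sim_E$ is an equivalence relation; and the rules \textsf{RCong} and \textsf{LCong} are justified because $\sim_E$ is a congruence, hence closed under right and left composition exactly as in the last two clauses of the definition of congruence. Thus each rule carries premises lying in $\sim_E$ to a conclusion lying in $\sim_E$, and the induction closes.

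For the reverse inclusion ${\sim_E}\subseteq R$ (minimality), I would show that $R$ is itself a congruence on ${\sf free}(Q)$ containing $E$; minimality of $\sim_E$ then forces ${\sim_E}\subseteq R$. That $R$ contains $E$ is immediate from \textsf{Axiom}. That $R$ is an equivalence relation is precisely what \textsf{Ref}, \textsf{Sym}, and \textsf{Trans} supply, and the two substitution clauses of the congruence definition are supplied verbatim by \textsf{RCong} and \textsf{LCong}. The one additional thing to verify is the first clause of the definition of congruence, namely that $f\mathrel{R}g$ implies $f$ and $g$ share source and target; this I would check by a separate induction on derivations, observing that each rule preserves the property of relating only parallel morphisms (for \textsf{RCong} and \textsf{LCong} this uses the side conditions on the composable arrow $h$, respectively $k$).

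The argument is essentially bookkeeping: both directions reduce to straightforward inductions once one notes that the six inference rules are in bijective correspondence with the defining closure conditions of ``smallest congruence containing $E$''. The only point requiring genuine, if minor, care is the source/target invariant for $R$, since without it $R$ would fail the first clause of the congruence definition and minimality could not be applied; I expect this parallelism check to be the main, though modest, obstacle.
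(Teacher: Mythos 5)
Your proof is correct. The paper states this lemma without any proof at all, so there is no argument of the authors to compare against; your proposal supplies exactly the standard argument one would expect them to have in mind: soundness by induction on derivations (each rule is validated by any congruence containing $E$), and completeness by observing that the provability relation is itself a congruence containing $E$ and invoking minimality of $\sim_E$. Your attention to the parallelism invariant (that derivable equations only relate arrows with common source and target, needed for the first clause of the paper's definition of congruence) is the one point where care is genuinely required, and you handle it correctly via a separate induction on derivations.
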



\begin{definition}
  Let $Q$ be a quiver.  A $Q${\it-algebra} $A$ consists of,
  \begin{itemize}
  \item for each object $c$ of $Q$, a set $Ac$, and
  \item for each morphism $f:c_1\to c_2$ of $Q$, a function $Af:Ac_1\to Ac_2$.
  \end{itemize}

  Given $Q$-algebras $A$ and $B$, a $Q${\it -algebra homomorphism}
  $\alpha$ is, for each object $c$ of $Q$, a function
  $\alpha_c:Ac\to Bc$ such that for all morphisms $f:c_1\to c_2$ of $Q$, the
  following diagram commutes:
  $$
  \xymatrix{
    Ac_1 \ar[d]_{\alpha_{c_1}} \ar[r]^{Af} & Ac_2 \ar[d]^{\alpha_{c_2}} \\
      Bc_1 \ar[r]_{Bf}& Bc_2 \\
    }
  $$
  To ease notation, in the following list, for a path $p=(c_0\To{f_1}c_1\To{f_2}\cdots\To{f_{n}}c_n)$, and operation $A$,  we write $Ap$ to indicate $Af_n\circ\cdots\circ Af_2\circ Af_1$.
\end{definition}
\begin{lemma}\label{functorpresentation}
  Let $(Q,E)$ be a category presentation.  Then:
  \begin{itemize}
  \item A functor from the category presented by $(Q,E)$ to
    ${\sf Set}$ is equivalent to a $Q$-algebra $A$ with $Ap=Aq$ whenever $(p=q)\in E$.
  \item A natural transformation between two such functors is
    equivalent to a $Q$-algebra homomorphism between the corresponding
    $Q$-algebras.
  \item A functor from the category presented by $(Q,E)$ to the
    category presented by $(Q',E')$ is equivalent to a morphism of
    presentations (``signature morphism'' in~\cite{relfound}), which we define inline as:
    \begin{itemize}
    \item For each object $c$ of $Q$, an object $Fc$ of $Q'$
    \item For each morphism $f:c_1\to c_2$ of $Q$, a path
      $Ff=(Fc_1\to c'\to\cdots\to Fc_2)$ of $Q'$.
    \end{itemize}
    such that for each equation $(c_0\To{f_1}c_1\To{f_2}\cdots\To{f_{n}}c_n)=(c_0\To{g_1}c'_1\To{g_2}\cdots\To{g_{n}}c_n)$ in $E$, we have that $E'\vdash Ff_n\circ\cdots\circ Ff_1 = Fg_n\circ\cdots\circ Fg_1$ ($\circ$ is composition in ${\sf free}(Q')$, i.e. concatenation).
  \end{itemize}
\end{lemma}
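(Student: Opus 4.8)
The plan is to reduce all three bullets to two universal properties: that of the free category ${\sf free}(Q)$ and that of the quotient ${\sf free}(Q)/\sim_E$. Since the category presented by $(Q,E)$ is by definition ${\sf free}(Q)/\sim_E$, each bullet is an instance of the same principle: a map out of a presented category is a map out of the free category that is compatible with the equations.

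First I would handle the free-category half. A morphism in ${\sf free}(Q)$ is a path and composition is concatenation, so a functor $F$ out of ${\sf free}(Q)$ is determined by its values on objects and on length-one paths (the generating arrows), and conversely any assignment of a set $Ac$ to each node and a function $Af$ to each generating arrow extends uniquely to a functor via $Fp = Af_n\circ\cdots\circ Af_1$ and $F({\sf id}_c)={\sf id}_{Fc}$. This assignment is exactly the data of a $Q$-algebra (bullet one, before imposing $E$), and the analogous assignment of an object $Fc$ of $Q'$ to each node and a path $Ff$ of $Q'$ to each generating arrow is exactly the object-and-path data of a signature morphism (bullet three, before imposing $E$). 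For bullet two I would observe in the same vein that a natural transformation between functors on ${\sf free}(Q)$ is a family of components indexed by nodes whose naturality squares need only be checked on generating arrows: naturality for a composite path follows by pasting the naturality squares for its edges, and naturality for an identity is automatic. Since the morphisms of the quiver $Q$ are precisely the generating arrows, this condition is precisely that of a $Q$-algebra homomorphism.

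Second I would invoke the universal property of the quotient $C/\sim$: a functor $C/\sim\to X$ is the same as a functor $C\to X$ that identifies $\sim$-related morphisms, and natural transformations pass through the quotient functor. Applying this with $C={\sf free}(Q)$ and $\sim=\sim_E$, the functors ${\sf free}(Q)/\sim_E\to\Set$ are exactly those $Q$-algebras whose induced functor respects $\sim_E$. The crux is to show that respecting the full congruence $\sim_E$ is equivalent to respecting only the generators, i.e. $Ap=Aq$ for each $(p=q)\in E$. One direction is immediate. For the other I would use the preceding lemma's characterization $\sim_E = {\vdash}$ and argue by induction on the derivation of $E\vdash p=q$ that any functor respecting $E$ sends $p$ and $q$ to equal morphisms: the rules \textsf{Ref}, \textsf{Sym}, \textsf{Trans}, \textsf{RCong}, \textsf{LCong} are each preserved because functoriality turns them into, respectively, the reflexivity, symmetry, transitivity, and right/left compatibility of equality of functions, all of which hold automatically. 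The identical induction carried out in the target ${\sf free}(Q')/\sim_{E'}$ rather than in $\Set$ handles bullet three, producing exactly the stated side condition $E'\vdash Ff_n\circ\cdots\circ Ff_1 = Fg_n\circ\cdots\circ Fg_1$.

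The main obstacle is precisely this induction on derivations: everything else is bookkeeping about free categories and quotients, but one must check each inference rule individually, and in particular see that \textsf{RCong} and \textsf{LCong} are preserved exactly because a functor preserves composition. I would also be careful to read ``equivalent'' as a bijection of the relevant functor- and natural-transformation-sets that is compatible across bullets one and two, so that the three statements assemble into an isomorphism between $\Set^{{\sf free}(Q)/\sim_E}$ and the category of $Q$-algebras satisfying $E$, with signature morphisms playing the corresponding role for maps between presented categories.
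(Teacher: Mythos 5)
Your proposal is correct: the paper itself states Lemma~\ref{functorpresentation} without proof (it is treated as a standard fact), and your argument via the universal properties of ${\sf free}(Q)$ and of the quotient ${\sf free}(Q)/\sim_E$, combined with induction over the derivation rules of the preceding calculus lemma to pass from the generating equations $E$ to the full congruence $\sim_E$, is exactly the standard argument the paper implicitly relies on. The only point worth flagging is in the third bullet: a functor ${\sf free}(Q)/\sim_E\to{\sf free}(Q')/\sim_{E'}$ assigns to each generator a $\sim_{E'}$-class of paths, so distinct signature morphisms whose chosen paths are componentwise $\sim_{E'}$-equivalent induce the same functor; this is an informality already present in the statement's use of ``equivalent,'' not a gap in your proof.
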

\begin{definition}
A {\it pushout} of objects $A,B,C$ and morphisms $f,g$ in a category, as shown below, is an object $D$ and morphisms $\alpha$ and $\beta$ as shown below, having the universal property that for any other such $D'$ and $\alpha'$ and $\beta'$, there is a unique morphism $\theta$ making the diagram commute:
\[
\xymatrix{
 A \ar[r]^{g}      \ar[d]_{f}         & C \ar[d]_{\beta} \ar@/^1pc/[rdd]^{\beta'}   \\
 B \ar[r]^{\alpha} \ar@/_1pc/[rrd]_{\alpha'} & D \ar@{-->}[rd]^{\theta} \pullbackcorner[ul] \\
 && D' 
}
\]

The dual notion of pushout is {\it pullback}.  Pushouts generalize to more complicated diagrams, in which case they are called colimits, but we do not define colimits here.

\end{definition}

\begin{definition}
  Given two functors $F : C \to D$ and $G : D \to C$, we say that $F$
  is {\it left adjoint} to $G$, written $F \dashv G$, when for every
  object $c$ in $C$ and $d$ in $D$ that the set of morphisms
  $F(c) \to d$ in $D$ is isomorphic to the set of morphisms
  $c \to G(d)$ in $C$, naturally in $c$ and $d$ (i.e., when we
  independently consider each side of the isomorphism as a functor
  $C \to {\sf Set}$ and as a functor ${\sf D} \to {\sf Set}$).
  
  Associated with each adjunction is a natural transformation $\eta : id_C \to G \circ F$ called the {\it unit} of the
  adjunction; a component $\eta_c : c\to G(F(c))$ of this transformation can be computed by applying the isomorphism $D(F(c),F(c))\cong C(c,G(F(c)))$ to the identity morphism ${\sf id}_{F(c)}$.
\end{definition}

\begin{definition}
  Given functors $F:C\to E$ and $G:D\to E$, the \textit{comma category} $F/G$ is a category whose objects are triples $(c\in C, d\in D, f:Fc\to Gd)$ and whose morphisms are pairs $(h:c\to c',k:d\to d'): (c,d,f)\to (c',d',f')$ such that the following square commutes:
  \[
\xymatrix{
 Fc \ar[r]_{f} \ar[d]_{Fh} & Gd \ar[d]^{Gk} \\
 Fc' \ar[r]^{f'} & Gd' \\
}
\]
\end{definition}

In this paper, almost all of the comma categories we will consider are of the form $e/G$, where $e$ is the inclusion of an object $e\in E$.  Then objects are simply pairs $(d\in D,f:e\to Gd)$ and morphisms are $(k:d\to d'):(d,f)\to (d',f')$ such that the following triangle commutes:
\[
\xymatrix{
  & e \ar[dl]_f \ar[dr]^{f'} \\
  Gd\ar[rr]_{Gk} && Gd'
}
\]

\begin{lemma} \label{lemma:adjunction}
  Suppose that a functor $G:D\to C$ has a left adjoint $F$.  Then for all $c\in C$, $(Fc,\eta_c)$ is an initial object in the comma category $c/G$.  Explicitly, for any $d\in D$ and $f:c\to Gd$, there is a unique $k:Fc\to d$ such that the following diagram commutes:
\[
\xymatrix{
  & c \ar[dl]_{\eta_c} \ar[dr]^{f} \\
  FGc\ar[rr]_{Fk} && Fd
}
\]

Conversely, if, for a functor $G:D\to C$, the category $c/G$ has an initial object $(Fc, \eta_c)$ for all $c$, then $F:{\sf Ob}(C)\to {\sf Ob}(D)$ extends to a functor $C\to D$ which is left adjoint to $G$ and has unit $\eta$.
\end{lemma}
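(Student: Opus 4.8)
The plan is to prove both directions by moving back and forth between the hom-set formulation of the adjunction and the ``initial object in a comma category'' formulation, with the unit $\eta$ as the bridge. Throughout, let $\phi_{c,d}: D(Fc,d) \cong C(c,Gd)$ denote the natural bijection witnessing $F \dashv G$, so that by definition $\eta_c = \phi_{c,Fc}({\sf id}_{Fc})$. Recall that an object of $c/G$ is a pair $(d, f)$ with $f: c \to Gd$, and a morphism $(Fc,\eta_c) \to (d,f)$ is a $k: Fc \to d$ satisfying the triangle condition $Gk \circ \eta_c = f$.

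For the forward direction I would first extract the single identity that does all the work. Naturality of $\phi$ in its second argument says that for $g: d \to d'$ and $h: Fc \to d$ we have $\phi_{c,d'}(g \circ h) = Gg \circ \phi_{c,d}(h)$; specializing to $d = Fc$, $h = {\sf id}_{Fc}$, and $g = k$ gives $\phi_{c,d}(k) = Gk \circ \eta_c$ for every $k: Fc \to d$. Consequently, given any object $(d,f)$ of $c/G$, a morphism $(Fc,\eta_c)\to(d,f)$ is exactly a $k$ with $\phi_{c,d}(k) = f$. Since $\phi_{c,d}$ is a bijection, the unique such $k$ is $\phi_{c,d}^{-1}(f)$, which establishes that $(Fc,\eta_c)$ is initial in $c/G$.

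For the converse I would reconstruct the adjunction from the universal property. Given $u: c \to c'$ in $C$, the pair $(Fc', \eta_{c'} \circ u)$ is an object of $c/G$, so initiality of $(Fc,\eta_c)$ yields a unique $Fu: Fc \to Fc'$ with $G(Fu) \circ \eta_c = \eta_{c'} \circ u$; this defines $F$ on morphisms. Functoriality, $F({\sf id}_c) = {\sf id}_{Fc}$ and $F(v \circ u) = Fv \circ Fu$, follows immediately from the uniqueness clause, since in each case both the claimed value and its expansion satisfy the same defining equation. That very equation is the naturality square for $\eta$, so $\eta: {\sf id}_C \to G \circ F$ is natural. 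Finally I would set $\phi_{c,d}(k) := Gk \circ \eta_c$ and note that initiality says exactly that each $\phi_{c,d}$ is a bijection (existence of the induced morphism is surjectivity, uniqueness is injectivity); naturality of $\phi$ in both arguments then reduces to functoriality of $G$ together with naturality of $\eta$.

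The argument is essentially routine diagram-chasing once the identity $\phi_{c,d}(k) = Gk \circ \eta_c$ is available, so I do not anticipate a genuine obstacle. The step requiring the most care is the converse construction of $F$ on morphisms: each functoriality and naturality claim must be discharged by appeal to the \emph{uniqueness} half of the universal property rather than by direct computation, and one must be careful to form the defining comma-category object from the composite $\eta_{c'} \circ u$, which is the only legitimately composable combination. Beyond this bookkeeping, the remaining verifications are mechanical.
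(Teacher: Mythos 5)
Your proof is correct. Note that the paper itself states this lemma without proof (it appears in the background review of category theory as a standard textbook fact), so there is no paper proof to compare against; your argument — deriving the key identity $\phi_{c,d}(k) = Gk \circ \eta_c$ from naturality of the hom-set bijection for the forward direction, and reconstructing $F$ on morphisms via initiality plus uniqueness for the converse — is the standard one and is complete. One small point in your favor: you correctly read the triangle condition as $Gk \circ \eta_c = f$ with $\eta_c : c \to GFc$ and $Gk : GFc \to Gd$, silently repairing the typo in the paper's displayed diagram, which mislabels the objects and arrow as $FGc$, $Fd$, and $Fk$.
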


\begin{example}\label{leftkan}
The most important example of an adjunction in this paper is a left Kan extension.  Let $F:C\to D$ be a functor and consider the functor $\Delta_F:{\sf Set}^D\to {\sf Set}^C$ defined by pre-composition with $F$: $\Delta_F(I)\coloneqq I\circ F$, and for $\alpha:I\to I'$, $(\Delta_F(\alpha))_{d} \coloneqq \alpha_{Fd}$.  Whenever $C$ is small, this functor has a left adjoint $\Sigma_FI:{\sf Set}^C \to {\sf Set}^D$, called the {\it left Kan extension} of $I$ by $F$.  Applying the previous lemma, we find that for a $C$-database $I$, the object $(\Sigma_F(I),\eta_I)$ of $I/\Delta_F$ is initial (where $\eta_I$ is the unit of the $\Sigma_F \dashv \Delta_F$ adjunction).  We give a formula for $\Sigma_F(I)$ (see~\cite{riehl}):
\begin{equation}\label{leftkanformula}
\Sigma_F(I)(d)=\colim(F/d\To{\Pi} C\To{I} {\sf Set})
\end{equation}
where $\Pi$ is the canonical projection functor.
\end{example}
\begin{example}
Another important example of an adjunction is an inclusion $U: C\hookrightarrow D$ which has a left adjoint $R:D\rightarrow C$.  Then $C$ is called a {\it reflective subcategory} of $D$ and $R$ is called the ${\it reflector}$.
\end{example}

\section{Left Kan Extensions as Free Models of Cartesian Theories}
\label{lk}

In this section we show that left Kan Extensions can be considered as free models.  
We:

\begin{itemize}
    \item define regular and cartesian logic (Section~\ref{flt}), as well as the {\it cartesian theory of a category}; and,
    \item define free and weakly free models of a theory on an input instance (Section~\ref{initweaklyinit}); and,
    \item define the {\it cograph} of a functor, a category ${\sf cog}(F)$ (Section~\ref{sec.cograph}); and,
    \item show that the left Kan extension of a $C$-instance $I$ by a functor $F:C\to D$ is a free model of the cartesian theory of ${\sf cog}(F)$ on $I$, considered as an input instance (section~\ref{prf}).
\end{itemize}



We begin by describing a running example left Kan computation.

\subsection{Running Example of a Left Kan Extension}
\label{sec:ex}
Our running example of a left Kan extension is that of quotienting a set by an equivalence relation, where the equivalence relation is induced by two given functions.  In this example, the input data consists of teaching assistants ({\sf TA}s), {\sf Faculty}, and {\sf Student}s, such that every TA is exactly one faculty and exactly one student.  We wish to compute all of the persons without double-counting the TAs, which we can do by taking the disjoint union of the faculty and the students and then equating the two occurrences of each TA.  

Our source category $C$ is the category {\sf Faculty'} $\leftarrow$ {\sf TA'} $\rightarrow$ {\sf Student'}, our target category $D$ extends $C$ into a commutative square with new object, {\sf Person} with no ${\sf '}$ marks for disambiguation, and our functor $F:C \to D$ is the inclusion:

\[
C := \parbox{1.63in}{\fbox{\xymatrix@=8pt{
& \LTO{TA'} \ar[dl]_{\sf isTF'} \ar[dr]^{\sf isTS'} & \\
\DTO{Faculty'}  & & \DTO{Student'} \\
& \DTOW{Person'} & \\ }}}
\Too{F}  
\parbox{1.67in}{\fbox{\xymatrix@=8pt{
& \LTO{TA} \ar[dl]_{\sf isTF} \ar[dr]^{\sf isTS} & \\
\DTO{Faculty}  \ar[dr]_{\sf isFP} & {}^{ {\sf isTF}.{\sf isFP} = {\sf isTS}.{\sf isSP}} & \DTO{Student}  \ar[dl]^{\sf isSP} \\
& \DTO{Person} & }}} \ \ \ \ \ \ \ \ \ \ \ \ \ =: D
\]
Our input functor $I : C \to \Set$, displayed with one table per object, is:
\[
\begin{tabular}{ >{\sffamily}l}
 Faculty'    \\\hline 
 'Dr.' Alice  \\
 'Dr.' Bob   \\
 Prof. Ed  \\
 Prof. Finn  \\
 Prof. Gil  
\end{tabular}
\hspace{.5in}
\begin{tabular}{>{\sffamily}l}
 Student'     \\\hline 
Alice  \\
Bob  \\
Chad \\
Doug \\ \\
\end{tabular}
\hspace{.5in}
\begin{tabular}{>{\sffamily}l>{\sffamily}l>{\sffamily}l}
  TA'  &  isTF'  & isTS'  \\\hline 
math-TA &'Dr.' Alice&Alice\\
cs-TA  &'Dr.' Bob&Bob\\ \\ \\ \\
\end{tabular}
\]
The {\sf cs-TA} is both {\sf 'Dr.' Bob} and {\sf Bob}, and the left Kan extension equates them as persons.  Similarly, the {\sf math-TA} is both {\sf 'Dr.' Alice} and {\sf Alice}.  We thus expect $5+4-2=7$ persons in $\Sigma_F(I)$.  However, there are infinitely many left Kan extensions $\Sigma_F(I)$; each is naturally isomorphic to the one below in a unique way.  That is, the following tables uniquely define $\Sigma_F(I)$ up to choice of names:
\[
\begin{tabular}{>{\sffamily}l>{\sffamily}l}
 Faculty & isFP     \\\hline 
  'Dr.' Alice  & math-TA \\ 
  'Dr.' Bob  &  cs-TA  \\ 
  Prof. Ed  & Prof. Ed \\ 
   Prof. Finn  & Prof. Finn \\ 
   Prof. Gil  & Prof. Gil \\
  \\ \
\end{tabular}
\hspace{.15in}
\begin{tabular}{>{\sffamily}l>{\sffamily}l}
 Student & isSP    \\\hline 
Alice  & math-TA \\ 
Bob  & cs-TA  \\ 
Chad  &   Chad \\ 
 Doug  & Doug \\ \\ \\ \\ 
\end{tabular}
\hspace{.15in}
\begin{tabular}{>{\sffamily}l>{\sffamily}l>{\sffamily}l}
  TA  &  isTF & isTS  \\\hline 
math-TA &'Dr.' Alice&Alice\\
cs-TA  &'Dr.' Bob&Bob\\ \\ \\ \\ \\ \\
\end{tabular}
\hspace{.15in}
\begin{tabular}{>{\sffamily}l>{\sffamily}l}
  Person \\\hline 
  Chad \\
   cs-TA  \\ 
 Doug  \\ 
   Prof. Ed \\ 
  Prof. Finn \\ 
   Prof. Gil \\   
    math-TA \\  
\end{tabular}
\]

In this example the natural transformation $\eta_I:I\to \Delta_F(\Sigma_F(I))$, i.e.\ the $I$-component of the unit of the $\Sigma_F \dashv \Delta_F$ adjunction, is an isomorphism of $C$-instances; it associates each source {\sf Faculty'} to the similarly-named target {\sf Faculty}, etc.  This is not generally the case; the reason it is true here is that $F$ is fully faithful, so for $c\in C$ we have $F/Fc\cong {\sf id}_C/c$, and the colimit formula \ref{leftkanformula} gives
\[
\Delta_F\Sigma_FI(c)=\colim({\sf id}_C/c\To{\Pi} C\To{I} {\sf Set})\cong c
\]

\subsection{Regular and Cartesian Theories and Models}\label{flt}
\begin{definition}
  A {\it signature} $\sigma$ consists of a set $S$ of sorts and a set $R$ of relation symbols, each with a {\it sorted arity} -- that is, a list of sorts.
  
  An {\it instance} $I$ on  $\sigma$, also called a $\sigma$-{\it instance}, consists of a set $Is$ for each sort $s\in S$ and a relation $Ir\subseteq Is_0\times\cdots\times Is_n$ for each relation symbol $r\in R$ of arity $s_0,\ldots,s_n$.  
  We call any $v \in \bigcup_{s\in S}Is$ an {\it element} of $I$; we try to use sans serif names for instance elements, and the letters $u$, $v$, $w$ for variables ranging over instance elements.  
  
  A instance $J$ is a {\it subinstance} of $I$ if $Js\subseteq Is$ for each sort $s\in S$ and $Jr\subseteq Ir$ for each relation symbol $r\in R$.
  
  A {\it morphism of $\sigma$-instances} $f:I\to J$ is a sort-indexed family of functions $f_s:Is\to Js$ such that for every relation symbol $r\in R$ of arity $s_0,\ldots,s_n$ and all elements $v_0\in Is_0,\ldots,v_n\in Is_n$ we have that $r(v_0,\ldots,v_n)$ implies $r(f_0(v_0),\ldots,f_n(v_n))$.  When it is clear from context, we omit subscripts on these functions.
  
  A morphism of $\sigma$-instances can equivalently be described as a sort-indexed family of functions $f_s:Is\to Js$ such that for every relation symbol $r\in R$ of arity $s_0,\ldots,s_n$, the east-south path in the following diagram factors through $Jr$, as shown by the dashed arrow:
  
  \[ \begin{tikzcd}
  Ir \ar[d,dashrightarrow, "f_r"] \ar[r,hook] & Is_0\times\cdots\times Is_n \ar[d, "f_{s_0}\times\cdots\times f_{s_n}"] \\
  Jr \ar[r,hook] & Js_0\times\cdots\times Js_n
  \end{tikzcd} \]
  
  A morphism of $\sigma$-instance is called {\it surjective} if all components $f_s$ are surjective and all induced components $f_r$ are also surjective. 
  
  Instances and morphisms of $\sigma$-instances form a category, $\sigma{\sf -Inst}$.
\end{definition}

The following lemma provides another perspective on instances.
\begin{lemma}\label{instancesascopresheafs}
Let $C$ be the free category on the quiver with objects $S\sqcup R$ and a morphism $p_{n}^r:r\to s$ whenever the $n$th sort in the arity of $r$ is $s$.  Then $\sigma{\sf -Inst}$ embeds as a full subcategory of ${\sf Set}^C$ via the mapping which sends an instance $I$ on $\sigma$ to a functor sending $s\in S$ to $Is$, sending $r\in R$ to $Ir$, and sending $p_{n}^r$ to the projection of $Ir$ onto its $n$th component.
\end{lemma}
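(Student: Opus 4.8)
The plan is to exhibit the stated assignment as a functor $\Phi:\sigma{\sf -Inst}\to {\sf Set}^C$ and then to verify that $\Phi$ is injective on objects and fully faithful; a full embedding is exactly what ``embeds as a full subcategory'' requires, and its image is the desired subcategory. Throughout I would lean on Lemma~\ref{functorpresentation}: since $C$ is \emph{free} on the quiver $Q$ (equivalently, presented by $(Q,\emptyset)$), a functor $C\to{\sf Set}$ is the same data as a $Q$-algebra, and a natural transformation between two such functors is the same data as a $Q$-algebra homomorphism. This has two payoffs: it lets me define $\Phi$ just by specifying a $Q$-algebra, and it reduces every naturality check to the generating arrows $p_n^r$. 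It also helps to note that $Q$ is very simple -- every non-identity arrow runs from a relation symbol to a sort, and sorts emit no non-identity arrows -- so $C$ has no nontrivial composites to worry about.

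First I would check that $\Phi$ is well defined. For an instance $I$ with $r\in R$ of arity $s_0,\dots,s_k$, the assignment $s\mapsto Is$, $r\mapsto Ir$, and $p_n^r\mapsto (\pi_n|_{Ir}:Ir\to Is_n)$ -- the restriction of the $n$th product projection to $Ir\subseteq Is_0\times\cdots\times Is_k$ -- is a $Q$-algebra, hence a functor $\Phi(I):C\to{\sf Set}$. For a morphism $f:I\to J$ of instances, the sort-components $f_s:Is\to Js$ together with the induced components $f_r:Ir\to Jr$ (the dashed arrow of the defining diagram, acting by $f_r(v_0,\dots,v_k)=(f_{s_0}(v_0),\dots,f_{s_k}(v_k))$) form a family $\Phi(f)$. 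By Lemma~\ref{functorpresentation} this is a natural transformation provided it commutes with each $p_n^r$, i.e.\ provided $\pi_n|_{Jr}\circ f_r=f_{s_n}\circ\pi_n|_{Ir}$, which is immediate from the componentwise formula for $f_r$. Preservation of identities and composition by $\Phi$ is then routine.

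Faithfulness and injectivity on objects are both direct. A morphism of instances is determined by its sort-components alone (the $f_r$ are induced, not independent data), so agreement of $\Phi(f)$ and $\Phi(g)$ on sorts already forces $f=g$; and $\Phi(I)=\Phi(J)$ forces $Is=Js$ and $Ir=Jr$ for all $s$ and $r$, hence $I=J$.

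The one step with genuine content -- and the one I expect to be the main obstacle -- is fullness, because a priori a natural transformation $\alpha:\Phi(I)\to\Phi(J)$ could carry $r$ to an arbitrary function $\alpha_r:Ir\to Jr$ bearing no relation to the sort-components. Naturality removes this freedom. Putting $f_s\coloneqq\alpha_s$, the naturality square at $p_n^r$ gives $\pi_n(\alpha_r(v_0,\dots,v_k))=f_{s_n}(v_n)$ for every $(v_0,\dots,v_k)\in Ir$; letting $n$ range over $0,\dots,k$ forces $\alpha_r(v_0,\dots,v_k)=(f_{s_0}(v_0),\dots,f_{s_k}(v_k))$. Because $\alpha_r$ takes values in $Jr$, this single computation does double duty: it shows the tuple $(f_{s_0}(v_0),\dots,f_{s_k}(v_k))$ lies in $Jr$, so that $f\coloneqq(f_s)_{s\in S}$ is a morphism of instances, and it shows $\Phi(f)=\alpha$. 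Hence $\Phi$ is a full embedding, and $\sigma{\sf -Inst}$ is identified with the full subcategory of ${\sf Set}^C$ on those functors $G$ with $G(r)\subseteq\prod_n G(s_n)$ and each $G(p_n^r)$ the corresponding projection.
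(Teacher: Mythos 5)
Your proof is correct. The paper states Lemma~\ref{instancesascopresheafs} without any proof, so there is no argument of the authors' to compare against; your write-up supplies exactly the verification they left implicit. The route you take is the natural one: invoking Lemma~\ref{functorpresentation} to identify functors out of the free category $C$ with $Q$-algebras (and natural transformations with $Q$-algebra homomorphisms), so that functoriality and naturality need only be checked on the generating arrows $p_n^r$, and then correctly isolating fullness as the only step with real content—naturality at the projections forces each component $\alpha_r$ to act coordinatewise, which simultaneously shows that the sort-components form a morphism of instances (the tuple of images lands in $Jr$ because $\alpha_r$ takes values there) and that this morphism is sent to $\alpha$.
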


We next discuss syntax.  We leave many of the details informal, but see \cite{johnstone_2002} for a fuller treatment.  We try to use the letters $x,y,z$ (possibly with subscripts or primes) as (object language) variables, and we assume all variables have unique \textit{sort}s, writing $x:s$ to denote that variable $x$ has sort $s$.


\begin{definition}
Given a signature $\sigma=(S,R)$, a {\it regular formula} is a (possibly empty, indicating truth) conjunction of 
\begin{itemize}
\item {\it Equational atoms}:  assertions $x=y$, where $x$ and $y$ have the same sort in $S$, and
\item {\it Relational atoms}:  assertions $r(x_0,\ldots,x_n)$, where the sorts of $x_0,\ldots,x_n$ correspond to the arity of $r\in R$.
\end{itemize}
\end{definition}

\begin{definition}
Given a signature $\sigma$, let $\phi(x_0,\ldots,x_n)$ be a regular formula.  Let $I$ be a $\sigma$-instance.  We then define the $I$-interpretation of $\phi$ as the relation $I\phi\subseteq Is_0\times\cdots\times Is_n$ defined as
\begin{itemize}
    \item If $\phi(x_0,\ldots,x_n)=(x_i=x_j)$, then $I\phi\coloneqq\{(x_0,\ldots,x_n)\in Is_0\times\cdots\times Is_n \mid x_i=x_j\}$.
    \item If $\phi(x_0,\ldots,x_n)=r(x_{i_0}, \ldots, x_{i_k})$, then $I\phi\coloneqq\{(x_0,\ldots,x_n)\in Is_0\times\cdots\times Is_n \mid Ir(x_{i_0}, \ldots, x_{i_k})\}$.
    \item If $\phi(x_0,\ldots,x_n)=\bigwedge_j\phi_j(x_0,\ldots,x_n)$, then $I\phi\coloneqq\bigcap_jI\phi_j$.
\end{itemize}
\end{definition}

\begin{definition}
Given a signature $\sigma$, an {\it embedded dependency} (ED), or {\it regular sequent} $\xi$ is a constraint of the form
\begin{equation}\label{ED}
  \forall (x_0:s_0) \cdots  (x_n:s_n) \ldotp
  \phi(x_0 , \ldots , x_n) \Rightarrow
  \exists (x_{n+1}:s_{n+1}) \cdots \ (x_m:s_m)\ldotp \psi(x_0, \ldots, x_m)
\end{equation}
where $\phi$ and $\psi$ are regular formulas.

If $\psi$ consists only of equational atoms, $\xi$ is called an {\it equality-generating dependency} (egd).  If $\psi$ consists only of relational atoms, $\xi$ is called an {\it tuple-generating dependency} (tgd).
\end{definition}

The following is straightforward.

\begin{lemma}\label{EDisegdsandtgds}
Every ED is logically equivalent to a conjunction of egds and tgds.
\end{lemma}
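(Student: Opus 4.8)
The plan is to reduce a general ED of the form~\eqref{ED} to a conjunction of egds and tgds by eliminating, one existential variable at a time, every existentially quantified variable that participates in an equational atom, and then distributing the implication over the resulting conjunction. To set up, write the consequent $\psi$ as $\psi_{=}\wedge\psi_{R}$, where $\psi_{=}$ collects its equational atoms and $\psi_{R}$ its relational atoms; the frontier variables are $x_0,\dots,x_n$ and the existential variables are $x_{n+1},\dots,x_m$. The naive hope that $\exists\vec y.(\psi_{=}\wedge\psi_{R})$ factors as $(\exists\vec y.\psi_{=})\wedge(\exists\vec y.\psi_{R})$ fails, because the two conjuncts would have to share their witnesses; reconciling this sharing is exactly the obstacle the argument must overcome.

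First I would remove that obstacle with the one-point rule. Whenever $\psi_{=}$ contains an atom $u=v$ in which at least one side, say $v$, is an existential variable, I invoke the equivalence $\exists v.\,(u=v\wedge\chi(v))\equiv\chi(v)[v:=u]$ (valid since existential quantifiers may be reordered to make $\exists v$ innermost among those relevant, $u\neq v$ may be assumed, and trivial atoms $v=v$ discarded) to delete the quantifier $\exists v$ and substitute $u$ for $v$ throughout $\psi$. Each such step removes one existential quantifier, so the iteration terminates, halting precisely when no equational atom mentions an existential variable. The outcome is a logically equivalent consequent of the shape $\epsilon(\vec x)\wedge\exists\vec y'.\psi_{R}'(\vec x,\vec y')$, where $\epsilon$ is a conjunction of equational atoms among the frontier variables alone and $\psi_{R}'$ is a conjunction of relational atoms.

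Since $\epsilon$ no longer mentions any existential variable, I can pull it outside the existential quantifier, so the whole ED becomes equivalent to $\forall\vec x.\,\phi(\vec x)\Rightarrow\big(\epsilon(\vec x)\wedge\exists\vec y'.\psi_{R}'(\vec x,\vec y')\big)$. Distributing the implication over the conjunction via $A\Rightarrow(B\wedge C)\equiv(A\Rightarrow B)\wedge(A\Rightarrow C)$, and once more over the individual atoms of $\epsilon=\bigwedge_k(u_k=v_k)$, yields the egds $\forall\vec x.\,\phi(\vec x)\Rightarrow u_k=v_k$ (one per $k$) together with the single tgd $\forall\vec x.\,\phi(\vec x)\Rightarrow\exists\vec y'.\psi_{R}'(\vec x,\vec y')$, with degenerate cases (empty $\epsilon$ or empty $\psi_{R}'$) simply dropping the corresponding conjuncts. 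The only nonroutine point is establishing termination and correctness of the one-point elimination step; everything after it is propositional and quantifier bookkeeping.
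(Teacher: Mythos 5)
Your proof is correct. The paper offers no inline argument for this lemma---its ``proof'' is just a citation to Abiteboul--Hull--Vianu---and your argument is essentially the standard one that citation points to: one-point elimination of each existential variable occurring in an equational atom, pulling the residual frontier-only equations out of the existential block, and distributing the implication to obtain one egd per residual equation plus a single tgd. The delicate points (reordering existentials, discarding trivial atoms $v=v$, termination of the elimination, and the degenerate empty cases) are all handled correctly in your write-up, so it can stand as a self-contained proof where the paper defers to a reference.
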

\begin{proof}
See~\cite{abiteboul_hull_vianu_1996}.
\qed \end{proof}

\begin{definition}
A {\it regular theory}~\cite{johnstone_2002} $\ax$
on a signature $\sigma$ is a set of EDs on $\sigma$.  If an instance of $\sigma$ satisfies all of the EDs in $\ax$ in the usual way, it is called a ${\it model}$ of $\ax$.  A {\it morphism of models} of $\ax$ is defined as a morphism of $\sigma$-instances.  Models and morphisms of models of $\ax$ form a category, ${\sf Mod}(\ax)$.
\end{definition}

\begin{definition}
A {\it cartesian theory}~\cite{johnstone_2002}\footnote{This definition is not the same as that given in \cite{johnstone_2002}, but it is equivalent.} on a signature $\sigma$ is a set of constraints of the form
\begin{equation}\label{cartED}
  \forall (x_0:s_0) \cdots  (x_n:s_n) \ldotp
  \phi(x_0 , \ldots , x_n) \Rightarrow
  \exists ! (x_{n+1}:s_{n+1}) \cdots \ (x_m:s_m)\ldotp \psi(x_0, \ldots, x_m)
\end{equation}
where $\phi$ and $\psi$ are as before, and $\exists !$ means ``exists unique.''
\end{definition}

\begin{lemma}\label{cartisreg}
  Every cartesian theory is logically equivalent to a regular theory.
\end{lemma}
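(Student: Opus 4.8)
The plan is to remove the only feature distinguishing \eqref{cartED} from \eqref{ED}, namely the unique-existence quantifier $\exists!$, by replacing each cartesian sequent with a pair of ordinary EDs: one recording existence and one recording uniqueness. The guiding observation is the standard first-order equivalence that, for any instance and any valuation of the free variables, $\exists!(x_{n+1}\cdots x_m).\,\psi$ holds if and only if both $\exists(x_{n+1}\cdots x_m).\,\psi$ (``at least one witness'') and $\forall(x_{n+1}\cdots x_m)(x'_{n+1}\cdots x'_m).\,\psi(\ldots,x_m)\wedge\psi(\ldots,x'_m)\Rightarrow\bigwedge_i x_i=x'_i$ (``at most one witness'') hold.

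First I would fix an arbitrary cartesian sequent of the form~\eqref{cartED} and split it into these two halves. The existence half is obtained simply by reading $\exists!$ as $\exists$, which yields a sequent already of the form~\eqref{ED}, i.e.\ a regular sequent (a tgd when $\psi$ is purely relational). For the uniqueness half I would introduce fresh variables $x'_{n+1},\ldots,x'_m$ of the same respective sorts, duplicating the existentially bound block, and move the implicit universal quantifiers over the two witness tuples to the front, obtaining
$$\forall x_0\cdots x_m\, x'_{n+1}\cdots x'_m.\ \phi(x_0,\ldots,x_n)\wedge\psi(x_0,\ldots,x_m)\wedge\psi(x_0,\ldots,x_n,x'_{n+1},\ldots,x'_m)\Rightarrow\bigwedge_{i=n+1}^m x_i=x'_i.$$
Its premise is a conjunction of relational and equational atoms, hence a regular formula, and its conclusion is a quantifier-free conjunction of equational atoms, so the whole constraint is again of the form~\eqref{ED} --- in fact an egd. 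Replacing every cartesian sequent of the given theory by its existence and uniqueness EDs produces a set of EDs, i.e.\ a regular theory, which I would propose as the regular theory asserted by the lemma.

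It then remains to verify that the cartesian theory and this regular theory have exactly the same models, which I would argue semantically and instance by instance using the equivalence above: a model satisfies the cartesian sequent on a valuation of $x_0,\ldots,x_n$ satisfying $\phi$ precisely when there is exactly one witnessing tuple, and this is the conjunction of the existence ED and the uniqueness ED on that valuation, while when $\phi$ fails all three constraints are vacuous there. The one point that deserves care --- and the most likely place to go wrong --- is keeping $\phi(x_0,\ldots,x_n)$ in the premise of the uniqueness ED: omitting it would force uniqueness of $\psi$-witnesses even where $\phi$ is false, yielding a strictly stronger theory, so $\phi$ must be retained so that the uniqueness constraint fires exactly when the cartesian sequent does. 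The only other obligation is the routine freshness check that $x'_{n+1},\ldots,x'_m$ do not clash with the existing variables, which legitimizes the prenex movement of the universal quantifiers. Since the replacement was performed uniformly over all sequents, the resulting regular theory has the same category of models as the cartesian theory, giving the claimed logical equivalence.
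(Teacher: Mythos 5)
Your proposal is correct and takes essentially the same route as the paper: the paper's proof rewrites each constraint of form~\eqref{cartED} as the conjunction of the ordinary ED~\eqref{ED} (reading $\exists!$ as $\exists$) together with a uniqueness egd whose premise retains $\phi$ and contains two copies of $\psi$ over primed and unprimed witness variables, concluding the pairwise equalities $x_i = x'_i$. Your version is in fact slightly more careful than the paper's, which abbreviates the two copies of $\psi$ as $\psi(x_{n+1},\ldots,x_m)$ and $\psi(x'_{n+1},\ldots,x'_m)$ while suppressing the shared free variables $x_0,\ldots,x_n$ that you write out explicitly.
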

\begin{proof}
We can rewrite (\ref{cartED}) as the conjunction of (\ref{ED}) and
\begin{multline}
  \forall (x_0:s_0) \cdots  (x_n:s_n) (x_{n+1},x_{n+1}':s_{n+1})\cdots (x_m,x_m':s_m) \ldotp
  \phi(x_0 , \ldots , x_n)\ \wedge\\ \psi(x_{n+1},\ldots,x_m)\ \wedge\ \psi(x_{n+1}',\ldots,x_m')  \Rightarrow x_{n+1}=x_{n+1}'\ \wedge\ \cdots\ \wedge\ x_m=x_m'
\end{multline}
\qed\end{proof}


Cartesian theories can be equivalently described
\begin{itemize}
    \item using spans instead of relations, in which case they are called finite-limit sketches~\cite{johnstone_2002}
    \item using partial functions instead of relations, in which case they are often called {\it essentially algebraic theories}~\cite{nlab:essentially_algebraic_theory}.  Also see \cite{PALMGREN2007314}.
\end{itemize}


\subsubsection*{The Cartesian Theory of a Category Presentation}\label{sec.theoryofcat}
We now describe how to convert a presentation $(Q,E)$ of a category $C$ 
into a cartesian theory $\ax$ that axiomatizes the functors $C \to {\sf Set}$. To do so, first let $\sigma$ be the signature whose sorts are the objects of $Q$ and whose relation symbols are the morphisms $f:c\to d$ of $Q$, assigned the arity $c,d$.  Then let the theory $\ax$ be comprised of
\begin{itemize}
    \item axioms requiring all relations $f:c\to d$ be total and functional:
\[
 \forall (x : c)\ldotp \ \exists! (y : d)\ldotp f(x,y).
\]
\item and, for each equation\\ $(c\To{f_0}d_1\To{f_1}\cdots\To{f_{m-1}}d_m\To{f_m}c')=(c\To{g_0}e_1\To{g_1}\cdots\To{g_{n-1}}e_n\To{g_n}c')$ in $E$, an axiom
\[
 f_0(x,y_0)\wedge\cdots\wedge f_m(y_{m-1},y_m)\wedge g_0(x,z_0)\wedge\cdots\wedge g_n(z_{n-1},z_n) \Rightarrow y_m=z_n
\]
\end{itemize}

(Note that from now on, we will omit universal quantifiers and sorts when they can be inferred from context, but we will continue to make existential quantifiers explicit, as above.)

\begin{lemma}\label{theoryofcat}
  The categories ${\sf Set}^C$ and ${\sf Mod}(\ax)$ are isomorphic.
\end{lemma}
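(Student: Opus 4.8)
The plan is to factor the desired isomorphism through the category of $Q$-algebras satisfying the equations of $E$, so that Lemma~\ref{functorpresentation} does most of the work. Recall that Lemma~\ref{functorpresentation} already gives an isomorphism between ${\sf Set}^C$ (where $C$ is the category presented by $(Q,E)$) and the category whose objects are $Q$-algebras $A$ with $Ap=Aq$ for every $(p=q)\in E$ and whose morphisms are $Q$-algebra homomorphisms. Hence it suffices to exhibit an isomorphism of categories between ${\sf Mod}(\ax)$ and this category of $Q$-algebras-with-equations, and then compose. I would therefore devote the proof to constructing this second isomorphism explicitly and checking that it is a genuine isomorphism of categories (identity on underlying sets), not merely an equivalence.

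First I would treat objects. A model $M$ of $\ax$ is a $\sigma$-instance satisfying the axioms, so it assigns a set $Mc$ to each object $c$ of $Q$ (the sorts of $\sigma$) and a relation $Mf\subseteq Mc\times Md$ to each morphism $f:c\to d$ of $Q$ (the relation symbols of $\sigma$). The totality-and-functionality axiom $\forall (x:c)\ldotp\exists!(y:d)\ldotp f(x,y)$ forces $Mf$ to be the graph of a total function: the existence clause gives totality and the uniqueness clause gives single-valuedness. Reinterpreting each such relation as the function $Mf:Mc\to Md$ it represents yields exactly a $Q$-algebra. Under this reinterpretation, the path-equation axiom associated to $(p=q)\in E$ — which asserts that chasing $x$ along the $f$-path and along the $g$-path lands on equal elements — says precisely that the composite functions agree, i.e.\ $Mp=Mq$. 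Conversely, a $Q$-algebra satisfying the equations of $E$ produces, by taking the graph of each of its functions, a $\sigma$-instance satisfying every axiom of $\ax$. These two passages are visibly mutually inverse, since both are the identity on the underlying sorted sets and merely convert functional relations to functions and back.

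Next I would treat morphisms. A morphism of models of $\ax$ is by definition a morphism of $\sigma$-instances, namely a sorted family $\alpha_c:Mc\to Nc$ preserving each relation. For a relation symbol $g:c\to d$ whose interpretations $Mg$ and $Ng$ are graphs of functions, the preservation condition ``$g(v_0,v_1)$ in $M$ implies $g(\alpha_c v_0,\alpha_d v_1)$ in $N$'' unwinds to $Ng(\alpha_c v_0)=\alpha_d(Mg\,v_0)$ for all $v_0$, i.e.\ to the commuting square $Ng\circ\alpha_c=\alpha_d\circ Mg$. This is exactly the defining condition of a $Q$-algebra homomorphism, so the object correspondence above extends to an isomorphism of categories ${\sf Mod}(\ax)\cong\{\,Q\text{-algebras satisfying }E\,\}$. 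Composing with the isomorphism of Lemma~\ref{functorpresentation} gives ${\sf Set}^C\cong{\sf Mod}(\ax)$, as required.

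I expect no deep obstacle here; the entire argument is an unwinding of definitions, and the one genuinely load-bearing observation is the equivalence between ``total, single-valued relation'' and ``function.'' This is where the cartesian ($\exists!$) form of the axioms is essential: it is precisely the ``exists-unique'' that turns each relation symbol into a function and thereby keeps the correspondence a strict isomorphism of categories rather than a mere equivalence. The only points requiring care are the routine bookkeeping that the path-equation axioms coincide with the algebra equations and that the instance-morphism square coincides with the algebra-homomorphism square; neither presents any real difficulty.
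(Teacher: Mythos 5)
The paper states this lemma without any proof, treating it as immediate from the construction of $\ax$; your argument is correct and is exactly the intended one: Lemma~\ref{functorpresentation} identifies ${\sf Set}^C$ with the category of $Q$-algebras satisfying $E$, and the $\exists!$ axioms identify ${\sf Mod}(\ax)$ with that same category by converting total, single-valued relations into functions, with the path-equation axioms matching the algebra equations and the instance-morphism condition matching the algebra-homomorphism square. Your proposal thus correctly fills in the routine details the paper omits, following the same route the paper's placement of Lemma~\ref{functorpresentation} clearly intends.
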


To refer to the theory of a presentation of $C$, we will sometimes say metonymically ``the theory C''.

\subsection{Free and Weakly Free Models of Theories on Given Instances}\label{initweaklyinit}

\begin{definition} Let $\sigma$ be a signature and $\ax$ be a regular theory on $\sigma$.  Let $U:{\sf Mod}(\ax)\to \sigma{\sf -Inst}$ be the forgetful functor.  Let $I$ be a instance on $\sigma$. Then a (weakly) free model of $\ax$ on $I$ is a model $A$ of $\ax$ and a morphism $h:I\to A$ such that for any model $A'$ of $\ax$ and morphism $h':I\to A'$, there is a (not necessarily) unique morphism $g:A\to A'$ such that $g\circ h = h'$.
\end{definition}

Equivalently, a {\it (weakly) free model of} $\ax$ {\it on the input instance} $I$ is a (weakly) initial object of $I/U$.  

The database theory literature considers {\it universal models}, an even weaker notion than weakly free models.  

\begin{definition}
A universal model of $\ax$ on $I$ is a model $A$ of $\ax$ and a morphism $h:I\to A$ such that for any model $A'$ of $\ax$ and morphism $h':I\to A'$, there is a morphism $g:A\to A'$.\footnote{Our definition differs from that in \cite{Deutsch:2008:CR:1376916.1376938} in that we do not mandate finiteness.}
\end{definition}

Notice that the notion of universal model generalizes that of weakly free model by leaving off the final criterion.  The following lemma relates the notions of universal model and weakly free model.

\begin{lemma}\label{universalvsweaklyinitial}
Let $\ax$ be a regular theory on a signature $\sigma=(S,R)$ and $I$ a given $\sigma$-instance.
\begin{enumerate}
    \item If $I=\varnothing$, then $I/U\cong {\sf Mod}(\ax)$, so weakly free models of $\ax$ on $I$ are exactly universal models of $\ax$ on $I$.
    \item Define the signature $\sigma'=\sigma\sqcup \{p_v\mid v \textrm{ is an element of }I \}$, where $p_v$ is given arity $s$ whenever $v\in Is$.  Define the theory $\ax'=\ax\sqcup \{\exists! x\ldotp p_v(x)\mid v \textrm{ is an element of }I \} \sqcup
    \{p_{v_1}(x_1)\wedge\cdots\wedge p_{v_n}(x_n)\Rightarrow r(x_1,\ldots,x_n)\mid Ir(v_1,\ldots,v_n) \}$.  Then the category $I/U$ is isomorphic to the category ${\sf Mod}(\ax')$, so weakly free models of $\ax$ on $I$ are exactly weakly free models of $\ax'$ on $\varnothing$, which by part $1$ are exactly universal models of $\ax'$ on $\varnothing$.
    \item For this paragraph, locally introduce the notion of ``constant'', as is standard in database theory~\cite{Deutsch:2008:CR:1376916.1376938}, as follows.  We first (locally) require all elements of all instances we will ever consider to be drawn from either from a fixed universe $C$ of ``constants'' or a fixed universe $V$ of ``labelled nulls'', where $C$ and $V$ are disjoint; and we then extend the definition of morphisms of instances $h:I\to I'$ to mandate that $h(c)=c$ whenever $c$ is a constant.  If our instance $I$ is entirely comprised of constants, then weakly free models of $\ax$ on $I$ are exactly universal models of $\ax$ on $I$.
\end{enumerate}
\end{lemma}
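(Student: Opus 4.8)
The plan is to treat the three parts in sequence, using part~1 as a lemma for part~2, and handling part~3 directly from the definition of constant-preserving morphisms. For part~1, I would first observe that the empty instance $\varnothing$ is initial in $\sigma{\sf -Inst}$: for any model $A$ there is exactly one $\sigma$-instance morphism $\varnothing\to U(A)$. Hence an object of $I/U=\varnothing/U$ is determined by the model $A$ alone (the component $h$ carries no information), and a morphism $g\colon(A,h)\to(A',h')$ is just a morphism of models, since the triangle condition $U(g)\circ h=h'$ holds automatically, both sides being the unique arrow out of $\varnothing$. This yields $\varnothing/U\cong{\sf Mod}(\ax)$. The concluding clause then follows because, when $I=\varnothing$, the defining condition of a weakly free model (existence of $g$ with $g\circ h=h'$) and that of a universal model (existence of $g$) coincide, the equation $g\circ h=h'$ being vacuous.

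For part~2, I would exhibit an explicit isomorphism of categories $I/U\cong{\sf Mod}(\ax')$ and then invoke part~1 applied to $\ax'$ on $\varnothing$. In one direction, given a model $A'$ of $\ax'$, restricting its carriers and relations to $\sigma$ gives a model $A$ of $\ax$ (the $\ax$-axioms are among those of $\ax'$); each axiom $\exists! x\ldotp p_v(x)$ names a unique element $a_v$, defining a map $h\colon I\to U(A)$; and the axioms $p_{v_1}(x_1)\wedge\cdots\wedge p_{v_n}(x_n)\Rightarrow r(x_1,\ldots,x_n)$, one for each tuple with $Ir(v_1,\ldots,v_n)$, are exactly what makes $h$ preserve relations, i.e.\ a morphism of $\sigma$-instances. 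Conversely, from an object $(A,h)$ of $I/U$ I would form a $\sigma'$-instance by keeping $A$ and setting $p_v\coloneqq\{h(v)\}$, so that the uniqueness and relation-preservation axioms of $\ax'$ hold by construction. The crucial point on morphisms is that any morphism of $\sigma'$-instances must preserve each singleton predicate $p_v$, which forces $g(a_v)=a'_v$, i.e.\ $U(g)\circ h=h'$; this is precisely the triangle condition defining morphisms in $I/U$. These assignments are mutually inverse and functorial, giving the isomorphism. Weakly free models of $\ax$ on $I$, being weakly initial objects of $I/U$, thus correspond to weakly initial objects of ${\sf Mod}(\ax')\cong\varnothing/U'$, that is, to weakly free models of $\ax'$ on $\varnothing$, which by part~1 are the universal models of $\ax'$ on $\varnothing$.

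For part~3, the observation is that constant-preserving morphisms trivialize the triangle condition. If every element of $I$ is a constant, then any instance morphism $h\colon I\to A$ satisfies $h(c)=c$, and any morphism of models $g\colon A\to A'$ also fixes constants, so $g(h(c))=g(c)=c=h'(c)$ for every $c\in I$; hence $g\circ h=h'$ holds automatically for any such $g$. Since weakly free always implies universal (by forgetting the commutation requirement), it suffices to note that in this setting the existence of some $g$ (universal) is equivalent, for the same pair $(A,h)$, to the existence of some $g$ with $g\circ h=h'$ (weakly free).

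I expect the main obstacle to be the bookkeeping in part~2, specifically verifying that the correspondence is an isomorphism of categories at the level of morphisms: one must check that preservation of the singleton predicates $p_v$ by a $\sigma'$-instance morphism is \emph{equivalent} to the commuting-triangle condition in the comma category, and that the relation-preservation axioms encode, in both directions, the requirement that $h$ be a morphism of $\sigma$-instances rather than merely a family of functions.
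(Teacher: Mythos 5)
Your proposal is correct in all three parts; note that the paper states this lemma without any proof, treating it as a routine unfolding of the definitions. Your argument — $\varnothing$ initial in $\sigma{\sf -Inst}$ trivializing the comma category for part~1, the explicit isomorphism $I/U\cong{\sf Mod}(\ax')$ via singleton predicates $p_v$ (with predicate preservation equivalent to the commuting triangle) for part~2, and constant-preservation making $g\circ h=h'$ automatic for part~3 — is exactly the verification the paper leaves implicit, so it matches the intended approach and fills the gap faithfully.
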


One might hope that the construction in the third part of this lemma reduces the notion of ``weakly initial model'' to that of ``universal model'': just replace all elements in your instance with ``constants'' in the sense of the above.  However, the following example shows that this is not the case:

\begin{example} \textbf{Replacing all instance elements with constants (in the sense of Lemma~\ref{universalvsweaklyinitial} part 3) may invalidate previously valid universal models.}
Let $I=\{{\sf foo},{\sf bar} \}$ and $\ax=\{\forall x,y.x=y\}$.  Then we clearly have the weakly free (and thus universal) model $(\{{\sf baz}\}, !)$ where $!:I\to\{{\sf baz}\}$ is the unique morphism.  But if we replace all elements in $I$ with constants (in the sense of the above) to obtain $I'=\{c,d\}$, then there is no longer any universal model of $\ax$ over $I'$.  If there was a universal model $(A,a)$, we would have $a(c)=a(d)$, which is not possible since $c$ and $d$ are distinct constants.  So yes, as per Lemma~\ref{universalvsweaklyinitial}, weakly initial models of $\ax$ over $I'$ are exactly universal models of $\ax$ over $I'$, but in this example this equivalence does not matter, as there are none of either.
\end{example}

Also see Lemma~\ref{dataexchange} for more on the comparison between the notions of ``weakly free model'' and ``universal model'', and how they interact with the database theoretic distinction of ``constants'' and ``labelled nulls''.


Every free model is weakly free, and every weakly free model is universal.  The following example shows that the finite universal models exist more frequently than finite weakly free models (we will see in Corollary~\ref{existence} that weakly free models always exist).

\begin{example} \label{universalcounterexample}
\textbf{Existence of a finite universal model and a weakly free model does not imply existence of a finite weakly free model, even on a cartesian theory.}
Take the uni-typed signature $\sigma$ with a single binary relation $r$, and let $\ax=\{\forall x \exists! y \ldotp r(x,y), \exists! x\ldotp r(x,x) \}$ and $I=\{{\sf foo}\}$. Then the model $U=\{{\sf foo}\}$, $Ur=\{({\sf foo},{\sf foo})\}$ and the unique morphism $!:I\to U$ form a finite universal model of $\ax$ on $I$.  The model $N$ whose elements are the natural numbers and where $Nr=\{(0,0)\}\cup\{(m,n)\mid m+1=n>1 \}$, and the morphism $f:I\to N$, $f({\sf foo})=1$ form a weakly free model (in fact, a free model) of $\ax$ on $I$.

But suppose that $(A,a:I\to A)$ is a finite weakly free model of $\ax$ on $I$.  Consider again the model $N$ and the morphism $f:I\to N$, $f({\sf foo})=1$.  Then there must be a morphism $g:A\to N$ such that $g\circ a=f$, i.e. $g(a({\sf foo}))=f({\sf foo})=1$.  Since $A$ is finite, let $k\geq 1$ be the largest number such that $g^{-1}(k)$ is nonempty, and let $u\in g^{-1}(k)$.  Then there is a $v\in A$ such that $Ar(u,v)$, so $Nr(k,g(v))$, so $g(v)=k+1$, a contradiction.
\end{example}

However, the same is not true of the inclusion (free models $\subseteq$ weakly free models). 

\begin{lemma}\label{finiteweaklyinitial}
Existence of a finite weakly free model implies that every free model is finite.
\end{lemma}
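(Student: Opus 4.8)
The plan is to exhibit any free model as a retract of the given finite weakly free model, and then to conclude finiteness from the fact that a retract of a finite instance is finite. Recall that a free model is precisely an initial object of $I/U$ and a weakly free model a weakly initial object, so I will work entirely inside the comma category $I/U$ and only pass to underlying $\sigma$-instances at the end.

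First I would fix a finite weakly free model $(W,h_W)$ and let $(A,h_A)$ be an arbitrary free model. Because $(A,h_A)$ is initial in $I/U$, there is a unique morphism $f:(A,h_A)\to(W,h_W)$; because $(W,h_W)$ is weakly initial, there is at least one morphism $g:(W,h_W)\to(A,h_A)$. The composite $g\circ f$ is then an endomorphism of $(A,h_A)$ in $I/U$, and since $A$ is initial there is exactly one such endomorphism, namely the identity; hence $g\circ f=\mathrm{id}_A$. This retract step is the crux of the argument.

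Next I would pass to the underlying morphisms of $\sigma$-instances. The equation $g\circ f=\mathrm{id}_A$ holds componentwise, so for every sort $s\in S$ the function $f_s:As\to Ws$ has left inverse $g_s$ and is therefore injective. Since $W$ is finite, each $Ws$ is a finite set, so each $As$ injects into a finite set and is itself finite. Because all carriers $As$ are finite, every relation $Ar\subseteq As_0\times\cdots\times As_n$ is a subset of a finite set and hence finite as well, so $A$ is finite. As $(A,h_A)$ was an arbitrary free model, every free model is finite.

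The only point demanding care is the bookkeeping in the retract step: the morphism \emph{out of} the free model is the one supplied by initiality, and it is its \emph{uniqueness} that forces $g\circ f=\mathrm{id}_A$, whereas the morphism \emph{into} the free model is merely the one supplied by weak initiality of $W$, for which no uniqueness is claimed or needed. Everything else is the routine observation that a function with a left inverse is injective, that injections into finite sets have finite domain, and that finiteness of all carrier sets propagates to the (relational) structure.
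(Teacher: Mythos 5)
Your proof is correct and follows essentially the same route as the paper's: both obtain $f:A\to W$ and $g:W\to A$ commuting with the structure maps, use the uniqueness clause of freeness (initiality) to force $g\circ f=\mathrm{id}_A$, and conclude finiteness from the resulting retract. The only cosmetic difference is that you read the equation as saying $f$ is a (split) injection into the finite $W$, while the paper reads it as saying $W$ surjects onto $A$ via $g$ --- the same argument either way.
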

\begin{proof}
Let $(A,a)$ be a free model and $(B,b)$ be a finite weakly free model of $\ax$ on $I$.  Then there exist morphisms $f:A\to B$ and $g:B\to A$ such that $f\circ a =b$ and $g\circ b = a$.  Then $g\circ f\circ a = a = {\sf id}_A \circ a$, so by the definition of ``free model'', $g\circ f = {\sf id}_A$.  Thus $B$ surjects onto $A$, so $A$ must be finite.
\qed \end{proof}

It may be tempting to consider the 4th possible definition: we say a {\it strictly universal model} of $\ax$ on $I$ is a model $A$ of $\ax$ and a morphism $h:I\to A$ such that for any model $A'$ of $\ax$ and morphism $h':I\to A'$, there is a unique morphism $g:A\to A'$.  However, without the last equation, uniqueness becomes very hard to guarantee, even in the simplest cases.  For example, take the uni-typed signature with no relation symbols, $\ax=\varnothing$, and $I=\{{\sf foo}\}$.  Then $(I,{\sf id}_I)$ is a free model of $\ax$ on $I$, but not a strictly universal model of $\ax$ on $I$, because $A=\{{\sf foo},{\sf bar}\}$ is a model of $\ax$ and there are multiple morphisms $I\to A$.  It only makes sense to consider strictly universal models of $\ax$ on $I$ when $I=\varnothing$, and in this case, they are exactly free models.


As we will see in Corollary~\ref{existence}, for any (possibly infinite) regular theory $\ax$ on a (possibly infinite) signature $\sigma$ and for any (possibly infinite) $\sigma$-instance $I$, there exists a (possibly infinite) weakly free model of $\ax$ on $\sigma$, and, if $\ax$ is cartesian, there exists a (possibly infinite) free model of $\ax$ on $\sigma$.


\subsection{The Cograph of a Functor}\label{sec.cograph}

\begin{definition}
  The {\it cograph}~\cite{cograph} of a functor $F : C \to D$ is a category, written ${\sf cog}(F)$,  presented as follows: 
  we first take the co-product of $C$ and $D$ as categories (i.e., take the disjoint union of $C$ and $D$'s objects, generating morphisms, and equations), we then add a generating morphism $\alpha_c : c \to F(c)$ for each object $c \in C$, and finally we add an equation $F(f)\circ\alpha_c= \alpha_{c'}\circ f $ for each generating morphism $f : c \to c' \in C$.
\end{definition}

Categorically, the cograph of $F$ is the {\it collage}~\cite{GARNER20161} of the profunctor $D(F-,-)$ represented by $F$.  For example:

$$
\parbox{4.0in}{\xymatrix@=8pt{
& \LTO{TA'} \ar[rrrr]^{\alpha} \ar[dl]_{\sf isTF'} \ar[dr]^{\sf isTS'} & & & &  \LTO{TA} \ar[dl]_{\sf isTF} \ar[dr]^{\sf isTS}  \\
\DTO{Faculty'}  \ar@/_1.5pc/[rrrr]^{\alpha}& & \DTO{Student'} \ar@/_3.1pc/[rrrr]^{\alpha} &  &  \DTO{Faculty} \ar[dr]_{\sf isFP} & {}^{ {\sf isFP}\circ {\sf isTF} = {\sf isSP}\circ {\sf isTS} }  & \DTO{Student}   \ar[dl]^{\sf isSP}     \\ 
 & {}^{ \alpha\circ {\sf isTF'} = {\sf isTF}\circ \alpha}   &   {}^{ \alpha\circ {\sf isTS'} = {\sf isTS}\circ \alpha} & & &  \DTO{Person} }} 
 $$

 The evident inclusion functors $i_C : C \to {\sf cog}(F)$ and $i_D : D \to {\sf cog}(F)$ of $C$ and $D$ into ${\sf cog}(F)$ will be used several times throughout the paper. The following proposition characterizes ${\sf cog}(F)$-instances.

\begin{proposition}\label{prop:cograph_cat}
Let  $F\colon C\to D$ be a functor. The following are equivalent:
\begin{enumerate}
	\item the category ${\sf id}_{{\sf Set}^C}/\Delta_F$ of triples $(I,J,f)$, with $I: C\to\Set$, $J:D\to\Set$, and $f\colon I\to\Delta_F(J)$ (where a morphism $(I,J,f) \to (I',J',f')$ is a pair of morphisms $i:I\to I'$ and $j:J\to J'$ such that $\Delta_F(j) \circ f = f' \circ i$),
	\item the category $\Sigma_F/{\sf id}_{{\sf Set}^D}$ of triples $(I,J,f)$, with $I: C\to\Set$, $J:D\to\Set$, and $f\colon \Sigma_F(I)\to J$,
	\item the category ${\sf Set}^{{\sf cog}(F)}$ of functors ${\sf cog}(F)\to\Set$.
\end{enumerate}
\end{proposition}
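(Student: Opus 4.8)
The plan is to prove the stronger statement that all three categories are in fact \emph{isomorphic}, by establishing two isomorphisms, $(1)\cong(2)$ and $(1)\cong(3)$, and composing them. The first of these is a purely formal consequence of the adjunction $\Sigma_F\dashv\Delta_F$ of Example~\ref{leftkan}. For any adjunction $L\dashv R$, the transpose bijection $\Set^D(\Sigma_F I,J)\cong\Set^C(I,\Delta_F J)$, natural in both arguments, sends a triple $(I,J,g)$ with $g\colon\Sigma_F I\to J$ to the triple $(I,J,\bar g)$ with $\bar g\colon I\to\Delta_F J$ its adjoint transpose. Naturality of the transpose in both variables guarantees that this assignment carries a morphism of triples $(i,j)$ satisfying $j\circ g=g'\circ\Sigma_F(i)$ to one satisfying $\Delta_F(j)\circ\bar g=\bar{g'}\circ i$, and the reverse transpose supplies an inverse; hence $\Sigma_F/{\sf id}_{\Set^D}\cong{\sf id}_{\Set^C}/\Delta_F$.

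The substantive part is $(1)\cong(3)$, which I would prove by unpacking a functor out of the cograph presentation via Lemma~\ref{functorpresentation}. By that lemma a functor $G\colon{\sf cog}(F)\to\Set$ is the same as an algebra on the presenting quiver of ${\sf cog}(F)$ that respects its equations. Since the objects of ${\sf cog}(F)$ are the disjoint union of those of $C$ and $D$, restricting $G$ along the inclusions $i_C$ and $i_D$ yields functors $I:=G\circ i_C\colon C\to\Set$ and $J:=G\circ i_D\colon D\to\Set$, and the $C$- and $D$-equations of the presentation hold precisely because $I$ and $J$ are genuine functors. The remaining generators are the morphisms $\alpha_c\colon c\to F(c)$, whose images under $G$ are functions $f_c:=G(\alpha_c)\colon I(c)\to J(F(c))=\Delta_F(J)(c)$. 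The key observation is that applying $G$ to the cograph's defining equation $F(f)\circ\alpha_c=\alpha_{c'}\circ f$ yields $J(F(f))\circ f_c=f_{c'}\circ I(f)$, which is exactly the naturality square for a natural transformation $f\colon I\to\Delta_F(J)$ (using $\Delta_F(J)(f)=J(F(f))$). Thus the object data of $G$ is exactly a triple $(I,J,f)$ as in $(1)$, and the correspondence is a bijection on objects.

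It then remains to match morphisms. A natural transformation $\eta\colon G\to G'$ in $\Set^{{\sf cog}(F)}$ restricts to natural transformations $i\colon I\to I'$ and $j\colon J\to J'$ (its components at $C$- and $D$-objects), and its naturality at each $\alpha_c$ reads $j_{F(c)}\circ f_c=f'_c\circ i_c$, which is precisely the component at $c$ of the equation $\Delta_F(j)\circ f=f'\circ i$ defining morphisms in $(1)$; conversely any such pair $(i,j)$ reassembles into a natural transformation on ${\sf cog}(F)$, since the only naturality squares to verify come from $C$-morphisms (handled by $i$), $D$-morphisms (handled by $j$), and the $\alpha_c$ (handled by the compatibility square). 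I expect the main obstacle to be purely organizational: keeping the three families of generators and equations of ${\sf cog}(F)$ straight and confirming that each $C$- or $D$-equation is automatically satisfied while each $\alpha$-equation corresponds to exactly one naturality square. No deep idea is required once Lemma~\ref{functorpresentation} reduces functors-on-a-presentation to equation-respecting algebras and Example~\ref{leftkan} supplies the adjunction transpose.
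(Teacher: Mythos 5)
Your proposal is correct and follows essentially the same route as the paper: the equivalence $(1)\leftrightarrow(2)$ is exactly the adjunction transpose of $\Sigma_F\dashv\Delta_F$ (natural in $I$ and $J$), and $(1)\leftrightarrow(3)$ is obtained by restricting a functor ${\sf cog}(F)\to{\sf Set}$ along $i_C$ and $i_D$, reading off the components $f_c$ from the images of the $\alpha_c$, and matching the cograph equations $F(g)\circ\alpha_c=\alpha_{c'}\circ g$ with the naturality squares of $f\colon I\to\Delta_F(J)$. If anything, you spell out the morphism-level correspondence (naturality at the $\alpha_c$ giving $\Delta_F(j)\circ f=f'\circ i$) more explicitly than the paper, which dismisses it as a straightforward functoriality check.
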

\begin{proof}
$1\leftrightarrow 2$ is the definition of $\Sigma_F$ being left adjoint to $\Delta_F$, naturally in $I, J$. 
$3 \to 1$. Given a functor $K:{\sf cog}(F)\to {\sf Set}$, we compose it with $i_C: C\to {\sf cog}(F)$ to obtain a functor $I:=K\circ i_C : C\to {\sf Set}$, and similarly we obtain $J:=K\circ i_D: D\to {\sf Set}$. To give a natural transformation $f: I\to \Delta_F(J)$, first choose an object $c$ in $C$. We need a function $I(c)\to J(F(c))$, so we use $K(\alpha_c): I(c)=K(i_C(c))\to K(i_D(F(c))=J(F(c))$. For any morphism $g: c \to c'$, the corresponding equation $F(g)\circ\alpha_c = \alpha_{c'}\circ g$ in ${\sf cog}(F)$ ensures that $f$ is indeed natural. This establishes $3\to 1$ on objects, and it is straightforward to check that it functorial.
$1 \to 3$. As expected, this is just inverse to the above. Given $I, J$, and $f$, we define $K: {\sf cog}(F)\to Set$ on objects via $I$ and $J$ on objects. Every generating morphism in ${\sf cog}(F)$ is either in $C$ or in $D$---in which case use $I$ or $J$---or it is is of the form $\alpha_c: c \to F(c)$, in which case use $f_c: I(c) \to J(F(c)).$ The equations in ${\sf cog}(F)$ are satisfied by the naturality of $f$. This establishes $1\to 3$ on objects, and it is again straightforward to check that it is functorial.
\qed\end{proof}

\subsection{Left Kan Extensions Using Free Models}\label{prf}

To compute the left Kan extension $\Sigma_F(I)$ of $I:C\to {\sf Set}$ along $F:C \to D$ using the previous lemma, we consider $I$ as an instance $\mathcal{I}$ on the signature $\sigma$ of the cartesian theory ${\sf cog}(F)$, compute $\init_{{\sf cog}(F)}(\mathcal{I})$, and then project the $D$ part we need for $\Sigma_F(I)$.  Our main result is:


\begin{lemma} \label{quadruple}
Define a $\sigma$-instance $\mathcal{I}$ by setting $\mathcal{I}c=Ic$ for $c\in C$, setting $\mathcal{I}f=If$ for $f:c\to c'$ in $C$, and setting $\mathcal{I}o=\varnothing$ for all other sorts and relation symbols $o$.

Consider the free model $(\init_{{\sf cog}(F)}(\mathcal{I}),h)$ of ${\sf cog}(F)$ on $\mathcal{I}$.  Discarding $h$, Lemma~\ref{theoryofcat} allows us to consider $\init_{{\sf cog}(F)}(\mathcal{I})$ as a functor ${\sf cog}(F)\to {\sf Set}$.  By $1\leftrightarrow 3$ in Proposition~\ref{prop:cograph_cat}, this functor in turn can be considered as a triple $(I':C\to{\sf Set},J':D\to{\sf Set},f':I'\to\Delta_F(J'))$.

Thus considering $\init_{{\sf cog}(F)}(\mathcal{I})$ as a triple, we have that $\init_{{\sf cog}(F)}(\mathcal{I})\cong(I,\Sigma_F(I),\eta_I)$, where $\eta$ is the unit of the $\Sigma_F \dashv \Delta_F$ adjunction.
\end{lemma}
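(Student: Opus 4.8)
The plan is to identify $\init_{{\sf cog}(F)}(\mathcal{I})$ with the initial object of the comma category $\mathcal{I}/U$, where $U:{\sf Mod}(\ax)\to\sigma{\sf -Inst}$ is the forgetful functor and $\ax$ is the cartesian theory of ${\sf cog}(F)$, then to re-express that comma category in the language of triples via Lemma~\ref{theoryofcat} and Proposition~\ref{prop:cograph_cat}, exhibit $\big((I,\Sigma_F(I),\eta_I),\ {\sf id}_I\big)$ as an explicit initial object there, and finally invoke uniqueness of initial objects up to unique isomorphism. By the definition of free model, $\big(\init_{{\sf cog}(F)}(\mathcal{I}),h\big)$ is an initial object of $\mathcal{I}/U$. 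Composing the isomorphisms ${\sf Mod}(\ax)\cong\Set^{{\sf cog}(F)}\cong {\sf id}_{\Set^C}/\Delta_F$, a model $M$ is the same as a triple $(A,B,\phi)$ with $A:C\to\Set$, $B:D\to\Set$, and $\phi:A\to\Delta_F(B)$; its underlying $\sigma$-instance $U(M)$ carries $A$ on the $C$-sorts, $B$ on the $D$-sorts, and the graphs of $A$, $B$, and the components $\phi_c$ on the relation symbols (the $\alpha_c$ receiving the $\phi_c$).

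The crucial observation is that a morphism $h:\mathcal{I}\to U(M)$ is exactly a natural transformation $u:I\to A$ on the $C$-part. Indeed, $\mathcal{I}$ assigns the empty set to every $D$-sort and the empty relation to every $D$-morphism and to every $\alpha_c$ (whose arity $c,F(c)$ meets the empty set $\mathcal{I}F(c)$). Hence the $D$-components of $h$ are forced to be the unique empty functions, the relation-preservation conditions attached to $D$-morphisms and to the $\alpha_c$ are vacuous, and the only remaining content is that the $C$-components commute with the graphs of $If$ and $Af$, i.e.\ form a natural transformation $u:I\to A$. Under this identification a morphism $(M,h)\to(M',h')$ in $\mathcal{I}/U$ becomes a triple morphism $(i,j):(A,B,\phi)\to(A',B',\phi')$ satisfying $i\circ u=u'$.

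It then remains to verify that $\big((I,\Sigma_F(I),\eta_I),\,{\sf id}_I\big)$ is initial. Given any object $\big((A,B,\phi),\,u:I\to A\big)$, a morphism out of the candidate is a pair $(i,j)$ with $i=u$ forced by the comma condition $i\circ{\sf id}_I=u$, and $\Delta_F(j)\circ\eta_I=\phi\circ i=\phi\circ u$ required by the triple-morphism condition. Since $\phi\circ u:I\to\Delta_F(B)$ and $(\Sigma_F(I),\eta_I)$ is initial in $I/\Delta_F$ by Lemma~\ref{lemma:adjunction} applied to $\Sigma_F\dashv\Delta_F$, there is a unique $j:\Sigma_F(I)\to B$ with $\Delta_F(j)\circ\eta_I=\phi\circ u$. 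Thus the morphism $(u,j)$ exists and is unique, establishing initiality, and the claimed isomorphism $\init_{{\sf cog}(F)}(\mathcal{I})\cong(I,\Sigma_F(I),\eta_I)$ follows.

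I expect the main obstacle to be the middle step: making watertight the reduction of an arbitrary instance morphism $\mathcal{I}\to U(M)$ to a single natural transformation on the $C$-part. This is precisely where the special shape of $\mathcal{I}$ (empty outside $C$) is used, and where one must carefully track how the isomorphisms of Lemma~\ref{theoryofcat} and Proposition~\ref{prop:cograph_cat} transport instance morphisms into triple morphisms; once that bookkeeping is settled, the initiality check reduces to a direct application of the universal property of the adjunction.
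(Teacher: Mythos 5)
Your proof is correct and follows essentially the same route as the paper's: identify $\init_{{\sf cog}(F)}(\mathcal{I})$ with the initial object of $\mathcal{I}/U$, recognize that since $\mathcal{I}$ is empty outside $C$ this comma category is the category of quadruples $(I',J',f',u':I\to I')$, and exhibit $\bigl((I,\Sigma_F(I),\eta_I),{\sf id}_I\bigr)$ as initial by forcing $i=u'$ and invoking the universal property of $\Sigma_F\dashv\Delta_F$. In fact, your middle step spells out the reduction of instance morphisms $\mathcal{I}\to U(M)$ to natural transformations $I\to A$, which the paper compresses into the phrase ``upon inspection.''
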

\begin{proof}
The LHS is the initial object of $\mathcal{I}/{\sf Mod}({\sf cog}(F))$, which upon inspection is seen to be isomorphic to $I/\Delta_{i_C}$, whose objects are quadruples are $(I':C\to{\sf Set},J':D\to{\sf Set},f':I'\to\Delta_F{J'},u':I\to I')$ and whose morphisms are pairs $(i,j)$ such that the following diagram commutes:
\[
\xymatrix @R=.5pc{
  & I' \ar[r]^{f'} \ar[dd]^{i} & \Delta_F{J'} \ar[dd]^{\Delta_F(j)} \\
  I \ar[ru]^{u'} \ar[rd]_{u''} \\
  & I'' \ar[r]^{f''} & \Delta_F{J''}
}
\]

We show that $(I, \Sigma_F(I), \eta_I, {\sf id}_I)$ is the initial object of this category.  Given a quadruple $(I', J', f', u')$, consider the diagram
\[
\xymatrix @R=.5pc{
  & I \ar[r]^-{\eta_I} \ar[dd]^{i} & \Delta_F\Sigma_F(I) \ar[dd]^{\Delta_F(j)} \\
  I \ar@{=}[ru] \ar[rd]_{u'} \\
  & I' \ar[r]^{f'} & \Delta_F{J'}
}
\]

For this diagram to commute, we must have $i=u'$, which reduced the diagram to the square.  By the definition of left Kan extension, there is a unique $j$ making this square commute, so the theorem follows.
\qed \end{proof}

Notice that this lemma does not require $C$, $D$, $F$, or $I$ to be finitely-presented.  However, if they are not, the resulting theory ${\sf cog}(F)$, its signature $\sigma$, and the instance $\mathcal{I}$ will not all be finite, which will impede computation.  The lemma is nonetheless given in full generality, as it is of mathematical interest that all left Kan extensions can be considered as free models of cartesian theories, even ones that are not computable.

With the above lemma in hand, we now turn to using chase algorithms for computing free models of cartesian theories.

\section{Chases on Cartesian Theories}\label{chasecart}

Chases are a class of algorithms used for computing weakly free models of regular theories on input instances.  A run of a chase algorithm gives rise to a sequence of instances, with the sequence itself called a chase sequence.  In this section we discuss the application of these algorithms to cartesian theories.  We discuss the {\it standard} chase and {\it parallel} chase in this section.  In the Appendix we discuss the {\it core} chase, and we define a slight variant called the {\it categorical core chase} which works better with the language of weakly free models than universal models.

\subsection{Summary of New Results}


If $\ax$ is a regular theory, then the parallel chase and the categorical core chase both compute finite weakly free models of $\ax$ on a finite input instance.  Moreover, the categorical core chase is complete (see Lemmas~\ref{chasecomputesweaklyinit},~\ref{corechase}). (The traditional core chase computes finite universal models of an input instance, and it is complete.)

If $\ax$ is a cartesian theory, then the parallel chase and the categorical core chase both compute finite free models on an input instance.  The parallel chase is complete given fairness assumptions, and the core chase is complete (see Lemmas~\ref{chasecomputesinit},~\ref{chasecompleteness},~\ref{corechase}).

\subsection{The Standard and Parallel Chase}

The standard and parallel chase both have a succinct description in terms of the categorical notion of {\it pushout}.  Let us begin with a categorical description of regular logic.  Consider a regular formula $\phi(x_0,\ldots,x_n)$ on the signature $\sigma$.

\begin{definition} The {\it frozen $\phi$-instance} $\Phi$ is the instance with elements $\{x_0,\ldots,x_n\}/\sim$, where $\sim$ is the equivalence relation generated by the equational atoms of $\phi$, with sorts given by $[x_i]\in \Phi s$\footnote{$[x]$ denotes the equivalence class of $x$.} whenever $x_i:s$ in $\phi$, and the smallest relations that make $\Phi r([x_{i_0}],\ldots,[x_{i_k}])$ true for every relational atom $r(x_{i_0},\ldots,x_{i_k})$ in $\phi$.
\end{definition}


Now consider an ED $\xi$ on the signature $\sigma$:
$$
  \forall (x_0:s_0) \cdots  (x_n:s_n) \ldotp
  \phi(x_0 , \ldots , x_n) \Rightarrow
  \exists (x_{n+1}:s_{n+1}) \cdots \ (x_m:s_m)\ldotp \psi(x_0, \ldots, x_m)
$$

Let ${\sf front}$ be the frozen $\phi$-instance and ${\sf back}$ be the frozen $(\phi\wedge\psi)$-instance.  There is then a morphism $h:{\sf front}\to{\sf back}$, sending $[x_i]\mapsto [x_i]$.  This morphism is the categorical interpretation of the ED.

\begin{lemma}\label{satisfiesED}
An instance $I$ on $\sigma$ satisfies $\xi$ iff every morphism $f:{\sf front}\to I$ factors through $h$, i.e. there exists $g$ such that this diagram commutes:
\[
  \xymatrix @R=.5pc {
  {\sf front} \ar[dd]_h \ar[rd]^f \\
  & I \\
  {\sf back} \ar@{-->}[ru]^g
  }
  \]
Moreover, the instance $I$ satisfies the further constraint (\ref{cartED}) iff in this diagram $g$ exists uniquely for all $f$.
\end{lemma}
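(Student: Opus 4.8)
The plan is to reduce both categorical statements to the ordinary Tarskian semantics of the sequent $\xi$ by means of a ``freezing'' correspondence that identifies morphisms out of a frozen instance with satisfying assignments. This is a Yoneda-style fact, but I would prove it directly at the level of elements to match the concrete setup of the preceding definitions.

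First I would establish the key bijection. For any regular formula $\chi(x_0,\ldots,x_k)$ with frozen instance $X$ and any $\sigma$-instance $I$, I claim that morphisms $g\colon X\to I$ are in natural bijection with tuples $(v_0,\ldots,v_k)\in I\chi$, where $g$ corresponds to the tuple given by $v_i := g([x_i])$. This is immediate from the definitions: a morphism of $\sigma$-instances is a sort-indexed element map that preserves relations, and $X$ has underlying elements $\{x_0,\ldots,x_k\}/\sim$, with $\sim$ generated by the equational atoms of $\chi$, carrying exactly the relations forced by the relational atoms of $\chi$. Hence specifying $g$ amounts to choosing sort-correct values $v_i$ with $v_i=v_j$ whenever $x_i\sim x_j$ (respecting the equational atoms) and with $Ir(v_{i_0},\ldots,v_{i_\ell})$ holding for each relational atom $r(x_{i_0},\ldots,x_{i_\ell})$ of $\chi$ (respecting the relational atoms). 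These are precisely the defining conditions of $(v_0,\ldots,v_k)\in I\chi$.

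Next I would specialize this to ${\sf front}$ (frozen $\phi$) and ${\sf back}$ (frozen $\phi\wedge\psi$), yielding bijections between $\{f\colon{\sf front}\to I\}$ and $I\phi$, and between $\{g\colon{\sf back}\to I\}$ and $I(\phi\wedge\psi)$. Since $h$ is the identity-on-names inclusion $[x_i]\mapsto[x_i]$ for $i\le n$, the equation $g\circ h=f$ translates precisely into the statement that the tuple $(v_0,\ldots,v_m)$ corresponding to $g$ restricts along its first $n+1$ coordinates to the tuple $(v_0,\ldots,v_n)$ corresponding to $f$. Thus ``$f$ factors through $h$'' says exactly that the satisfying assignment $(v_0,\ldots,v_n)$ of $\phi$ extends to a satisfying assignment of $\phi\wedge\psi$, equivalently (since $\phi$ already holds) to witnesses $(v_{n+1},\ldots,v_m)$ making $\psi$ hold. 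Quantifying over all $f$ and unfolding the definition of satisfaction of the ED~(\ref{ED}) gives the first biconditional: $I$ satisfies $\xi$ iff every $f$ factors through $h$.

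For the ``moreover'' clause I would note that, under the same bijection, a factorization $g$ of $f$ is determined by and determines the witness tuple $(v_{n+1},\ldots,v_m)$ (the coordinates $\le n$ being pinned to $f$, and the element map determining the relational part of $g$). Therefore $g$ exists uniquely for every $f$ precisely when, for every satisfying assignment of $\phi$, the witnessing tuple for $\psi$ exists and is unique, which is exactly the reading of $\exists!$ in the cartesian sequent~(\ref{cartED}). I expect the only delicate point to be careful bookkeeping when $\psi$ reuses variables of $\phi$ or contains equational atoms, so that the quotient $\sim$ defining ${\sf back}$ is generated by the equational atoms of $\phi\wedge\psi$ taken together and the forced identifications among the existential variables match those among coordinates of a unique witnessing tuple; everything else is a routine unwinding of the definitions.
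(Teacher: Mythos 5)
Your proof is correct, and it is essentially the argument the paper intends: the paper states this lemma without any proof, and your freezing correspondence (morphisms out of a frozen instance correspond exactly to satisfying assignments, with factorizations through $h$ corresponding to witness tuples for $\psi$) is precisely the routine unwinding it leaves implicit. Indeed, immediately after the lemma the paper restates it as saying that $\sigma {\sf -Inst}(h,I)$ is surjective (resp.\ bijective), which is exactly the bijection you construct.
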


This lemma can be stated even more tersely:  $I$ satisfies $\xi$ iff the function $\sigma {\sf -Inst}(h,I)$ is surjective and (\ref{cartED}) iff $\sigma {\sf -Inst}(h,I)$ is bijective.  In categorical terminology, we call $I$ {\it weakly orthogonal} to the morphism $h$ in the former case and {\it orthogonal} to $h$ in the latter case.

\begin{definition}
  Given an instance $I$, a {\it trigger} of $\xi$ in $I$ is a morphism $f:{\sf front}\to I$.  The trigger is {\it inactive} if this morphism factors through $h$, and {\it active} if it does not.
\end{definition}

\begin{definition}
  Let $f:{\sf front}\to I$ be a trigger of $\xi$ in $I$.  Then the {\it chase step} $I\To{\mathcal{C}(\xi,f)}I'$ arises from the following pushout:
  \[
  \xymatrix {
  {\sf front} \ar[r]^f \ar[d]_h & I \ar[d]^{\mathcal{C}(\xi,f)} \\
  {\sf back} \ar[r] & I' \pullbackcorner[ul]
  }
  \]
  Now, even if the trigger $f$ was active, the trigger $\mathcal{C}(\xi,f)\circ f$ is clearly inactive, which was the ``reason'' for the chase step.
  
  Given a regular theory $\ax$ comprised of egds and tgds, a {\it standard 
  $\ax$-chase sequence} is a finite or infinite chain $I_0\to I_1\to I_2\to\cdots$ of chase steps corresponding to triggers of EDs in $\ax$.  A finite chase sequence $I_0\to\cdots\to I_n$ is {\it terminating} there is an $n$ such that there are no active triggers in $I_n$ of any ED in $\ax$.  In this case, $I_n$ is a model of $\ax$, and we call it the {\it result} of the chase.
  
  We abbreviate the composition of the path $I_k\To{\mathcal{C}(\xi_k,f_k)}\cdots\To{\mathcal{C}(\xi_{n-1},f_{n-1})}I_n$ as $I_k\To{\mathcal{C}(k,n)}I_n$.
\end{definition}

Explicitly, if $\xi$ is a tgd, then $I'$ can be constructed from $I$ by first initializing $I'$ to $I$, then adding elements ${\sf foo}_{n+1}\in I's_{n+1},\ldots,{\sf foo}_m\in I's_m$ and finally minimally extending the relations so that $I'\psi(f(x_0),\ldots,f(x_n),{\sf foo}_{n+1},\ldots,{\sf foo}_m)$ is true.  In this case, the $s$-components of $\mathcal{C}(\xi,f)$ are inclusions $Is\hookrightarrow I's$.

If $\xi$ is an egd, then $I'$ can be constructed explicitly from $I$:
\begin{itemize}
    \item For each sort $s\in S$, let $I's=Is/\sim_s$, where $\sim_s$ is the equivalence relation generated by $\{(f(x_i),f(x_j))\mid x_i=x_j\textrm{ occurs in $\psi$ with }0\leq i,j\leq n\textrm{ and }x_i,x_j:s\}$.
    \item For each relation $r\in R$ of arity $s_0,\ldots,s_n$, let $I'r$ be the image of $Ir$ under the canonical map $Is_0\times\cdots\times Ix_n\to I's_0\times\cdots\times I'x_n$.
\end{itemize}
Then the $s$-component of $\mathcal{C}(\xi,f)$ is the canonical map $Is\twoheadrightarrow I's$.

\begin{definition}
  Consider a set $\mathcal{F}$ of triggers $f:{\sf front}_f\to I$ of $\xi_f\in\ax$ in $I$, where $\xi_f$ is represented as $h_f:{\sf front}_f\to {\sf back}_f$.  Then the {\it parallel chase step} $I\To{\mathcal{C}(\mathcal{F})}I'$ arises from the following pushout, where $\bigoplus_{f\in\mathcal{F}} f$ denotes copairing:
  
  \[
    \xymatrix{
    \coprod_{f\in\mathcal{F}} {\sf front}_f \ar[r]^-{ \bigoplus_{f\in\mathcal{F}} f} \ar[d]_{\coprod_{f\in\mathcal{F}} h_f} & I \ar[d]^{\mathcal{C}(\mathcal{F})}    \\
    \coprod_{f\in\mathcal{F}} {\sf back}_f \ar[r]  & I' \pullbackcorner[ul]           
    }
  \]

  In this step, we have effectively done all of the chase steps $\mathcal{C}(\xi_f,f)$ simultaneously.  Indeed, as can be easily seen, the trigger $\mathcal{C}(\mathcal{F})\circ f$ in $I'$ is inactive for each $f\in\mathcal{F}$.
  
  In computation, we always work with finite $\mathcal{F}$, but the general case is useful for theoretical purposes.
  
  Given a regular theory $\ax$ comprised of egds and tgds, a {\it parallel
  $\ax$-chase sequence} is a finite or infinite chain $I_0\to I_1\to I_2\to\cdots$ of parallel chase steps corresponding to sets of triggers of EDs in $\ax$.  A finite parallel chase sequence $I_0\to\cdots\to I_n$ is {\it terminating} if there are no active triggers in $I_n$ of any ED in $\ax$.  In this case, $I_n$ is a model of $\ax$, and we call it the {\it result} of the chase.
  
  We again abbreviate the composition of the path $I_k\To{\mathcal{C}(\mathcal{F}_k)}\cdots\To{\mathcal{C}(\mathcal{F}_{n-1})}I_n$ as $I_k\To{\mathcal{C}(k,n)}I_n$.
\end{definition}

It is evident that the standard chase is a special case of the parallel chase, where all sets $\mathcal{F}$ have cardinality $1$.  Using this fact it is easily shown that the following results about the parallel chase apply \textit{a fortiori} to the standard chase.

It is known~\cite{Deutsch:2008:CR:1376916.1376938} that the parallel chase computes universal models.  We now refine this result slightly, showing that it computes weakly free models.

\begin{lemma}\label{chasecomputesweaklyinit}
Let $\ax$ be a regular theory (not necessarily cartesian) comprised of tgds and egds.
A terminating parallel $\ax$-chase sequence $I_0\to I_1\to \cdots\to I_n$ computes a weakly free model of $\ax$ on $I_0$.
\end{lemma}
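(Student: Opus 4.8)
The plan is to take the pair $(I_n, h)$, where $h\colon I_0\to I_n$ is the composite chase morphism $\mathcal{C}(0,n)$, as the candidate weakly free model. Since the sequence terminates, $I_n$ is a model of $\ax$ by the definition of a terminating parallel chase sequence, so the only thing left to establish is the weak universal property: given any model $A'$ of $\ax$ and any morphism $h'\colon I_0\to A'$, I must produce \emph{some} morphism $g\colon I_n\to A'$ with $g\circ h = h'$ (uniqueness is not required for weak freeness).

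I would construct $g$ by induction along the sequence, building morphisms $g_k\colon I_k\to A'$ satisfying $g_0 = h'$ and $g_{k+1}\circ\mathcal{C}(\mathcal{F}_k) = g_k$. Setting $g := g_n$ then telescopes to $g_n\circ\mathcal{C}(0,n) = g_0 = h'$, which is exactly what is needed. The base case is simply $g_0 := h'$.

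The inductive step is where the two ingredients meet. Suppose $g_k$ has been built. The parallel chase step $\mathcal{C}(\mathcal{F}_k)$ is by definition the pushout of $\coprod_{f} h_f$ along $\bigoplus_{f} f$, so by the universal property of the pushout, to define $g_{k+1}$ it suffices to exhibit a cocone: the map $g_k\colon I_k\to A'$ together with a map $b\colon \coprod_{f} {\sf back}_f\to A'$ agreeing with $g_k$ on $\coprod_{f} {\sf front}_f$. For each trigger $f\colon {\sf front}_f\to I_k$ in $\mathcal{F}_k$, the composite $g_k\circ f\colon {\sf front}_f\to A'$ is a trigger of $\xi_f$ in $A'$; because $A'$ is a model of $\ax$, Lemma~\ref{satisfiesED} furnishes at least one filler $b_f\colon {\sf back}_f\to A'$ with $b_f\circ h_f = g_k\circ f$. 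Copairing these yields $b := \bigoplus_{f} b_f$, and by construction $b\circ(\coprod_{f} h_f) = g_k\circ(\bigoplus_{f} f)$, so the cocone condition holds and the pushout produces $g_{k+1}$ with $g_{k+1}\circ\mathcal{C}(\mathcal{F}_k) = g_k$.

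The main subtlety, rather than a genuine obstacle, is that the fillers $b_f$ supplied by Lemma~\ref{satisfiesED} need not be unique: an arbitrary model $A'$ is only weakly orthogonal to each ED, not orthogonal. Hence $g$ itself is not forced, which is precisely why the chase result is \emph{weakly} free rather than free. Promoting this argument to genuine freeness would require each $b_f$ to be the unique filler, which is exactly the cartesian case and foreshadows Lemma~\ref{chasecomputesinit}. I would also note that finiteness of the sequence is what allows the induction to close; no other hypotheses are used.
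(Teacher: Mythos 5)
Your proposal is correct and follows essentially the same route as the paper's proof: both take $(I_n,\mathcal{C}(0,n))$ as the candidate, use the model property of $A'$ (via Lemma~\ref{satisfiesED}) to obtain a filler for each trigger in $\mathcal{F}_k$, copair these via the universal property of coproducts, and then invoke the universal property of the pushout defining the parallel chase step to propagate the morphism inductively along the sequence. Your closing remark on the non-uniqueness of the fillers being exactly what separates weak freeness from freeness also matches the paper's treatment of the cartesian case in Proposition~\ref{chasecomputesinit}.
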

\begin{proof}
Let $A$ be a model of $\ax$ and $a_0:I_0\to A$ be a morphism.  Given a chase step $I_0\To{\mathcal{C}(\mathcal{F}_0)}I_1$, let $f\in\mathcal{F}_0$ and consider the trigger $a_0\circ f$ of $\xi$ in $A$.  Since $A$ is a model, there is a morphism $b_f:{\sf back}_f\to A$ such that $b_f\circ h_f = a_0 \circ f$.  By the universal property of coproducts, the morphisms $b_f$ induce a morphism $b:\coprod_{f\in\mathcal{F}_0} {\sf back}_f \to A$ such that $b\circ \coprod_{f\in\mathcal{F}_0} h_f = a_0\circ [f|f\in\mathcal{F}_0]$.  Then by the universal property of pushouts, there is a unique morphism $a_1:I_1\to A$ such that the $a_1\circ \mathcal{C}(\xi_0,f_0)=a_0$ and $a_1\circ g = b$.  All this is shown in the following diagram:
\[
  \xymatrix @R=1.3pc @C=2.5pc {
  {\sf front}_f \ar[r] \ar[d]_{h_f} \ar@/^1pc/[rr]^f &
  \coprod_{f\in\mathcal{F}_0} {\sf front}_f \ar[d]_{\coprod_{f\in\mathcal{F}_0} h_f} \ar[r]_-{\bigoplus_{f\in\mathcal{F}_0} f} & I_0  \ar[d]^{\mathcal{C}(\mathcal{F}_0)} \ar@/^1pc/[rdd]^{a_0} \\
  {\sf back}_f \ar[r] \ar@{-->}@/_1pc/[rrrd]_{b_f} &
  \coprod_{f\in\mathcal{F}_0} {\sf back}_f \ar[r]^-g \ar@{-->}[rrd]^{b} & I_1 \ar@{-->}[rd]^{a_1} \pullbackcorner[ul] \\
  &&& A
  }
\]
Iterating this construction for each step of the chase sequence, we obtain a morphism $a_n:I_n\to A$ with $a_n\circ\mathcal{C}(0,n)=a_0$.  Since $A$ and $a_0$ were arbitrary, $(I_n,\mathcal{C}(0,n))$ is a weakly free model of $\ax$ on $I_0$.
\qed\end{proof}



If $\ax$ is cartesian, then the parallel chase computes free models.

\begin{proposition} \label{chasecomputesinit}
Let $\ax$ be a cartesian theory, rewritten as in Lemma \ref{cartisreg} and \ref{EDisegdsandtgds} to be comprised of egds and tgds. Then a terminating parallel $\ax$-chase sequence $I_0\to I_1\to \cdots\to I_n$ computes a free model of $\ax$ on $I_0$.
\end{proposition}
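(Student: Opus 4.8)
The plan is to build directly on \cref{chasecomputesweaklyinit}, which already shows that a terminating parallel $\ax$-chase sequence $I_0\to\cdots\to I_n$ yields a \emph{weakly} free model $(I_n,\mathcal{C}(0,n))$ of $\ax$ on $I_0$. Thus the existence of a comparison morphism is settled, and the only thing left to prove is its uniqueness: for every model $A$ of $\ax$ and every $a_0\colon I_0\to A$, I must show there is at most one $a_n\colon I_n\to A$ with $a_n\circ\mathcal{C}(0,n)=a_0$. I would isolate the single place in the proof of \cref{chasecomputesweaklyinit} where a choice was made, namely the choice of fillers $b_f\colon {\sf back}_f\to A$ witnessing that $A$ is a model, and argue that when $\ax$ is cartesian these fillers are forced.

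The key step is the observation that a model $A$ of a cartesian theory is \emph{orthogonal}, not merely weakly orthogonal, to each interpreting morphism $h_f$. By \cref{satisfiesED}, orthogonality is exactly the statement that for every trigger $a_0\circ f\colon {\sf front}_f\to A$ there is a \emph{unique} $b_f$ with $b_f\circ h_f=a_0\circ f$. For the egds produced by \cref{cartisreg} and \cref{EDisegdsandtgds} this is automatic, since $h_f$ is then a quotient map, hence epic, and any filler of an epic morphism is unique; the content lies with the tgds coming from the existence parts $\exists\vec y.\,\psi$, whose fillers are pinned down only because $A$ also satisfies the companion uniqueness egds generated in \cref{cartisreg}. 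I expect this to be the main obstacle: a bare tgd admits many fillers, so cartesianness must be invoked precisely here, through \cref{satisfiesED}, to collapse them to one.

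Granting unique fillers, I would prove single-step uniqueness. Fix $a_k\colon I_k\to A$ and suppose $a_{k+1},a_{k+1}'\colon I_{k+1}\to A$ both satisfy $a_{k+1}\circ\mathcal{C}(\mathcal{F}_k)=a_k=a_{k+1}'\circ\mathcal{C}(\mathcal{F}_k)$. Writing $g\colon\coprod_{f}{\sf back}_f\to I_{k+1}$ for the lower pushout leg, commutativity of the pushout square gives $g\circ\coprod_f h_f=\mathcal{C}(\mathcal{F}_k)\circ\bigoplus_f f$, so each composite $a_{k+1}\circ g$ and $a_{k+1}'\circ g$ restricts along $\coprod_f h_f$ to $a_k\circ\bigl(\bigoplus_f f\bigr)$. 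Hence their $f$-components are fillers of the triggers $a_k\circ f$, and by the previous paragraph equal to the unique fillers $b_f$; therefore $a_{k+1}\circ g=a_{k+1}'\circ g$. Since $a_{k+1}$ and $a_{k+1}'$ now agree after composing with both legs $\mathcal{C}(\mathcal{F}_k)$ and $g$ of the pushout, the uniqueness clause of the pushout universal property forces $a_{k+1}=a_{k+1}'$.

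Finally I would run the induction. Given $a_n,a_n'\colon I_n\to A$ with $a_n\circ\mathcal{C}(0,n)=a_0=a_n'\circ\mathcal{C}(0,n)$, set $a_k:=a_n\circ\mathcal{C}(k,n)$ and $a_k':=a_n'\circ\mathcal{C}(k,n)$; these agree at $k=0$, and since $a_{k+1}\circ\mathcal{C}(\mathcal{F}_k)=a_k$ and likewise for the primed maps, each $a_{k+1}$ and $a_{k+1}'$ extends the common $a_k$ across the chase step $\mathcal{C}(\mathcal{F}_k)$. Single-step uniqueness then propagates $a_k=a_k'$ up to $k=n$, giving $a_n=a_n'$. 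Combined with the existence already supplied by \cref{chasecomputesweaklyinit}, this shows $(I_n,\mathcal{C}(0,n))$ is initial in $I_0/U$, i.e.\ a free model of $\ax$ on $I_0$.
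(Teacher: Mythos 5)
Your proposal is correct and takes essentially the same route as the paper's own proof: existence is delegated to Lemma~\ref{chasecomputesweaklyinit}, and uniqueness is established by induction along the chase sequence, agreeing on the $\mathcal{C}(\mathcal{F}_k)$ leg by the inductive hypothesis and on the $g$ leg by reducing (via the coproduct universal property) to the unique-filler orthogonality of Lemma~\ref{satisfiesED} for models of cartesian theories, then concluding with the pushout universal property. Your side remark that egd fillers are automatically unique (since $h_f$ is epic) while tgd fillers are pinned down by the companion uniqueness egds is a helpful clarification that the paper leaves implicit, but it does not change the argument.
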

\begin{proof}

By Lemma~\ref{chasecomputesweaklyinit}, $(I_n,\mathcal{C}(0,n))$ is a weakly free model of $\ax$ on $I_0$.  For any model $B$ of $\ax$ and morphism $b:I_0\to B$, suppose that we have morphisms $p_n,q_n:I_n\to B$ with $p_n\circ\mathcal{C}(0,n)=b$ and $q_n\circ\mathcal{C}(0,n)=b$.  Let $p_k\coloneqq p_n\circ\mathcal{C}(k,n)$ and $q_k\coloneqq q_n\circ\mathcal{C}(k,n)$ for $k=0,\ldots,n$.  We prove by induction on $k$ that $p_k=q_k$ for all $k$, from which the conclusion follows.  First, $p_0=b=q_0$.  Then assume that $p_n=q_n$.  We have the following situation:

\[
\xymatrix{
\coprod_{f\in\mathcal{F}_0} {\sf front}_f \ar[d]_{\coprod_{f\in\mathcal{F}_0} h_f} \ar[r]^-{\bigoplus_{f\in\mathcal{F}_0} f} & I_n \ar@/^1pc/[dr]^{p_n=q_n} \ar[d]^{\mathcal{C}(\mathcal{F}_n)} \\
\coprod_{f\in\mathcal{F}_0} {\sf back}_f \ar[r]_-g & I_{n+1} \pullbackcorner[ul] \ar@<-.5ex>[r]_{q_{n+1}} \ar@<.5ex>[r]^{p_{n+1}} & B
}
\]

We have that $p_{n+1}\circ \mathcal{C}(\mathcal{F}_n)=p_n=q_n=q_{n+1}\circ\mathcal{C}(\mathcal{F}_n)$, so by the universal property of pushouts, it suffices to show that $p_{n+1}\circ g = q_{n+1}\circ g$.  By the universal property of coproducts, for this it suffices to show that $p_{n+1}\circ g_f=q_{n+1}\circ g_f$ for each $f\in\mathcal{F}_n$, where $g_f$ is the restriction of $g$ to ${\sf back}_f$.  We restrict the coproducts to $f$-summands for an arbitrary $f\in\mathcal{F}_n$:

\[
\xymatrix{
{\sf front}_f \ar[d]_{h_f} \ar[r]^{f} & I_n \ar@/^1pc/[dr]^{p_n=q_n} \ar[d]^{\mathcal{C}(\mathcal{F}_n)} \\
{\sf back}_f \ar[r]_{g_f} & I_{n+1} \pullbackcorner[ul] \ar@<-.5ex>[r]_{q_{n+1}} \ar@<.5ex>[r]^{p_{n+1}} & B
}
\]

We can see from this diagram that $p_n\circ f_=p_{n+1}\circ g\circ h$ and $p_n\circ f=q_{n+1}\circ g\circ h$, so since $\ax$ is cartesian, Lemma~\ref{satisfiesED} gives that $p_{n+1}\circ g = q_{n+1}\circ g$.

We also have that $p_{n+1}\circ \mathcal{C}(\xi_n,f_n) = q_{n+1}\circ \mathcal{C}(\xi_n,f_n)$, so by the universal property of pushouts, $p_{n+1}=q_{n+1}$.
\qed\end{proof}

Now we prove an infinitary form of the previous results, which we will use in our proofs of completeness (Lemmas~\ref{chasecompleteness},~\ref{corechase}).

\begin{lemma} \label{filteredfinite}
Let $I_0\To{f_0}I_1\To{f_1}\cdots$ be a sequence of morphisms of instances on a signature $\sigma$.  Let $I$ be the colimit of this sequence, with legs $l_i:I_i\to I$.  Let $a:A\to I$ be a morphism from a finite instance $A$.  Then there exists an $n$ and a morphism $a':A\to I_n$ such that $l_n\circ a' = a$.  Moreover, if we also are given a morphism $g:I_0\to A$ such that $a\circ g=l_0$, we can ensure that $a'\circ g = f_{n-1}\circ\cdots\circ f_0$ (the optionality of $g$ is denoted in the following diagram by a dotted line).
\[
\xymatrix @R=0.5pc {
& I_0 \ar[dd]^-{f_0} \ar@{.>}@/_2pc/[lddddddd]_g \ar@/^3pc/[ddddddd]^{l_0} \\
\\
& \myvdots \ar[dd]^-{f_{n-1}} \\
\\
& I_n \ar[dd]_-{f_n} \ar@/^1pc/[ddd]^{l_n} \\
\\
& \myvdots \\
A \ar[r]^a \ar@{-->}@/^1pc/[ruuu]^{a'} & I
}
\]
\end{lemma}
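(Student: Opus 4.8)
The engine of the proof is that a \emph{finite} instance is finitely presentable in $\sigma{\sf -Inst}$, together with the fact that the chain $I_0\to I_1\to\cdots$ is a filtered (indeed $\omega$-indexed) diagram, so its colimit admits an elementary pointwise description. First I would record how $I$ is built. Using the embedding $\sigma{\sf -Inst}\hookrightarrow{\sf Set}^{C}$ of Lemma~\ref{instancesascopresheafs}, the colimit is computed objectwise: for each sort $s$ we have $Is=\colim_i I_is$, an ordinary filtered colimit of sets, and for each relation symbol $r$ the relation $Ir\subseteq Is_0\times\cdots\times Is_k$ is the colimit of the $I_ir$. Because filtered colimits in ${\sf Set}$ commute with finite products and preserve monomorphisms, the relation-as-subset condition survives in the limit, so this construction really is a $\sigma$-instance and really is the colimit. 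Concretely this yields the two facts I use throughout: (i) every element of $Is$ is $l_i(x)$ for some $x\in I_is$, and every tuple of $Ir$ is the $l_m$-image of a tuple of some $I_mr$; and (ii) if $x\in I_is$ and $x'\in I_js$ satisfy $l_i(x)=l_j(x')$, then $f_{i,k}(x)=f_{j,k}(x')$ for some $k\ge i,j$.

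For the first assertion I would spend the finiteness of $A$ directly. Since $A$ has only finitely many elements and satisfies only finitely many relational facts, fact (i) lets me lift each of the finitely many values $a(v)$ to a representative at some stage, and each of the finitely many facts $Ar(a(v_0),\dots,a(v_k))$ to a witnessing tuple at some stage; fact (ii) then lets me advance far enough that each witnessing tuple's components coincide with the chosen element-representatives (and such a tuple, living in some $I_mr$, is carried into $I_nr$ because the $f_i$ preserve relations). As there are only finitely many conditions, a single index $n$ accommodates all of them, and the resulting assignment $a'\colon A\to I_n$ is a genuine morphism of instances with $l_n\circ a'=a$. Conceptually this is just the statement that ${\sigma{\sf -Inst}}(A,-)$ preserves the filtered colimit $I=\colim_i I_i$ on the object $A$.

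For the ``moreover'' clause I would start from a factorization $a'_0\colon A\to I_{n_0}$ as above and then correct it along the chain. Consider the two parallel morphisms $a'_0\circ g$ and $f_{0,n_0}\coloneqq f_{n_0-1}\circ\cdots\circ f_0$ from $I_0$ to $I_{n_0}$. Post-composing each with the leg $l_{n_0}$ gives $l_{n_0}\circ(a'_0\circ g)=a\circ g=l_0=l_{n_0}\circ f_{0,n_0}$, so the two maps agree once pushed into the colimit $I$. Fact (ii), applied to the relevant elements, then lets me advance to a stage $n\ge n_0$ at which the two maps literally coincide, i.e.\ $f_{n_0,n}\circ a'_0\circ g=f_{n_0,n}\circ f_{0,n_0}=f_{0,n}$. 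Setting $a'\coloneqq f_{n_0,n}\circ a'_0$ gives $l_n\circ a'=l_{n_0}\circ a'_0=a$ and $a'\circ g=f_{0,n}$, as required.

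The main obstacle is precisely this last equalization. Fact (ii) equalizes two lifts of a \emph{single} element at \emph{some} stage, whereas what is needed is one stage $n$ that simultaneously equalizes the entire parallel pair $a'_0\circ g,\,f_{0,n_0}\colon I_0\to I_{n_0}$. This uniform choice is exactly where finiteness must be spent: in the first assertion it is the finiteness of $A$ that bounds the number of conditions, and in the ``moreover'' clause one needs the comparison of $a'_0\circ g$ and $f_{0,n_0}$ to reduce to finitely many element-wise coincidences (as it does, for instance, when $I_0$ is finite, which is the situation in the intended completeness applications). I would therefore make this finiteness bookkeeping explicit, since it is the only non-formal point in what is otherwise a routine ``compactness of filtered colimits'' argument.
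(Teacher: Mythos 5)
Your treatment of the first assertion is essentially the paper's own argument: both pass through the pointwise description of the sequential colimit (via Lemma~\ref{instancesascopresheafs}), lift the finitely many elements and relational facts of $A$ to some stage, and use finiteness of $A$ to take a single index $n$ accommodating all the compatibility conditions; your explicit justification that the colimit of instances is again an instance (filtered colimits commuting with finite products and preserving monos) is a point the paper glosses over. For the ``moreover'' clause, however, you take a genuinely different route: you first produce an arbitrary factorization $a'_0$ and then try to equalize the parallel pair $a'_0\circ g,\ f_{n_0-1}\circ\cdots\circ f_0\colon I_0\to I_{n_0}$ at a later stage, whereas the paper builds $g$-compatibility into the construction from the start, by choosing the representative $r(u)\coloneqq v$ with $n_u=0$ whenever $u=g_c(v)$.

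The obstacle you flag is not a defect of your route relative to the paper's --- it is a genuine gap, and in fact the ``moreover'' clause as stated is \emph{false} without a finiteness hypothesis on $I_0$ (or on the fibers of $g$). Counterexample: one sort, no relations; $I_n\coloneqq\{\ast\}\sqcup\{k\in\mathbb{N}\mid k>n\}$ with $f_n$ sending $\ast\mapsto\ast$, $n+1\mapsto\ast$, and fixing $k>n+1$. The colimit is $I=\{\ast\}$; take $A=\{\ast\}$, $a={\sf id}_I$, and $g\colon I_0\to A$ the unique map, so $a\circ g=l_0$. For every $n$ the map $a'\circ g$ is constant while $f_{n-1}\circ\cdots\circ f_0$ is surjective onto the infinite set $I_n$, so no $n$ can work. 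The paper's own proof founders on exactly the point you identified: the parenthetical ``choose $r(u)=v$'' selects \emph{one} preimage per fiber of $g_c$, and the concluding ``by construction'' silently assumes $r(g_c(v))=v$ for \emph{every} $v$; for the other members $v'$ of a fiber one only knows $l_0(v')=l_0(v)$, hence equalization of $v$ and $v'$ at \emph{some} finite stage, and when fibers are infinite there need be no uniform such stage --- precisely the failure realized above. So your proposal to make the finiteness bookkeeping explicit (e.g.\ assume $I_0$ finite, under which both your post-hoc equalization and a repaired version of the paper's argument go through) is not pedantry but a necessary correction to the lemma. It is harmless for the algorithmic applications, where inputs are finite, but note that Proposition~\ref{chasecompleteness} invokes exactly this clause with $g=l_0$ and does not currently assume $I_0$ finite, so the added hypothesis would have to be propagated there as well.
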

\begin{proof}

If the sequence is finite, terminating at $I_n$, then $I=I_n$ and the lemma follows trivially.  We proceed under the assumption that the sequence is infinite.

As in Lemma \ref{instancesascopresheafs}, we consider the instances in this sequence as functors $C\to{\sf Set}$.


The fact that colimits in functor categories are computed pointwise~\cite[Prop. 3.3.9]{riehl} gives in this case that $Ic=\colim_nI_nc$.  Recall that in ${\sf Set}$, colimits are quotients of coproducts, so in this situation $Ic=\bigsqcup_n I_nc / \sim_c$, where $\sim_c$ is an equivalence relation generated by $v\sim (f_n)_c(v)$ for $v\in I_n$.  Explicitly, $v\in I_nc$ and $v'\in I_{n'}c$ are equivalent iff there is $m\geq n,n'$ with $(f_{m-1}\circ\cdots\circ f_n)_c(v)=(f_{m-1}\circ\cdots\circ f_n')_c(v')$.  In this case, we write $v\overset{m}{\sim} v'$.

Thus for any $u\in Ac$, let $r(u)$ be a representative of the equivalence class $a(u)$ in the instance $I_{n_u}c$.  (If we were given $g$ and $u=g_c(v)$ for some $v\in I_0c$, then choose $r(u)=v$, so $n_u=0$.)  We have $(l_{n_u})_c(r(u))=a(u)$.  For any morphism $\alpha:c\to c'$ in $C$ and $u\in Ac$, we have $I_{n_u}\alpha(r(u))\overset{m_{u,\alpha}}{\sim}r(A\alpha(u))$.  Let $n$ be the maximum of $n_u$ and $m_{u,\alpha}$ over all $\alpha:c\to c'$ in $C$ and $u\in Ac$ (here we use finiteness of $A$).  Let $a'_c(u)=(f_{n-1}\circ\cdots\circ f_{n_u})_c(r(u))$.  Then $a'$ is a natural transformation and $l_n\circ a' = a$.  (If we were given $g$, then for $v\in I_0c$, $a'_c(g_c(v))=(f_{n-1}\circ\cdots\circ f_{n_{g_c(v)}})_c(v)=(f_{n-1}\circ\cdots\circ f_0)_c(v)$ by construction, so $a'\circ g = f_{n-1}\circ\cdots\circ f_0$.)

\qed \end{proof}

Experienced category theorists will recognize the above argument as a special case of the fact that filtered colimits commute with finite limits in Set~\cite[Theorem 3.8.9]{riehl}.  Indeed, by the co-Yoneda lemma~\cite{reyes_reyes_zolfaghari_2004}, the Yoneda lemma, and completeness of representables, the operation ${\sf Set}^C(A,-)$ is revealed as a finite limit:

\[
{\sf Set}^C(A,I)\cong {\sf Set}^C\left(\lim_{(c,x)\in\int A} C(c,-),I\right)
\cong \lim_{(c,x)\in\int A}{\sf Set}^C(C(c,-),I) \cong \lim_{(c,x)\in\int A} Ic
\]

\begin{definition} A parallel $\ax$-chase sequence $I_0\To{\mathcal{C}(\mathcal{F}_0)}I_1\To{\mathcal{C}(\mathcal{F}_1)}\cdots$ is {\it fair} if, for every $n$, every ED $\xi\in\ax$, and every active trigger $f:{\sf front}\to I_n$ of $\xi$, there is an $m\geq n$ such that the trigger $\mathcal{C}(n,m)\circ f$ is not active.
\end{definition}

\begin{lemma}\label{coollemma}
Let $\ax$ be a (possibly infinite) regular theory on a signature $\sigma$, rewritten as a set of egds and tgds as in Lemma \ref{EDisegdsandtgds}.  Let $I_0$ be a (possibly infinite) instance on $\sigma$.  Let $I_0\To{\mathcal{C}(\mathcal{F}_0)}I_1\To{\mathcal{C}(\mathcal{F}_1)}\cdots$ be a (possibly infinite) parallel $\ax$-chase sequence.  Let $I$ be the colimit of this sequence, with legs $l_i:I_i\to I$.  If this sequence is fair, then $(I,l_0)$ is a weakly free model of $\ax$ on $I_0$.  If additionally we have that $\ax$ is cartesian, then $(I,l_0)$ is a free model of $\ax$ on $I_0$.
\end{lemma}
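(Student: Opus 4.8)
The plan is to verify three things in turn: that the colimit $I$ is a model of $\ax$ (this is the only place fairness is needed), that $(I,l_0)$ satisfies the weak universal property, and finally, in the cartesian case, that the comparison morphism is unique. The weak-free claim needs the first two, and the free claim adds the third.

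First I would show that $I$ is a model of $\ax$. Fix an ED $\xi\in\ax$ with categorical interpretation $h:{\sf front}\to{\sf back}$, and let $f:{\sf front}\to I$ be an arbitrary trigger. Since ${\sf front}$ is a finite instance, Lemma~\ref{filteredfinite} supplies an index $n$ and a trigger $f':{\sf front}\to I_n$ with $l_n\circ f'=f$. Applying fairness to $f'$ yields an $m\ge n$ for which $\mathcal{C}(n,m)\circ f'$ is inactive, i.e.\ factors through $h$ via some $\bar g:{\sf back}\to I_m$ with $\bar g\circ h=\mathcal{C}(n,m)\circ f'$. Post-composing with the colimit leg $l_m$ and using the cocone identity $l_m\circ\mathcal{C}(n,m)=l_n$ gives $(l_m\circ\bar g)\circ h=l_n\circ f'=f$, so $f$ factors through $h$. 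As $f$ was arbitrary, Lemma~\ref{satisfiesED} shows $I$ is weakly orthogonal to $h$, hence satisfies $\xi$; as $\xi$ was arbitrary, $I$ is a model of $\ax$.

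Next I would establish the weak universal property by iterating the construction in the proof of Lemma~\ref{chasecomputesweaklyinit} through the whole (possibly infinite) sequence. Given any model $A$ of $\ax$ and morphism $a_0:I_0\to A$, each parallel chase step $I_n\To{\mathcal{C}(\mathcal{F}_n)}I_{n+1}$ is a pushout, and since $A$ is a model every trigger $a_n\circ f$ is inactive; the pushout universal property then yields $a_{n+1}:I_{n+1}\to A$ with $a_{n+1}\circ\mathcal{C}(\mathcal{F}_n)=a_n$. These $a_n$ form a cocone under the sequence, so the universal property of the colimit $I$ produces a morphism $g:I\to A$ with $g\circ l_n=a_n$ for all $n$; in particular $g\circ l_0=a_0$. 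Thus $(I,l_0)$ is a weakly free model of $\ax$ on $I_0$.

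Finally, assuming $\ax$ cartesian, I would prove uniqueness by adapting the argument of Proposition~\ref{chasecomputesinit}. Suppose $p,q:I\to A$ satisfy $p\circ l_0=a_0=q\circ l_0$, and set $p_n:=p\circ l_n$ and $q_n:=q\circ l_n$, which satisfy $p_{n+1}\circ\mathcal{C}(\mathcal{F}_n)=p_n$ and likewise for $q$. I would show $p_n=q_n$ by induction: the base case is $p_0=a_0=q_0$, and for the step, given $p_n=q_n$, it suffices by the pushout property to check agreement on each summand ${\sf back}_f$. There, $p_{n+1}\circ g_f$ and $q_{n+1}\circ g_f$ are both factorizations of the trigger $p_n\circ f=q_n\circ f$ through $h_f$, so the cartesian (orthogonality) clause of Lemma~\ref{satisfiesED} forces them equal. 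Hence $p_{n+1}=q_{n+1}$. Since the colimit legs $l_n$ are jointly epic, $p\circ l_n=q\circ l_n$ for all $n$ gives $p=q$, so $(I,l_0)$ is a free model.

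I expect the main obstacle to be the first paragraph: correctly combining Lemma~\ref{filteredfinite} (to pull a trigger on the infinite colimit back to a finite stage) with fairness (to inactivate it finitely far along) and then transporting the resulting factorization back up to $I$ via the cocone identities. The two universal-property paragraphs are essentially the finite-sequence proofs of Lemma~\ref{chasecomputesweaklyinit} and Proposition~\ref{chasecomputesinit} rephrased through the colimit's universal property, so they should be routine once the modelhood of $I$ is secured.
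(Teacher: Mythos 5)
Your proposal is correct and follows essentially the same route as the paper's proof: Lemma~\ref{filteredfinite} plus fairness and the cocone identities to establish modelhood of $I$, iteration of the pushout construction from Lemma~\ref{chasecomputesweaklyinit} followed by the colimit's universal property for weak freeness, and the induction of Proposition~\ref{chasecomputesinit} combined with joint epicness of the legs $l_n$ for uniqueness in the cartesian case. The only difference is that you spell out the two steps the paper delegates by citation, which changes nothing substantive.
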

\begin{proof}
For any trigger $f:{\sf front}\to I$ of $\xi\in\ax$, Lemma $\ref{filteredfinite}$ gives a morphism $f':{\sf front}\to I_n$ with $l_n\circ f' = f$.  By fairness, there is an $m\geq n$ such that the trigger $\mathcal{C}(n,m)\circ f'$ is not active.  So there is $g':{\sf back}\to I_m$ such that $g'\circ h = \mathcal{C}(n,m)\circ f'$.  Then $l_m\circ g'\circ h = l_m\circ\mathcal{C}(n,m)\circ f'=l_n\circ f' = f$, so $f$ is not active.  Thus $I$ is a model of $\ax$.

For any model $B$ and morphism $b_0:I_0\to B$, we obtain morphisms $b_n:I_n\to B$ commuting with the chase morphisms, as in Lemma \ref{chasecomputesweaklyinit}.  By the universal property of colimits, there is a unique morphism $b:I\to B$ such that $b\circ l_n = b_n$ for all $n$.  In particular, $b\circ l_0 = b_0$, so $(I,l_0)$ is weakly free.

\[
\xymatrix @R=0.5pc @C=5pc {
& I_0 \ar[dd]_-{\mathcal{C}(\mathcal{F}_0)} \ar@/^2pc/[rddddddddd]^{b_0} \\
\\
& \myvdots \ar[dd]_-{\mathcal{C}(\mathcal{F}_{n-1})} \\
\\
& I_n \ar[dd]_-{\mathcal{C}(n,m)} \ar@/^2.7pc/[ddddd]^{l_n} \ar@/^1.6pc/[rddddd]^{b_n} \\
\\
& I_m \ar[dd]_-{\mathcal{C}(\mathcal{F}_m)} \ar@/^1pc/[ddd]^{l_m} \\
{\sf front} \ar[dd]_h  \ar[rdd]^f \ar@/^1pc/[ruuu]^{f'} \\
& \myvdots \\
{\sf back} \ar@{-->}@/^1pc/[ruuu]^{g'} \ar[r]_{l_m\circ g'} & I 
\ar@{-->}[r]_b & B
}
\]

Now suppose that $\ax$ is cartesian and there are two morphisms $p,q:I\to B$ with $p\circ l_0=b_0$ and $q\circ l_0=b_0$.  We intend to show that $p=q$.  By the universal property of colimits, it suffices to show that $p_n\coloneqq p\circ l_n = q\circ l_n \eqqcolon q_n$ for all $n$, which can be shown by induction just as in Proposition~\ref{chasecomputesinit}.
\qed\end{proof}

\begin{corollary}\label{existence}
Let $\ax$ be a (possibly infinite) regular theory on a (possibly infinite) signature $\sigma$ and let $I_0$ be a (possibly infinite) instance on $\sigma$.  Then there exists a (possibly infinite) weakly free model of $\ax$ on $I_0$.  If $\ax$ is cartesian, then there exists a (possibly infinite) free model of $\ax$ on $I_0$.
\end{corollary}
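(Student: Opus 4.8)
The plan is to exhibit a single fair parallel $\ax$-chase sequence starting at $I_0$ and then invoke Lemma~\ref{coollemma}, which already packages the substantive content: the colimit of any fair parallel chase sequence is a weakly free model, and a free model when $\ax$ is cartesian. So the only work remaining is to build the sequence and check that it is fair.

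First I would define the sequence by the maximally parallel recipe: having built $I_n$, let $\mathcal{F}_n$ be the set of all active triggers in $I_n$ of all EDs $\xi\in\ax$, and take $I_n\To{\mathcal{C}(\mathcal{F}_n)}I_{n+1}$ to be the associated parallel chase step. Before this is meaningful I must check that $\mathcal{F}_n$ is a genuine set, so that the coproduct and pushout defining $\mathcal{C}(\mathcal{F}_n)$ are legitimate set-indexed constructions. Each frozen instance ${\sf front}_\xi$ is finite, and each $I_n$ is set-sized (true for $I_0$ by hypothesis, and preserved by chase steps, by induction on $n$); hence for each $\xi$ there are only set-many morphisms ${\sf front}_\xi\to I_n$, and since $\ax$ is itself a set, $\mathcal{F}_n$ is a set. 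The pushout then exists in $\sigma{\sf -Inst}$ and $I_{n+1}$ is again set-sized, so the recursion goes through and produces an ordinary $\omega$-indexed sequence.

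Second I would verify fairness directly. Fix $n$, an ED $\xi\in\ax$, and an active trigger $f:{\sf front}\to I_n$. By construction $f\in\mathcal{F}_n$, and the defining property of the parallel chase step (namely that $\mathcal{C}(\mathcal{F})\circ f$ is inactive for every $f\in\mathcal{F}$) gives that $\mathcal{C}(\mathcal{F}_n)\circ f=\mathcal{C}(n,n+1)\circ f$ is inactive in $I_{n+1}$. Thus $m=n+1$ always witnesses the fairness condition, and in particular no transfinite iteration is needed. Finally I would apply Lemma~\ref{coollemma} to this fair sequence: its colimit $I$, with leg $l_0:I_0\to I$, is a weakly free model of $\ax$ on $I_0$, and when $\ax$ is cartesian it is a free model, which is exactly the two asserted existence statements.

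I expect the only delicate point to be the size bookkeeping that keeps $\mathcal{F}_n$ set-sized at every stage, i.e.\ the verification that ``fire all active triggers simultaneously'' is a bona fide parallel chase step rather than an illegitimate proper-class operation. Once that is secured, fairness is immediate with $m=n+1$, and the analytic heart of the argument has already been carried out in Lemma~\ref{coollemma}, so the corollary follows with essentially no further computation.
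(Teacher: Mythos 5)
Your proposal is correct and matches the paper's own proof: the paper likewise constructs a fair sequence by letting $\mathcal{F}_n$ be the set of all triggers of EDs in $\ax$ in $I_n$ (you restrict to active ones, which changes nothing) and then invokes Lemma~\ref{coollemma}. Your additional checks — that $\mathcal{F}_n$ is set-sized and that fairness is witnessed by $m=n+1$ via the inactivation property of parallel chase steps — are exactly the details the paper leaves implicit.
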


\begin{proof}
It suffices to show that there exists a fair parallel $\ax$-chase sequence $I_0\To{\mathcal{C}(\mathcal{F}_0)}I_1\To{\mathcal{C}(\mathcal{F}_1)}\cdots$.  Simply let $\mathcal{F}_n$ be the entire set of triggers in $I_n$ of EDs in $\ax$, for each $n$.
\qed\end{proof}

Thus, by Lemma~\ref{lemma:adjunction}, if $\ax$ is a cartesian theory, ${\sf Mod}(\ax)$ is a reflective subcategory of $\sigma{\sf -Inst}$.  In the general case of a regular theory $\ax$, we say that ${\sf Mod}(\ax)$ is a {\it weakly reflective subcategory}~\cite{weakly_reflective} of $\sigma{\sf -Inst}$, and the inclusion functor is said to have a {\it weak left adjoint}~\cite{kainen_1971}.

The following proposition generalizes Theorem 6.1 in \cite{BUSH2003107} (see Section~\ref{prev}).

\begin{proposition}\label{chasecompleteness}
Let $\ax$ be a cartesian theory, rewritten as in Lemma \ref{cartisreg} and \ref{EDisegdsandtgds} to be comprised of egds and tgds, and let $I_0\to I_1\to \cdots$ be a parallel $\ax$-chase sequence such that
\begin{enumerate}
    \item Each $\mathcal{F}_n$ contains at least one active trigger.
    \item The sequence $I_0\to I_1\to\cdots$ is fair.
    \item For every $n$ there is an $m\geq n$ such that $I_m$ satisfies all egds in $\ax$.
    \item There is a finite weakly free model of $\ax$ on $I_0$.
\end{enumerate}

Then the chase sequence $I_0\to I_1\to \cdots$ terminates.
\end{proposition}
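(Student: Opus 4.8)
The plan is to argue by contradiction: suppose the sequence does not terminate, so it is infinite and, by hypothesis~(1), every step fires at least one active trigger. First I would form the colimit $I$ of the whole sequence with legs $l_n\colon I_n\to I$. Since the sequence is fair~(2) and $\ax$ is cartesian, Lemma~\ref{coollemma} shows that $(I,l_0)$ is a free model of $\ax$ on $I_0$, and since a finite weakly free model exists~(4), Lemma~\ref{finiteweaklyinitial} forces $I$ to be finite. This finiteness is the engine of the whole argument: it lets me apply the ``compactness'' Lemma~\ref{filteredfinite} to the identity $\mathrm{id}_I\colon I\to I$ (taking $A:=I$ and $g:=l_0$), producing a stage $n_0$ and a section $s\colon I\to I_{n_0}$ with $l_{n_0}\circ s=\mathrm{id}_I$ and $s\circ l_0=\mathcal{C}(0,n_0)$. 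Setting $s_n:=\mathcal{C}(n_0,n)\circ s$ for $n\ge n_0$ yields compatible sections $l_n\circ s_n=\mathrm{id}_I$, so each $I_n$ surjects onto $I$ and contains an isomorphic copy $s_n(I)\cong I$ of the finite model.

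The goal is then to locate a single stage $N$ at which $I_N$ is already a model of $\ax$: such an $I_N$ has no active triggers, directly contradicting hypothesis~(1) for the step $I_N\to I_{N+1}$ and so ruling out the infinite sequence. The reduction I would use is an \emph{absorption} principle: for any trigger $f\colon{\sf front}\to I_n$ of an ED $\xi\in\ax$, the composite $l_n\circ f$ is a trigger in the model $I$, hence inactive by Lemma~\ref{satisfiesED}, giving a filler $\bar g\colon{\sf back}\to I$. Since $l_n\circ f=l_n\circ(s_n\circ l_n\circ f)$, the finiteness argument behind Lemma~\ref{filteredfinite} lets me push $f$ forward to a stage $m$ at which $\mathcal{C}(n,m)\circ f$ factors through $s_m(I)$, and then $s_m\circ\bar g$ witnesses its inactivity. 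Thus no trigger survives forever except by being re-created, and the whole question becomes whether the chase can keep re-creating work indefinitely.

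This is where hypothesis~(3) is essential and where I expect the main difficulty. Passing to the cofinal subsequence of stages $m_1<m_2<\cdots$ at which all egds of $\ax$ hold, I would observe that at such a stage the only possible active triggers are tgds, and that firing a parallel step of tgd triggers can only enlarge the instance (the tgd pushouts add fresh elements and tuples and perform no identifications). The plan is to combine this monotonicity at egd-complete stages with the absorption principle above --- every added witness is eventually identified with an element of the fixed finite copy $s(I)$ --- to conclude that the element-and-tuple count of the egd-complete instances cannot strictly increase beyond that of $I$, forcing some $I_{m_i}$ to have no active tgd trigger either, i.e.\ to be a model.

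The main obstacle is precisely ruling out the \textbf{create-then-collapse oscillation}: a run in which a tgd manufactures a fresh element that a later egd merges away, repeated forever, keeps every $\mathcal{F}_n$ supplied with an active trigger (satisfying~(1)) while the colimit stays finite. Hypotheses~(2) and~(3) are exactly what should forbid this --- fairness prevents a required collapse from being deferred indefinitely, and cofinal egd-completeness guarantees that the ``equational'' obligations are discharged between the moments at which the instance can only grow --- so the crux of the write-up will be to convert this intuition into a genuine termination measure, for instance a bound on the size of the egd-complete instances $I_{m_i}$ extracted from the section $s$ and the finiteness of $I$, that is incompatible with the strict growth forced by~(1) at egd-complete stages.
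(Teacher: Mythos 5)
Your first paragraph reproduces the paper's opening moves exactly: form the colimit $I$ with legs $l_n$, use fairness and cartesianness (Lemma~\ref{coollemma}) to conclude $(I,l_0)$ is a free model, use hypothesis (4) with Lemma~\ref{finiteweaklyinitial} to conclude $I$ is finite, and apply Lemma~\ref{filteredfinite} to ${\sf id}_I$ (with $g:=l_0$) to obtain a section $s\colon I\to I_{n_0}$ with $l_{n_0}\circ s={\sf id}_I$ and $s\circ l_0=\mathcal{C}(0,n_0)$. Your ``absorption principle'' is also a correct observation. But from there the proposal stops exactly where the real proof begins, and you say so yourself: converting the create-then-collapse intuition into a termination measure is deferred as ``the crux of the write-up.'' That crux is a genuine missing idea, and the measure you sketch would not close the gap: a cardinality bound $|I_{m_i}|\le |I|$ on the egd-complete instances is perfectly compatible with a non-terminating run, because sizes can oscillate forever (grow by a tgd step right after $m_i$, shrink back by egd merges before $m_{i+1}$), with every $\mathcal{F}_n$ still containing an active trigger. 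There is no monotonicity along the subsequence $I_{m_1},I_{m_2},\ldots$, so ``bounded size'' and ``strict growth at egd-complete stages'' do not collide. What contradicts hypothesis (1) is not a size bound but the existence of a single stage $m$ at which $I_m$ is already a model of $\ax$, hence has no active triggers at all (Lemma~\ref{satisfiesED}).

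The paper produces that stage by proving something stronger than a bound: for $m\ge n_0$ chosen by hypothesis (3) so that $I_m$ satisfies all egds, the map $i_m:=\mathcal{C}(n_0,m)\circ s\colon I\to I_m$ is an \emph{isomorphism}. Injectivity and reflection of relations are immediate from $l_m\circ i_m={\sf id}_I$; surjectivity is the heart, and it is proved by a minimal-rank argument that your proposal lacks. Define the rank of $v\in I_m$ as the least $k$ with $v$ in the image of $\mathcal{C}(k,m)$, and suppose some $v$ outside the image of $i_m$ has minimal rank $k$. Rank $0$ is impossible since $i_m\circ l_0=\mathcal{C}(0,m)$. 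Otherwise $v$ was created at step $k-1$ as an existential witness of a tgd trigger $f$; by minimality of rank, the pushforward to $I_m$ of the front of $f$ lies in the image of $i_m$, so (using $l_m\circ i_m={\sf id}_I$) it transports to a trigger $f'$ of the same tgd in $I$; since $I$ is a model, $f'$ has a filler $g'$ in $I$, giving a second family of witnesses $i_m\circ g'$ inside the image of $i_m$. Because $\ax$ is cartesian, Lemma~\ref{cartisreg} supplies an egd asserting uniqueness of these witnesses, and $I_m$ satisfies it by the choice of $m$ --- this is the one place where hypothesis (3) does real work, beyond the use of cartesianness already made in Lemma~\ref{coollemma} --- forcing $v=i_m(g'(x_p))$, contradicting $v\notin i_m(I)$. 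Hence $I_m\cong I$ is a model and hypothesis (1) forces termination at $I_m$. Neither your absorption principle (which, like fairness, only kills triggers one at a time while new ones appear) nor the monotonicity observations substitute for this uniqueness-egd argument; without it the proposal does not establish termination.
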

\begin{proof}
Let $I$ be the colimit of the chase sequence, with legs $l_i:I_i\to I$.  Since the sequence is fair (2nd assumption) and $\ax$ is cartesian, Lemma~\ref{coollemma} gives that $(I,l_0)$ is a free model of $\ax$ on $I_0$.  By the 4th assumption and Lemma~\ref{finiteweaklyinitial}, $I$ must be finite.

By Lemma~\ref{filteredfinite} applied to the identity morphism ${\sf id}_I$ and the leg $l_0$, there is an $n$ and a morphism $i_n:I\to I_n$ with $l_n\circ i_n={\sf id}_I$ and $i_n\circ l_0=\mathcal{C}(0,n)$.  Now use the 3rd assumption to find an $m\geq n$ such that $I_m$ satisfies all egds in $\ax$.  Let $i_m=\mathcal{C}(n,m)\circ i_n$, and note that $l_m\circ i_m={\sf id}_I$ and $i_m\circ l_0=\mathcal{C}(0,m)$.  We now show that $i_m$ is an isomorphism, which implies that $(I_m,\mathcal{C}(0,m))$ is a free model of $\ax$ on $I_0$, which by the 1st assumption implies that the chase sequence terminates at $I_m$. \newpage 

\vspace*{-.3in}
\[
\xymatrix @R=0.5pc @C=5pc {
& I_0 \ar[dd]^-{\mathcal{C}(\mathcal{F}_0)} \ar@/_2pc/[lddddddddd]_{l_0} 
\\
\\
& \myvdots \ar[dd]^-{\mathcal{C}(\mathcal{F}_{n-1})} \\
\\
& I_n \ar[dd]_-{\mathcal{C}(n,m)} \ar@/^3pc/[ddddd]^{l_n} \\
\\
& I_m \ar[dd] \ar@/^1pc/[ddd]^{l_m} \\
\\
& \myvdots \\
I \ar@/^1pc/[ruuu]^{i_m} \ar@/^1.5pc/[ruuuuu]^{i_n} \ar@{=}[r] & I
}
\]
\vspace*{-.05in}

Clearly $i_m$ is injective.  If $I_mr(i_m(v_1),\ldots,i_m(v_k))$ holds for $x_1,\ldots,v_k\in I$, then $l_mi_m(v_j)=v_j$ for each $j=1,\ldots,k$ so $Ir(v_1,\ldots,v_k)$ holds.  So to show that $i_m$ is an isomorphism it suffices to show that each of its components is surjective.

We now define the {\it rank} of an element $v\in I_ms$ as the smallest $k$ such that $v$ is in the image of $\mathcal{C}(k,m)_s$.

If not all components of $i_m$ are surjective, let $v$ be an element of $I_m$, not in the image of $i_m$, of minimal rank $k$, and let $u$ be an element of $I_k$ with $\mathcal{C}(k,m)(u)=v$ (we omit the subscript $s$ hereafter for convenience). If $k=0$, then $v=\mathcal{C}(0,m)(u)=i_ml_0(u)$, a contradiction.  So $k>0$.  It is then clear that $u$ was introduced from $\coprod_{f\in\mathcal{F}_{k-1}}{\sf back}_f$ in the following pushout: \vspace*{-.15in}
  \[
    \xymatrix @C=3.5pc {
    \coprod_{f\in\mathcal{F}_{k-1}} {\sf front}_f \ar[r]^-{\bigoplus_{f\in\mathcal{F}_{k-1}} f} \ar[d]_{\coprod_{f\in\mathcal{F}_{k-1}} h_f} & I_{k-1} \ar[d]^{\mathcal{C}(\mathcal{F}_{n-1})}    \\
    \coprod_{f\in\mathcal{F}_{n-1}} {\sf back}_f \ar[r]_-g  & I_k \pullbackcorner[ul]           
    }
  \]
Thus there is a trigger $f\in\mathcal{F}_{k-1}:{\sf front}_f\to I_{k-1}$ of an tgd $\xi$:
\begin{small}
\[
\phi(x_0 , \ldots , x_\nu) \Rightarrow
  \exists (x_{\nu+1}:s_{\nu+1}) \cdots \ (x_\mu:s_\mu)\ldotp \psi(x_0, \ldots, x_\mu)
\]
\end{small}
and $u=g(x_p)$ for some $p=\nu+1,\ldots,\mu$.

From now on, we write $i_m$ as $i$ and $l_m$ as $l$ to simplify notation.  By minimality of $k$, every element of the image of $\mathcal{C}(k-1,m)_s\circ f_s$ is in the image of $i_s$, for each sort $s$.  Define functions $f'_s:{\sf front}_fs\to Is$ by $i_s(f'_s(x_j)) = \mathcal{C}(k-1,n)_s (f_s(x_j))$ for $j=0,\ldots,\nu$. Since $I_m\phi(i(f'(x_0)),\ldots,i(f'(x_\nu)))$ holds and we have $l\circ i={\sf id}_A$, we have $I\phi(f'(x_0),\ldots,f'(x_\nu))$, so $f'$ is a morphism.  Since $I$ is a model of $\ax$, there is a morphism $g':{\sf back}\to A$ with $g'\circ h=f'$.  All this is shown in the following diagram, where $g_f$ is the restriction of $g$ to ${\sf back}_f$:
\vspace{-.15in}
\[
  \xymatrix {
  {\sf front}_f \ar@/_1.5pc/[dd]_{f'} \ar[r]^f \ar[d]^{h_f} & I_{k-1} \ar[d]^{\mathcal{C}(\mathcal{F}_{k-1})} \\
  {\sf back}_f \ar@{-->}[d]^{g'} \ar[r]_{g_f} & I_k \ar[d]^{\mathcal{C}(k,m)} \\
  I \ar@{^{(}->}[r]^{i} & I_m
  }
\]
Thus we have both $I_m\psi(if'(x_0),\ldots,if'(x_\nu),ig'(x_{\nu+1}),\ldots,ig'(x_\mu))$ and\linebreak $I_m\psi(if'(x_0),\ldots,if'(x_\nu),\mathcal{C}(k,m)g(x_{\nu+1}),\ldots,\mathcal{C}(k,m)g(x_\mu))$.  Since $\ax$ is cartesian, it has an egd expressing the uniqueness of $\xi$; and $I_m$ satisfies all egds in $\ax$, so it must satisfy this one.  Thus $ig'(x_p)=\mathcal{C}(k,m)g(x_p)=v$, contradicting $v\notin i_s(Is)$.  Thus each component of $i$ is surjective, and the chase sequence terminates at $I_m$.

\qed \end{proof}

So although the standard chase is incomplete in general, it is complete when $\ax$ is cartesian and we only care about computing finite free models.  We still do not have completeness with respect to computing finite universal models, even when $\ax$ is cartesian (see Example~\ref{universalcounterexample}).  We also do not have completeness with respect to computing finite weakly free models of regular theories --- Example 2 in \cite{Deutsch:2008:CR:1376916.1376938} has a finite weakly free model and no terminating chase sequence.  For completeness, we need both $\ax$ to be cartesian and our goal to be finite weakly free models.

\subsection{A Fast Parallel Chase Algorithm for Cartesian Theories}
\label{canchase}
We now describe a particular canonical (determined up to isomorphism) chase algorithm for cartesian theories satisfying criteria 1, 2, and 3 of Lemma~\ref{chasecompleteness}.

Let $\ax$ be a finite cartesian theory on the signature $\sigma$, rewritten as in Lemma \ref{cartisreg} and \ref{EDisegdsandtgds} to be comprised of egds and tgds, and let $I_0$ be a finite instance on $\sigma$.

\begin{algorithm}[H]
\setstretch{1.35}
\SetAlgoLined
\caption{Fast Parallel Chase Algorithm}\label{alg:fast}
\KwData{$\sigma$, $\ax$, $I_0$}
\KwResult{$I$, $i:I_0\to I$}
$I_{\textrm{cur}}\coloneqq I_0$\;
$i\coloneqq {\sf id}_{I_0}$\;
${\sf first}\coloneqq {\sf true}$\;
\While{$I_{\textrm{cur}}$ has active triggers of tgds OR ${\sf first}$}{
  ${\sf first}\coloneqq {\sf false}$\;
  parallel chase $I_{\textrm{cur}}\To{\mathcal{C}(\textrm{all active triggers of tgds in }\ax)}I$\;
  $i\coloneqq\textrm{compose}(I_0\To{i}I_{\textrm{cur}}\to I)$\;
  $I_{\textrm{cur}}\coloneqq I$\;
  \While{$I_{\textrm{cur}}$ has active triggers of egds}{
    parallel chase $I_{\textrm{cur}}\To{\mathcal{C}(\textrm{all active triggers of egds in }\ax)}I$\;
    $i\coloneqq\textrm{compose}(I_0\To{i}I_{\textrm{cur}}\to I)$\;
    $I_{\textrm{cur}}\coloneqq I$\;
  }
}
\end{algorithm}

This algorithm terminates whenever the free model of $\ax$ on $I$ is finite, and in such a case it computes this model.

If we just want to compute the model $I$ and don't care about the morphism $i$, as in the case of left Kan extensions (see Lemma~\ref{quadruple}) then we can omit steps 2, 5, and 9.

\pagebreak 

\section{A Fast Left Kan Extension Algorithm}
\label{impl}

We have implemented a specialized version of the parallel chase algorithm in Section~\ref{canchase} tailored for computing left Kan extensions inside the open-source CQL tool for computational category theory ({\sf http://categoricaldata.net}); this algorithm also resembles a parallel version of the left Kan algorithm in~\cite{BUSH2003107}.  By leveraging collection-oriented (parallelizable) operations, it improves upon our optimized (necessarily sequential) java implementations of various existing left Kan algorithms (\cite{CARMODY1995459} and \cite{BUSH2003107} and several from \cite{patrick}) by a factor of ten on our benchmarks.  




\subsection{Input Specification}

The input to our left Kan algorithm consists of a source finite category presentation, whose vertex (node) set we refer to as $C$, whose edge set from $c_1$ to $c_2$ we refer to as $C(c_1, c_2)$, and whose equations (pairs of possibly 0-length paths) from $c_1$ to $c_2$ we refer to as $CE(c_1, c_2)$.  Similarly, our input contains a target finite category presentation ($D$, $D(-,-)$,  $DE(-,-)$).   We require as further input a morphism of presentations $F : (C, C(-,-), CE(-,-)) \to (D, D(-,-), DE(-,-) $.  Finally, we require a $(C,C(-,-))$-algebra $I$ satisfying $CE$.  The source equations $CE$ are not used by our algorithm (or any chase algorithm we are aware of) but are required to fully specify $I$.  A review of concepts such as finite category presentation and $(C,C(-,-))$-algebra is found in Section~\ref{section.ct}.



\subsection{The State}

Like most chase algorithms~\cite{onet:DFU:2013:4288}, our left Kan extension algorithm runs in rounds, possibly forever, transforming a state 
consisting of an $\sigma$-instance (where $\sigma$ is the signature of the theory ${\sf cog}(F)$)
until a fixed point is reached.  In general, termination of the chase is undecidable, but sufficient criteria exist based on the acyclicity of the ``firing pattern'' of the existential quantifiers~\cite{onet:DFU:2013:4288} in the cartesian theory corresponding to $DE$ from the previous section.  Similarly, termination of a left Kan extension is undecidable~\cite{CARMODY1995459}, although using the results of this paper we can use termination of one to show termination of the other.  Formally, the state of our algorithm consists of:

\begin{itemize}
\item For each $d \in D$, a set $J(d)$, the elements of which we call {\it output rows}.  $J$ is initialized in the first round by setting $J(d) \coloneqq \bigsqcup_{\{c \in C \ | \ F(c) = d\}} I(c)$.
\item For each $d \in D$, an equivalence relation $\sim_d \ \subseteq  \ J(d) \times J(d)$, initialized to identity at the beginning of every round.
\item For each edge $f : d_1 \to d_2 \in D$, a binary relation $J(f) \subseteq J(d_1) \times J(d_2)$, initialized in the first round to empty.  When the chase completes, each such relation will be total and deterministic.
\item For each node $c \in C$, a function $\eta(c) : I(c) \to J(F(c))$, initialized in the first round to the coproduct/disjoint-union injections from the first item, i.e. $\eta(c)(v)=(c,v)$.
\end{itemize}


Given a morphism $f:d_1\to d_2$, we may {\it evaluate} $p$ on any $u\in J(d_1)$, written $p(u)$, resulting in a (possibly empty) set of values $\{v\in J(d_2)\mid (u,v)\in J(f)\}$.

We may also evaluate a path $p=(d_0\To{f_1}d_2\To{f_2}\cdots\To{f_n}d_n)$ on $u$.  We can define this inductively as $p(u)=\{u\}$ when $n=0$ and $p(u)=\bigcup_{v\in p'(u)}f_n(v)$ where $p'$ is the subpath $d_0\To{f_1}d_2\To{f_2}\cdots\To{f_{n_1}}d_{n_1})$.


\subsection{The Algorithm}

Unlike most chase algorithms, our left Kan algorithm consists of a fully deterministic sequence of state transformations, up to unique isomorphism. In practice, designing a practical chase algorithm (and hence, we believe, a practical left Kan algorithm) comes down to choosing an equivalent sequence of state transformations, as well as efficiently executing them in bulk.  
  In this section, we first describe the actions of our algorithm, and then we show how each action is equivalent to a sequence of the two kinds of actions used to define chase steps in database theory.  We conclude with some discussion around the additional observation that each step in our left Kan algorithm can be directly read as computing a pushout in the chase algorithm of Section~\ref{canchase}.

A single step of our left Kan algorithm involves applying the following actions (named after the actions in \cite{BUSH2003107}) to the state in the order dictated by Algorithm \ref{alg:fastkan}.
   
\begin{itemize}
\item Action $\alpha$: {\it add new elements}.   For every edge $g : d_1 \to d_2$ in $D$ and $u \in J(d_1)$ for which there does not exist $v \in J(d_2)$ with $(u,v) \in J(g)$, add a fresh (not occurring elsewhere) symbol ${\sf g(u)}$ to $J(d_2)$, and add $(u,{\sf g(u)})$ to $J(g)$. Note that this action may not make every relation $J(g)$ total (which might necessitate an infinite chain of new element creations), but rather adds one more ``layer'' of new elements.

\item Action $\beta_D$: {\it add all coincidences induced by $D$}.  In this step, for each equation $p = q$ in $DE(d_1,d_2)$ and $u \in J(d_1)$, we update $\sim_{d_2}$ to be the smallest equivalence relation also including $\{(v,v')\mid v\in p(u),v'\in q(u)\}$.
\item Action $\beta_F$: {\it add all coincidences induced by $F$}.  For each edge $f:c\to c'$ in $C$ and $u\in I(c)$, update $\sim_{Fc'}$ to be the smallest equivalence relation also including $\{((c',f(u)),v)\ | \ v \in f((c,u))\}$.
\item Action $\delta$: {\it add all coincidences induced by functionality}.  For each $f:d_1\to d_2$ and every $(u,v)$ and $(u,v')$ in $J(f)$ with $v \neq v'$, update $\sim_{d_2}$ to be the smallest equivalence relation also including $(v,v')$.  
\item Action $\gamma$: {\it merge coincidentally equal elements}.  At the end of every round, we replace every entry in $J$ and $\eta$ with its $\sim$-equivalence class (or $\sim$-representative).
\end{itemize}

The phrase ``add coincidences'' is used by the authors of~\cite{BUSH2003107} where a database theorist would use the phrase ``fire equality-generating dependencies''.  

In many chase algorithms, including~\cite{BUSH2003107}, elements are equated in place, necessitating complex reasoning and inducing non-determinism.  Our algorithm is deterministic: action $\alpha$ adds a new layer of elements, and the next steps add to $\sim$.  

\vspace{.25in}
\SetKwComment{Comment}{// }{}
\begin{algorithm}[H]
\setstretch{1.1}
\SetAlgoLined
\caption{Fast Left Kan Algorithm}\label{alg:fastkan}
\KwData{$C$, $D$, $F$, $I$, $\eta$}
\KwResult{$J$, $\eta$}
\For{$c\in C$ and $v\in Ic$}{
  add $(c,v)$ to $J(Fc)$\;
  $\eta(c)(v)\coloneqq (c,v)$\;
}
${\sf first}\coloneqq {\sf true}$\;
\While{$J$ has active triggers of tgds OR ${\sf first}$}{
  ${\sf first}\coloneqq {\sf false}$\;
  \For{$d\in D$}{
    $\sim_d\coloneqq\{(v,v)\mid v\in J(d)\}$\;
  }
  $\alpha$\;
  \While{$J$ has active triggers of egds}{
    $\beta_D$\;
    $\beta_F$\;
    $\delta$\;
    $\gamma$\;
  }
}

\end{algorithm}

\begin{proposition}
Algorithm~\ref{alg:fastkan} instantiates Algorithm~\ref{alg:fast} with the theory ${\sf cog}(F)$ and the input instance $\mathcal{I}$ defined as in Section~\ref{prf}, so it computes finite left Kan extensions completely.
\end{proposition}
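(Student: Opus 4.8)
The plan is to set up a dictionary between the state of Algorithm~\ref{alg:fastkan} and a $\sigma$-instance for the signature $\sigma$ of the cartesian theory ${\sf cog}(F)$, and then to check that each action, and each loop, of Algorithm~\ref{alg:fastkan} enacts the corresponding chase step, and loop, of Algorithm~\ref{alg:fast}. First I would spell out, using the construction of the cartesian theory of a category presentation together with Lemmas~\ref{cartisreg} and~\ref{EDisegdsandtgds}, the explicit tgds and egds of ${\sf cog}(F)$: for each generating morphism (a morphism of $C$, a morphism of $D$, or one of the new arrows $\alpha_c:c\to F(c)$) a totality tgd $\exists y\ldotp f(x,y)$ and a functionality egd $f(x,y)\wedge f(x,y')\Rightarrow y=y'$; and for each equation of ${\sf cog}(F)$ a path-equation egd, the equations being those of $C$, those of $D$ (i.e.\ $DE$), and the naturality equations $F(f)\circ\alpha_c=\alpha_{c'}\circ f$.

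I would then exhibit the encoding. The fixed functor $I$ supplies the $C$-sorts and the $C$-morphism relations (which are total, functional, and $CE$-satisfying because $I$ is a $C$-algebra satisfying $CE$); the family $\eta(c):I(c)\to J(F(c))$ encodes the relation $\alpha_c$; and $J$ together with the relations $J(f)$ supplies the $D$-sorts and $D$-morphisms. The crucial bookkeeping lemma is that the $C$-part and the $\alpha$-part stay inert for the entire run: since $I(c)$ never grows and $\eta(c)$ is maintained as a total function, no totality or functionality trigger of a $C$-morphism or of an $\alpha_c$ is ever active after initialization, and since $I$ satisfies $CE$ no $C$-equation egd is ever active (and $\gamma$ only merges $D$-elements, so it cannot reactivate any of these). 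Hence the only active tgd triggers are $D$-edge totality triggers, and the only active egd triggers come from $DE$, from naturality, or from $D$-edge functionality.

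With the dictionary fixed, matching the pieces is direct. Initialization (lines 1--3) fires exactly the $\alpha_c$-totality tgds on the instance $\mathcal{I}$ of Lemma~\ref{quadruple}, producing $J(d)=\bigsqcup_{\{c\mid Fc=d\}}I(c)$ with $\eta$ the coproduct injections. Action $\alpha$ is the parallel tgd chase step firing all active $D$-edge totality triggers, adding one fresh element per trigger exactly as in the tgd pushout. A single pass of the inner loop realizes one parallel egd chase step: $\beta_D$ adds the coincidences forced by the $DE$ egds, $\beta_F$ adds those forced by the naturality egds (reading $(c',f(u))$ as $\alpha_{c'}(If(u))$ and $f((c,u))$ as $F(f)(\alpha_c(u))$), and $\delta$ adds those forced by $D$-edge functionality, so that after $\beta_D,\beta_F,\delta$ the relation $\sim$ is precisely the equivalence relation generated by all active egd triggers, whereupon $\gamma$ takes the corresponding quotient---which is exactly the pushout defining the parallel egd chase step---and resets $\sim$ to identity on the quotient. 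The loop nesting then agrees verbatim: the outer while-loop fires the tgds once and drives the egds to completion, as in Algorithm~\ref{alg:fast}.

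It then remains to harvest completeness. Algorithm~\ref{alg:fast} was built to satisfy criteria 1--3 of Proposition~\ref{chasecompleteness}: it always fires every active trigger (so each step's trigger set is nonempty and the sequence is fair), and it runs the egds to completion after each tgd step (so infinitely often the current instance satisfies all egds). When the left Kan extension is finite the free model of ${\sf cog}(F)$ on $\mathcal{I}$ is finite, so Proposition~\ref{chasecompleteness} gives termination, and Proposition~\ref{chasecomputesinit} (or Lemma~\ref{coollemma}) identifies the result as that free model. By Lemma~\ref{quadruple} this free model, read as a triple, is $(I,\Sigma_F(I),\eta_I)$, so its $D$-part $J$ is $\Sigma_F(I)$; thus Algorithm~\ref{alg:fastkan} terminates with the correct left Kan extension whenever it is finite, i.e.\ computes finite left Kan extensions completely. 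The main obstacle I anticipate is the inertness bookkeeping of the second paragraph together with the verification that one grouped pass $(\beta_D,\beta_F,\delta,\gamma)$ reproduces \emph{exactly} one parallel egd chase step---neither a coarser nor a finer identification---rather than the essentially routine translation of the remaining actions.
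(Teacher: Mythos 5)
Your proposal is correct and follows essentially the same route as the paper: the paper's proof is a terse ``by construction'' remark observing that the actions $\alpha,\beta_D,\beta_F,\delta,\gamma$ were obtained by grouping the EDs of ${\sf cog}(F)$ by provenance, with line 8 performing all tgds at once and lines 11--14 preparing and firing all egds simultaneously, which is exactly the dictionary-plus-action-matching argument you spell out in detail. Your additional steps (the inertness of the $C$- and $\alpha$-parts, and harvesting termination from Proposition~\ref{chasecompleteness} together with Lemma~\ref{quadruple}) are elaborations the paper leaves implicit, not a different approach.
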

\begin{proof}
By construction; we designed actions $\alpha, \beta_D, \beta_F, \delta, \gamma$ by grouping the EDs for ${\sf cog}(F)$ according to where they came from.  Algorithm~\ref{alg:fastkan} performs all the tgds at once in line 8, then prepares all the egds in lines 11-13, performing them simultaneously in line 14.
\qed\end{proof}

 \subsection{Example Run of the Algorithm}
See section~\ref{sec:ex} for the definition of our running example.  The state begins as:
\[
\begin{tabular}{ll}
 {\sf Faculty} & {\sf isFP}     \\\hline 
  {\sf  'Dr.' Alice } &  \\ 
  {\sf  'Dr.' Bob } &   \\ 
  {\sf  Prof. Ed } &  \\ 
  {\sf  Prof. Finn } &  \\ 
  {\sf  Prof. Gil } &  \\
\end{tabular}
\hspace{.25in}
\begin{tabular}{ll}
 {\sf Student} & {\sf isSP}    \\\hline 
 {\sf   Alice } & \\ 
{\sf   Bob } &  \\ 
{\sf   Chad } & \\
{\sf   Doug } &  \\ \\
\end{tabular}
\hspace{.25in}
\begin{tabular}{lll}
  {\sf TA}  &  {\sf isTF}  &  {\sf isTS}  \\\hline 
{\sf  math-TA }&{\sf  }&{\sf  }\\ 
{\sf  cs-TA  }&{\sf  }&{\sf  } \\ \\ \\ \\
\end{tabular}
\hspace{.25in}
\begin{tabular}{ll}
  {\sf Person}     \\\hline 
  \\ \\ \\ \\ \\
\end{tabular}
\]
The $\eta$ tables are initialized to identities and do not change, so we do not display them.  First, we add new elements (action $\alpha$):
\[
\begin{tabular}{ll}
 {\sf Faculty} & {\sf isFP}     \\\hline 
  {\sf  'Dr.' Alice } & {\sf isFP}({\sf 'Dr.' Alice})  \\ 
  {\sf  'Dr.' Bob } &  {\sf isFP}({\sf 'Dr.' Bob}) \\ 
  {\sf  Prof. Ed } & {\sf isFP}({\sf  Prof. Ed}) \\ 
  {\sf  Prof. Finn } & {\sf isFP}({\sf  Prof. Finn}) \\ 
  {\sf  Prof. Gil } & {\sf isFP}({\sf Prof. Gil}) \\
  {\sf isTF}({\sf math-TA}) & \\
  {\sf isTF}({\sf cs-TA }) & \\
\end{tabular}
\hspace{.5in}
\begin{tabular}{ll}
 {\sf Student} & {\sf isSP}    \\\hline 
 {\sf   Alice } & {\sf isSP}({\sf Alice}) \\ 
{\sf   Bob } & {\sf isSP}({\sf Bob}) \\ 
{\sf   Chad } &  {\sf isSP}({\sf Chad}) \\ 
{\sf   Doug } & {\sf isSP}({\sf Doug}) \\ 
{\sf isTS}({\sf math-TA}) & \\
{\sf isTS}({\sf cs-TA }) & \\
\\
\end{tabular}
\]
\[
\begin{tabular}{lll}
  {\sf TA}  &  {\sf isTF}  &  {\sf isTS}  \\\hline 
{\sf  math-TA }& {\sf isTF}({\sf math-TA})&{\sf isTS}({\sf math-TA})\\ 
{\sf  cs-TA  }&{\sf isTF}({\sf cs-TA })&{\sf isTS}({\sf cs-TA })
\end{tabular}
\hspace{.5in}
\begin{tabular}{ll}
  {\sf Person}     \\\hline 
   {\sf isFP}({\sf 'Dr.' Alice})  \\ 
   {\sf isFP}({\sf 'Dr.' Bob}) \\ 
   {\sf isFP}({\sf  Prof. Ed}) \\ 
   {\sf isFP}({\sf  Prof. Finn}) \\ 
   {\sf isFP}({\sf Prof. Gil}) \\
  {\sf isSP}({\sf Chad}) \\ 
 {\sf isSP}({\sf Alice}) \\ 
 {\sf isSP}({\sf Bob}) \\ 
 {\sf isSP}({\sf Doug}) \\ 
\end{tabular}
\]
Next, we add coincidences (actions $\beta_D$,$\beta_F$,and $\delta$).  The single target equation in $DE$ induces no equivalences, because of the missing values in the {\sf isFP} and {\sf isSP} columns, so $\beta_D$ does not apply.  $\beta_F$ requires that {\sf isTF} and {\sf isTS} be copies of {\sf isTF'} and {\sf isTS'} (from the source schema $C$), inducing the following equivalences:
\begin{gather*}
{\sf isTF}(\text{\sf math-TA}) \sim \text{\sf 'Dr.' Alice} \ \ \ \ {\sf isTS}(\text{\sf math-TA}) \sim \text{\sf Alice} 
\\
{\sf isTF}(\text{\sf cs-TA }) \sim  \text{\sf 'Dr.' Bob}\ \ \ \ {\sf isTS}(\text{\sf cs-TA }) \sim \text{\sf Bob} 
\end{gather*}
The edge relations are all functions, so action $\delta$ does not apply.  So, after merging equal elements (action $\gamma$) we have:
\[
\begin{tabular}{ll}
 {\sf Faculty} & {\sf isFP}     \\\hline 
  {\sf  'Dr.' Alice } & {\sf isFP}({\sf 'Dr.' Alice})  \\ 
  {\sf  'Dr.' Bob } &  {\sf isFP}({\sf 'Dr.' Bob}) \\ 
  {\sf  Prof. Ed } & {\sf isFP}({\sf  Prof. Ed}) \\ 
  {\sf  Prof. Finn } & {\sf isFP}({\sf  Prof. Finn}) \\ 
  {\sf  Prof. Gil } & {\sf isFP}({\sf Prof. Gil}) \\
\end{tabular}
\hspace{.5in}
\begin{tabular}{ll}
 {\sf Student} & {\sf isSP}    \\\hline 
 {\sf   Alice } & {\sf isSP}({\sf Alice}) \\ 
{\sf   Bob } & {\sf isSP}({\sf Bob}) \\ 
{\sf   Chad } &  {\sf isSP}({\sf Chad}) \\ 
{\sf   Doug } & {\sf isSP}({\sf Doug}) \\ 
\end{tabular}
\]
\[
\begin{tabular}{lll}
  {\sf TA}  &  {\sf isTF}  &  {\sf isTS}  \\\hline 
{\sf  math-TA }& {\sf 'Dr.' Alice}&{\sf Alice}\\ 
{\sf  cs-TA  }&{\sf 'Dr.' Bob}&{\sf Bob}
\end{tabular}
\hspace{.5in}
\begin{tabular}{ll}
  {\sf Person}     \\\hline 
   {\sf isFP}({\sf 'Dr.' Alice})  \\ 
   {\sf isFP}({\sf 'Dr.' Bob}) \\ 
   {\sf isFP}({\sf  Prof. Ed}) \\ 
   {\sf isFP}({\sf  Prof. Finn}) \\ 
   {\sf isFP}({\sf Prof. Gil}) \\
  {\sf isSP}({\sf Chad}) \\ 
 {\sf isSP}({\sf Alice}) \\ 
 {\sf isSP}({\sf Bob}) \\ 
 {\sf isSP}({\sf Doug}) \\ 
\end{tabular}
\]
The reason that the empty cells disappear in the {\sf Faculty} and {\sf Student} tables is that after merging, they are subsumed by existing entries; i.e., based on our definition of state, the tables below are visually distinct but completely the same as the tables above:
\[
\begin{tabular}{ll}
 {\sf Faculty} & {\sf isFP}     \\\hline 
  {\sf  'Dr.' Alice } & {\sf isFP}({\sf 'Dr.' Alice})  \\ 
  {\sf  'Dr.' Bob } &  {\sf isFP}({\sf 'Dr.' Bob}) \\ 
  {\sf  Prof. Ed } & {\sf isFP}({\sf  Prof. Ed}) \\ 
  {\sf  Prof. Finn } & {\sf isFP}({\sf  Prof. Finn}) \\ 
  {\sf  Prof. Gil } & {\sf isFP}({\sf Prof. Gil}) \\
  {\sf 'Dr.' Alice} & \\
  {\sf 'Dr.' Bob} & \\
\end{tabular}
\hspace{.5in}
\begin{tabular}{ll}
 {\sf Student} & {\sf isSP}    \\\hline 
 {\sf   Alice } & {\sf isSP}({\sf Alice}) \\ 
{\sf   Bob } & {\sf isSP}({\sf Bob}) \\ 
{\sf   Chad } &  {\sf isSP}({\sf Chad}) \\ 
{\sf   Doug } & {\sf isSP}({\sf Doug}) \\ 
{\sf Alice} & \\
{\sf Bob} & \\
\end{tabular}
\]
In the second and final round, no new elements are added and one action adds coincidences, $\beta_D$.  In particular, it induces equivalences
$$
{\sf isFP}(\text{\sf 'Dr.' Alice}) \sim {\sf isSP}(\text{\sf Alice}) \ \ \ \  \ \  \ \
   {\sf isFP}(\text{\sf 'Dr.' Bob}) \sim {\sf isSP}(\text{\sf Bob})
$$
which, after merging, leads to a final state of:
\[
\begin{tabular}{ll}
 {\sf Faculty} & {\sf isFP}     \\\hline 
  {\sf  'Dr.' Alice } & {\sf isFP}({\sf 'Dr.' Alice})  \\ 
  {\sf  'Dr.' Bob } &  {\sf isFP}({\sf 'Dr.' Bob}) \\ 
  {\sf  Prof. Ed } & {\sf isFP}({\sf  Prof. Ed}) \\ 
  {\sf  Prof. Finn } & {\sf isFP}({\sf  Prof. Finn}) \\ 
  {\sf  Prof. Gil } & {\sf isFP}({\sf Prof. Gil}) \\
\end{tabular}
\hspace{.5in}
\begin{tabular}{ll}
 {\sf Student} & {\sf isSP}    \\\hline 
 {\sf   Alice } & {\sf isSP}({\sf Alice}) \\ 
{\sf   Bob } & {\sf isSP}({\sf Bob}) \\ 
{\sf   Chad } &  {\sf isSP}({\sf Chad}) \\ 
{\sf   Doug } & {\sf isSP}({\sf Doug}) \\ 
\end{tabular}
\]
\[
\begin{tabular}{lll}
  {\sf TA}  &  {\sf isTF}  &  {\sf isTS}  \\\hline 
{\sf  math-TA }& {\sf 'Dr.' Alice}&{\sf Alice}\\ 
{\sf  cs-TA  }&{\sf 'Dr.' Bob}&{\sf Bob}
\end{tabular}
\hspace{.5in}
\begin{tabular}{ll}
  {\sf Person}     \\\hline 
   {\sf isFP}({\sf 'Dr.' Alice})  \\ 
   {\sf isFP}({\sf 'Dr.' Bob}) \\ 
   {\sf isFP}({\sf  Prof. Ed}) \\ 
   {\sf isFP}({\sf  Prof. Finn}) \\ 
   {\sf isFP}({\sf Prof. Gil}) \\
  {\sf isSP}({\sf Chad}) \\ 
 {\sf isSP}({\sf Doug}) \\ 
\end{tabular}
\]
which is obviously uniquely isomorphic\footnote{Note that the uniqueness of the isomorphism from these tables to any other left Kan extension does not imply there are no automorphisms of the input (for example, we may certainly swap {\sf Prof. Finn} and {\sf Prof. Gil}), but rather that there will be at most one isomorphism of the tables above with any other left Kan extension.  If desired, we may of course restrict ourselves to only considering those morphisms that leave their inputs fixed, ruling out the swapping of {\sf Prof. Finn} and {\sf Prof. Gil}, described in various places in this paper as ``leaving {\sf Prof. Finn}, {\sf Prof. Gill} as constants''.} to the original example output (see Section~\ref{sec:ex}).  The actual choice of names in the above tables is not canonical, as we would expect for a set-valued functor defined by a universal property, and different naming strategies are possible.  


\subsection{Comparison to Previous Work} 
\label{prev}

The authors of~\cite{BUSH2003107} identify four actions that leave invariant the left Kan extension denoted by a state, and consider sequences of these actions.  We compare their actions with ours:  
\begin{itemize}
\item Action $\alpha$: add a new element.  This step is similar to our $\alpha$ step, except it only adds one element. 
\item Action $\beta$: add a coincidence. This step is similar to our $\beta_F$ and $\beta_D$, except it only considers one equation.
\item Action $\delta$: delete non-determinism.  This is similar to our $\delta$ step, except it only applies to one edge at a time.  If $(u,v) \in J(g)$ and $(u,v') \in J(g)$ but $v \neq v'$, add $(v,v')$ and $(v',v)$ to $\sim$ and delete $(u,v')$ from $J(g)$.  This process is biased towards keeping older values to ensure fairness.  
\item Action $\gamma$: delete a coincidence.  If $(u, v) \in \ \sim_d$ for some $d \in D$, then replace $v$ by $u$ in various places, and add new coincidences.  In the first computational left Kan paper~\cite{CARMODY1995459}, this action took an entire companion technical report to justify~\cite{10.1007/BFb0084213}; the authors of~\cite{BUSH2003107} reduced this step to about a page.  One reason this step is complicated to write in~\cite{BUSH2003107} is because the relation $\sim$ is not required to be transitive; another reason is that the way deletion is done in the various places depends on the particular place; another is that deletion is done in place.  Finally, in~\cite{BUSH2003107}, $\sim$ is persistent; their notion of action and round are the same, and $\sim$ does not reset between rounds.  
\end{itemize}

Theorem 6.1 in \cite{BUSH2003107} (paraphrased) states that a sequence of four actions $\alpha$, $\beta$, $\gamma$, and $\delta$ terminates if
\begin{enumerate}
\item For each action $\eta=\alpha,\beta,\gamma$, or $\delta$ and each $n\geq 1$, there exists $m$ such that $m>n$ and $\eta_m=\eta$ (i.e. no action is left out of the sequence indefinitely).
\item When applying action $\alpha$ the element involved is always chosen to have minimal rank.
\item When applying action $\gamma$ the element of highest rank in the coincidence is deleted.
\item For all $n\geq 1$ there exists $m$ such that $m>n$ and the $m$th state satisfies all egds.
\item There is a finite left Kan extension.
\end{enumerate}

We can see that our Proposition~\ref{chasecompleteness} generalizes this by (1) generalizing condition 1 to ``fairness'', (2) discarding conditions 2 and 3 concerning rank, and (3) applying to all cartesian theories, not just ones representing left Kan extensions.

In our work on a categorical treatment of the chase, we have found that it bears a resemblance to the ``small object argument'' of category theory (see~\cite{garner}).  The problem discussed in \cite[Chapter 6]{kelly_enriched} of the reflectivity of categories of models of essentially algebraic theories resembles the problem of the existence of free models of theories on input instances.  The Ehresmann-Kennison theorem~\cite[Chapter 4]{ttt} is also a similar result, proved on finite-limit sketches, which implies our result by an argument in the Appendix.  More work is needed to fully integrate these results into database theory.

\subsection{Implementation and Experiments in CQL} 

In this section we establish the baseline performance of our algorithm by a reference implementation, primarily motivated by the fact that we are not aware of {\it any} benchmarks for {\it any} left Kan algorithms besides our own from previous work~\cite{patrick}.  

The primary optimization of our CQL implementation of our left Kan algorithm is to minimize memory usage by storing cardinalities and lists instead of sets, such that a CQL left Kan state as benchmarked in this paper consists of:
\begin{enumerate}
\item For each $d \in D$, a number $J(d) \geq 0$ representing the cardinality of a set. 
\item For each $d \in D$, a union-find data structure~\cite{Nelson:1980:FDP:322186.322198} based on path-compressed trees $\sim_d \ \subseteq \{ n \ | \ 0 \leq n < J(d) \} \times \{ n \ | \ 0 \leq n < J(d) \}$~\cite{DBLP:books/daglib/0037819}.
\item For each edge $f : d_1 \to d_2 \in D$, a list of length $J(d_1)$, each element of which is a set of numbers $\geq 0$ and $< J(d_2)$.
\item For each $c \in C$, a function $\eta(c) : I(c) \to \{ n \ | \ 0 \leq n < J(F(c)) \}$.
\end{enumerate}
From a theoretical viewpoint, the above state is more precisely considered as modeling a functor to the {\it skeleton}~\cite{BW} of the category of sets.  



Scalability tests, for both time (rows/second) and space (rows/megabyte(MB) of RAM) based on randomly constructed instances of the running example taken on a 13'' 2018 MacBook Air with a 1.6GHZ i5 CPU and 16GB RAM, on Oracle Java 11, are shown in Figure~\ref{fig4}.  Perhaps not as familiar as time throughput, memory throughput, measured here in rows/MB, measures the memory used by the algorithm during its execution as a function of input size; the periodic spikes in Figure~\ref{fig4} are likely due to the ``double when size exceeded'' behavior of the many hash-set and hash-map data structures~\cite{DBLP:books/daglib/0037819} in our Java implementation.  Memory throughput improves as the input gets larger, we believe, because the path-compressed union-find data structure of item two above scales logarithmically in space.  Time throughput (rows / second) gets worse as the input gets larger, we believe, because that same union-find structure scales linearly times logarithmically in time.  The CQL implementation runs the Java garbage collector between rounds, uses ``hash-consed''~\cite{Baader:1998:TR:280474}, tree-based terms, and uses strings for symbol and variable names.  Although performance on random instances may not be representative of performance in practice, our algorithm is fast enough to support multi-gigabyte real-world use cases, such as~\cite{kris}.   

\begin{figure}
\begin{centering}
\includegraphics[width=4.5in]{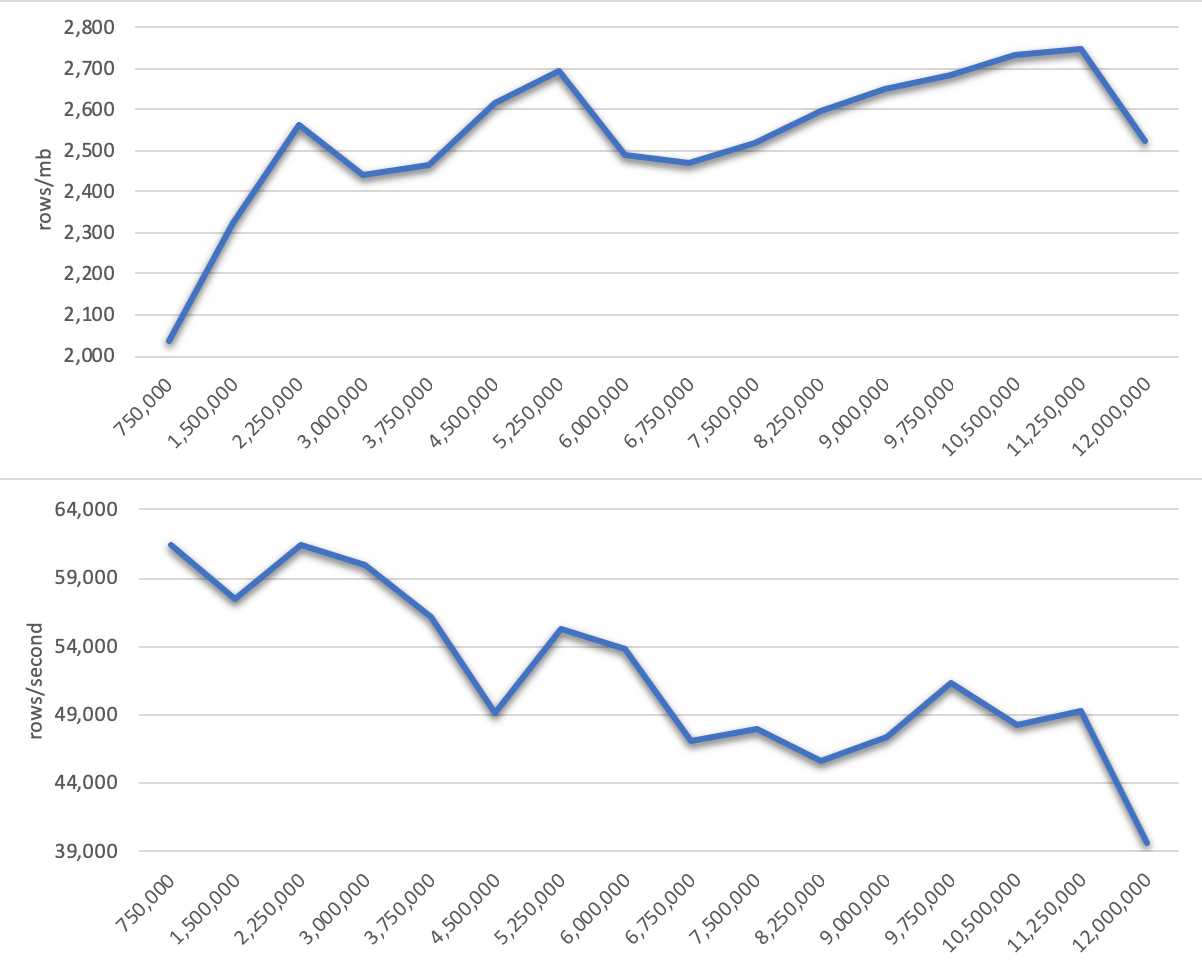}
\caption{Left Kan Chase Throughput, Quotient of a Set}
\label{fig4}
\end{centering}
\end{figure}

To demonstrate the significant speed-up of our algorithm compared to all the other algorithms we are aware of, Figure~\ref{fig5} shows time throughput for the same experiment using three {\it previous} left Kan algorithms: the ``substitute and saturate'' algorithm of~\cite{patrick} using either specialized Knuth-Bendix completion (``monoidal''~\cite{doi:10.1137/0214073}) or congruence closure~\cite{Nelson:1980:FDP:322186.322198} to decide the word problem associated to each category presentation, and the sequential chase-like left Kan algorithm of~\cite{BUSH2003107}.  All the algorithms are implemented in java 11 in the CQL tool, and share micro-level optimization techniques such as hash-consed terms~\cite{Baader:1998:TR:280474}, making the comparison relatively apples-to-apples, with one caveat: only our algorithm from this paper targets the skeleton of the category of sets, but with row counts limited to 12 million in this paper, the three algorithms besides our own that we compare to are all CPU-bound as opposed to memory bound, and so we hypothesize this difference does not impact our peformance analysis below.  

Performance analysis using java's built-in jvisualvm tool indicates that, as expected, the source of the performance benefit in our algorithm stems from the bulk-oriented (table at a time) nature of the actions that make up our rounds.  That is, the algorithms of ~\cite{CARMODY1995459} and ~\cite{BUSH2003107} are innately sequential in that they pick particular rows at a time non-determistically, and so their runtime comes to be dominated by many small sequentual reads and writes to large collections.  In contrast, our algorithm, by performing bulk-oriented collection operations, spends less time on row-level overhead.  This finding is consistent with that of the database theory literature, where parallel versions of the chase are deliberately employed because they are faster than sequential versions~\cite{onet:DFU:2013:4288}.

\begin{figure}
\begin{centering}
\includegraphics[width=4.75in]{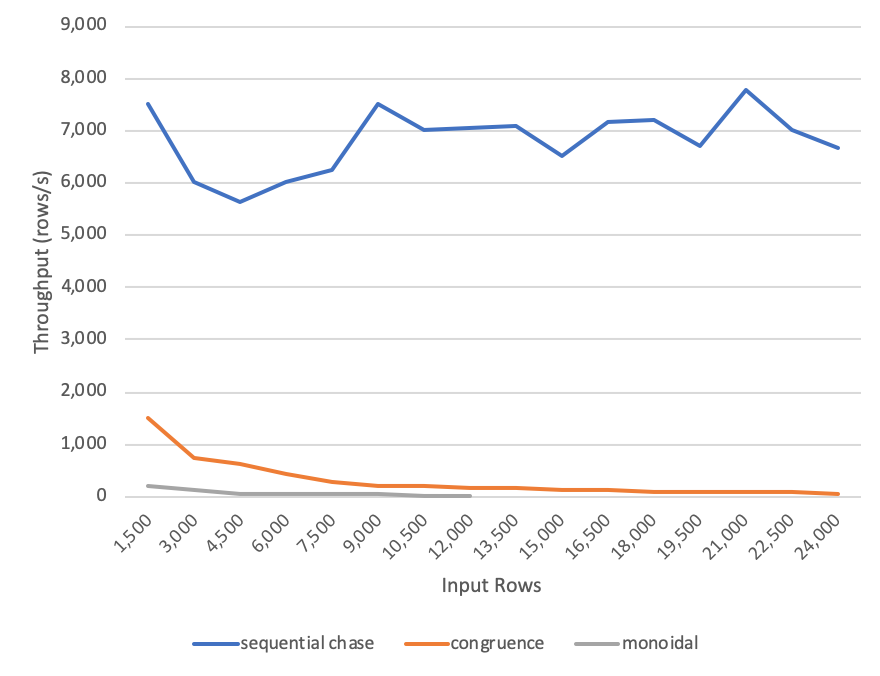}
\caption{Prior Left Kan Throughput, Quotient of a Set}
\label{fig5}
\end{centering}
\end{figure}


\section{Conclusion: Left Kan Extensions and Database Theory}
\label{comp}


We conclude by briefly summarizing how our use of the chase relates to its use in database theory.  Unlike traditional logic and model theory, where models hold a single kind of value (typically drawn from a {\it domain/universe of discourse}), in data migration, models / database instances hold two kinds of values: {\it constants} and {\it labelled nulls}.  Constants have inherent meaning, such as the numerals $1$ or $2$ or a Person's name; labelled nulls, sometimes called {\it Skolem variables}~\cite{Doan:2012:PDI:2401764}, are created when existential quantifiers are encountered during the chase and are distinct from constants and are not meaningful; they are considered up to isomorphism and correspond to the fresh {\sf g(v)} symbols in our left Kan algorithm.  All practical chase engines we know of enforce the constant/null distinction, and when an equality-generating dependency $n = c$ is encountered, where $n$ is a null and $c$ a constant, then $n$ is replaced by $c$, and never vice-versa; if $c = c'$ is encountered, where $c$ and $c'$ are distinct constants, then the chase {\it fails.}  In our general analysis of chase algorithms, we used the traditional model theoretic assumption that the input data was encoded entirely using labelled nulls, allowing us to sidestep the issue of chase failure.  We retain some of the semantic functionality of constants by our concept of ``weakly free model'', or alternatively by the device of uniquely satisfied unary relations (see Lemma~\ref{universalvsweaklyinitial}), yet these ``pseudo-constants'' can be merged, yielding an unfailing chase.  More a complication due to the fact that many categorical constructions cannot distinguish between isomorphic sets than a problem in practice, the consequences of adopting an unfailing, nulls-only chase procedure in the context of data migration are explored in~\cite{wadt,relfound,DBLP:journals/jfp/SchultzW17}.

However, in the particular case of the left Kan extensions, the input data is untouched, as it already satisfies all EDs whose ``backs'' have variables with input sorts.  The input data is copied to the output in step $2$ of Algorithm~\ref{alg:fastkan} and updated there.  Thus it would not affect the operation of Algorithm~\ref{alg:fastkan} if some or all of the input data were constants.  This observation extends to a general lemma.

\begin{lemma}\label{dataexchange}
Consider a {\it data exchange setting}, i.e. a signature $\sigma=(S_s\sqcup S_t,R_s\sqcup R_{st})$ and a theory $\ax=\ax_s\sqcup\ax_{st}$, where
\begin{itemize}
    \item $\sigma_s\coloneqq (S_s,R_s)$ is a signature; and,
    \item every ED in $\ax_s$ is a formula on $(S_s,R_s)$; and,
    \item every ED in $\ax_{st}$ has a conclusion all of whose conjuncts have a variable sorted in $S_t$.
\end{itemize}
Let $I$ be a $\sigma_s$-instance comprised of any combination of constants and labelled nulls (cf. the local definitions in part 3 of Lemma~\ref{universalvsweaklyinitial}), and consider it a $\sigma$-instance by letting $Io=\varnothing$ for sorts $o$ in $S_t$ and relation symbols $o$ in $R_{st}$.

If every input element is a constant, then the weakly free models of $\ax$ on $I$ are exactly the universal models of $\ax$ on $I$, so the standard and categorical core chases (see Section~\ref{core}) will yield the same result.
\end{lemma}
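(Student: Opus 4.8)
The plan is to observe that, in the all-constants case, the data-exchange decomposition $\sigma=(S_s\sqcup S_t,R_s\sqcup R_{st})$, $\ax=\ax_s\sqcup\ax_{st}$ plays no role whatsoever: the assertion is exactly part 3 of Lemma~\ref{universalvsweaklyinitial} applied to the regular theory $\ax$ and the $\sigma$-instance $I$ (the structural hypotheses on $\ax_{st}$ are simply not invoked). So the real content is to reprove that, when every element of $I$ is a constant, universal models of $\ax$ on $I$ are automatically weakly free. One inclusion is free: every weakly free model is universal, since universality is obtained by dropping the requirement $g\circ h=h'$ from the definition of weakly free model, as already noted in the paper. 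Hence it suffices to establish the reverse inclusion.

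For the reverse inclusion, I would take a universal model $(A,h)$ of $\ax$ on $I$ and show it is weakly free. Given any model $A'$ and any morphism $h':I\to A'$, universality supplies a morphism $g:A\to A'$, and the only thing left to verify is $g\circ h=h'$. Here I invoke the local convention from part 3 of Lemma~\ref{universalvsweaklyinitial} that every morphism of instances fixes constants. Since every element $c$ of $I$ is a constant, we have $h(c)=c$ in $A$, $g(c)=c$ in $A'$, and $h'(c)=c$ in $A'$, whence $(g\circ h)(c)=g(c)=c=h'(c)$. As $I$ consists entirely of such elements, the two morphisms $g\circ h$ and $h'$ agree on every element of $I$ and are therefore equal componentwise in each sort. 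Thus the same $g$ already witnesses $(A,h)$ as a weakly free model, and the two notions coincide.

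For the final clause, I would recall from Section~\ref{core} and the summary of new results that the standard (traditional) core chase computes the finite universal model while the categorical core chase computes the finite weakly free model, each characterized up to isomorphism by its respective universal property. Having shown these properties equivalent under the all-constants hypothesis, the two chases target the same class of objects and hence return isomorphic results. The one genuine subtlety, and the step I would be most careful about, is the constant-fixing argument: the entire equivalence is driven by the convention that morphisms fix constants together with the hypothesis that $I$ contains no labelled nulls, so that agreement of $g\circ h$ and $h'$ on constants already forces equality everywhere. Beyond making this bookkeeping precise I anticipate no serious obstacle.
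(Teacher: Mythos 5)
First, a point of calibration: the paper gives no proof of Lemma~\ref{dataexchange} at all --- it is asserted in the conclusion as a consequence of part 3 of Lemma~\ref{universalvsweaklyinitial}, which is itself stated without proof --- so your argument has to stand entirely on its own. Your first half does stand. You are right that the data-exchange decomposition of $\sigma$ and $\ax$ plays no role in the displayed equivalence, and your proof of the nontrivial inclusion is the correct one: under the locally adopted convention every morphism fixes constants, so $h$, $h'$, and $g$ all act as the identity on the elements of $I$, giving $(g\circ h)(c)=g(c)=c=h'(c)$ for every element $c$ of $I$; hence the equation $g\circ h=h'$ that separates ``weakly free'' from ``universal'' holds automatically, and the same $g$ witnesses weak freeness.

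The gap is in your final paragraph, in the step ``each characterized up to isomorphism by its respective universal property \dots\ the two chases target the same class of objects and hence return isomorphic results.'' Universal models and weakly free models are \emph{weak} notions and are not determined up to isomorphism by their defining properties; this is exactly why the paper distinguishes them from free models, and why Lemma~\ref{corechase} is phrased with ``a universal (weakly free) model,'' never ``the.'' Concretely, take the signature with no relation symbols, $\ax=\varnothing$, and $I=\{c\}$ with $c$ a constant: both $\{c\}$ and $\{c,{\sf n}\}$ (with ${\sf n}$ a labelled null retracting onto $c$) are universal, hence by your first half also weakly free, yet they are not isomorphic. So membership in a common class does not yield isomorphic chase outputs. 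What actually pins the outputs down is that both chases halt on \emph{cores}: the standard core chase returns a model that is core with respect to constant-fixing morphisms, and the categorical core chase returns a model that is core under the map from $I$ (see Section~\ref{core}). Under the all-constants hypothesis these two notions of core coincide, since the constants occurring in any instance of the chase sequence are exactly the images of the input elements, so a retraction fixes all constants iff it fixes the image of $I$; consequently the two algorithms perform identical core steps and indeed identical runs. Alternatively, once you know both outputs are finite, core in this common sense, and (by weak freeness) admit morphisms to each other commuting with the maps from $I$, Lemma~\ref{bisurjective} supplies the isomorphism. Either repair completes the argument; without one of them, the final ``hence'' does not follow.
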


Thus when working in data exchange settings, ``weakly free model'' semantics is in no way less expressive than ``universal model'' semantics, yet it works better with category theory.  Also, restricting to cartesian theories (or making existing theories cartesian by replacing $\exists$ with $\exists!$) allows for completeness of the standard and parallel chase (given mild assumptions, see Proposition~\ref{chasecompleteness}) while making contact with the large body of category theory on initiality, adjunctions, reflective subcategories, finite-limit sketches, and essentially algebraic theories, as is foreshadowed by results on the so-called ``Skolem chase''~\cite{Benedikt:2017:BC:3034786.3034796}.

On the other hand, category theorists who want to consider databases would do well to pay more attention to the ``weak'' variants of notions, such as ``weak factorization systems''~\cite{nlab:weak_factorization_system,garner}, ``weakly reflective subcategories''~\cite{weakly_reflective}, and ``weak adjoints''~\cite{kainen_1971}.

A deeper integration between database theory and category theory can be reached if such conceptual shifts are made on both sides.

{\bf Intellectual Property.}  This paper is the subject of United States Letters Patent No. 11,256,672 granted February 22, 2022.

\bibliographystyle{spmpsci} 
\addcontentsline{toc}{section}{References}
 \bibliography{fastlk}

\begin{thebibliography}{10}
\providecommand{\url}[1]{{#1}}
\providecommand{\urlprefix}{URL }
\expandafter\ifx\csname urlstyle\endcsname\relax
  \providecommand{\doi}[1]{DOI~\discretionary{}{}{}#1}\else
  \providecommand{\doi}{DOI~\discretionary{}{}{}\begingroup
  \urlstyle{rm}\Url}\fi

\bibitem{abiteboul_hull_vianu_1996}
Abiteboul, S., Hull, R., Vianu, V.: Foundations of databases.
\newblock Addison-Wesley (1996)

\bibitem{adamek_rosicky_1994}
Adamek, J., Rosicky, J.: Locally Presentable and Accessible Categories.
\newblock London Mathematical Society Lecture Note Series. Cambridge University
  Press (1994).
\newblock \doi{10.1017/CBO9780511600579}

\bibitem{Baader:1998:TR:280474}
Baader, F., Nipkow, T.: Term Rewriting and All That.
\newblock Cambridge University Press, New York, NY, USA (1998)

\bibitem{BW}
Barr, M., Wells, C.: Category Theory for Computing Science.
\newblock Prentice-Hall, Inc., Upper Saddle River, NJ, USA (1990)

\bibitem{ttt}
Barr, M., Wells, C.: Toposes, Triples and Theories (2002)

\bibitem{bauslaugh}
Bauslaugh, B.L.: Homomorphisms of infinite directed graphs.
\newblock Ph.D. thesis, Simon Fraser University (1994).
\newblock
  \urlprefix\url{http://oatd.org/oatd/record?record=oai\:summit.sfu.ca\:6543&q=bauslaugh}

\bibitem{Benedikt:2017:BC:3034786.3034796}
Benedikt, M., Konstantinidis, G., Mecca, G., Motik, B., Papotti, P., Santoro,
  D., Tsamoura, E.: Benchmarking the chase.
\newblock In: Proceedings of the 36th ACM SIGMOD-SIGACT-SIGAI Symposium on
  Principles of Database Systems, PODS '17, pp. 37--52. ACM, New York, NY, USA
  (2017)

\bibitem{kris}
Brown, K.S., Spivak, D.I., Wisnesky, R.: Categorical data integration for
  computational science.
\newblock Computational Materials Science \textbf{164}, 127 -- 132 (2019)

\bibitem{BUSH2003107}
Bush, M.R., Leeming, M., Walters, R.F.C.: Computing left {K}an extensions.
\newblock J. Symb. Comput. \textbf{35}(2), 107--126 (2003)

\bibitem{CARMODY1995459}
Carmody, S., Leeming, M., Walters, R.: The {T}odd-{C}oxeter procedure and left
  {K}an extensions.
\newblock J. Symb. Comput. \textbf{19}(5), 459--488 (1995)

\bibitem{10.1007/BFb0084213}
Carmody, S., Walters, R.F.C.: Computing quotients of actions of a free
  category.
\newblock In: A.~Carboni, M.C. Pedicchio, G.~Rosolini (eds.) Category Theory,
  pp. 63--78. Springer Berlin Heidelberg, Berlin, Heidelberg (1991)

\bibitem{weakly_reflective}
Casacuberta, C., Gutiérrez, J.J., Rosický, J.: Are all localizing
  subcategories of stable homotopy categories coreflective?
\newblock Advances in Mathematics \textbf{252}, 158–184 (2014).
\newblock \doi{10.1016/j.aim.2013.10.013}

\bibitem{Deutsch:2008:CR:1376916.1376938}
Deutsch, A., Nash, A., Remmel, J.: The chase revisited.
\newblock In: Proceedings of the Twenty-seventh ACM SIGMOD-SIGACT-SIGART
  Symposium on Principles of Database Systems, PODS '08, pp. 149--158. ACM, New
  York, NY, USA (2008)

\bibitem{Doan:2012:PDI:2401764}
Doan, A., Halevy, A., Ives, Z.: Principles of Data Integration, 1st edn.
\newblock Morgan Kaufmann Publishers Inc., San Francisco, CA, USA (2012)

\bibitem{core}
Fagin, R., Kolaitis, P.G., Popa, L.: Data exchange: Getting to the core.
\newblock ACM Trans. Database Syst. \textbf{30}(1), 174–210 (2005).
\newblock \doi{10.1145/1061318.1061323}

\bibitem{garner}
Garner, R.: Understanding the small object argument.
\newblock Applied Categorical Structures \textbf{20} (2008).
\newblock \doi{10.1007/s10485-008-9126-7}

\bibitem{GARNER20161}
Garner, R., Shulman, M.: Enriched categories as a free cocompletion.
\newblock Advances in Mathematics \textbf{289}, 1 -- 94 (2016)

\bibitem{Haas:2005:CGU:1066157.1066252}
Haas, L.M., Hern\'{a}ndez, M.A., Ho, H., Popa, L., Roth, M.: Clio grows up:
  From research prototype to industrial tool.
\newblock In: Proceedings of the 2005 ACM SIGMOD International Conference on
  Management of Data, SIGMOD '05, pp. 805--810. ACM, New York, NY, USA (2005)

\bibitem{johnstone_2002}
Johnstone, P.T.: Sketches of an elephant: a topos theory compendium, vol.~2.
\newblock Clarendon Press (2002)

\bibitem{kainen_1971}
Kainen, P.C.: Weak adjoint functors.
\newblock Mathematische Zeitschrift \textbf{122}(1), 1–9 (1971).
\newblock \doi{10.1007/bf01113560}

\bibitem{doi:10.1137/0214073}
Kapur, D., Narendran, P.: The {K}nuth-{B}endix completion procedure and {T}hue
  systems.
\newblock {SIAM }Journal on Computing \textbf{14}(4) (1985)

\bibitem{kelly_enriched}
KELLY, G.: The basic concepts of enriched category theory.
\newblock Reprints in Theory and Applications of Categories [electronic only]
  \textbf{2005}(10) (2005)

\bibitem{Nelson:1980:FDP:322186.322198}
Nelson, G., Oppen, D.C.: Fast decision procedures based on congruence closure.
\newblock J. ACM \textbf{27}(2), 356--364 (1980)

\bibitem{cograph}
{nLab authors}: cograph of a functor.
\newblock \url{http://ncatlab.org/nlab/show/cograph%20of%20a%20functor} (2021).
\newblock
  \href{http://ncatlab.org/nlab/revision/cograph%20of%20a%20functor/16}{Revision
  16}

\bibitem{nlab:essentially_algebraic_theory}
{nLab authors}: essentially algebraic theory.
\newblock \url{http://ncatlab.org/nlab/show/essentially%20algebraic%20theory}
  (2021).
\newblock
  \href{http://ncatlab.org/nlab/revision/essentially%20algebraic%20theory/22}{Revision
  22}

\bibitem{nlab:weak_factorization_system}
{nLab authors}: weak factorization system.
\newblock \url{http://ncatlab.org/nlab/show/weak%20factorization%20system}
  (2022)

\bibitem{onet:DFU:2013:4288}
Onet, A.: {The Chase Procedure and its Applications in Data Exchange}.
\newblock In: P.G. Kolaitis, M.~Lenzerini, N.~Schweikardt (eds.) Data Exchange,
  Integration, and Streams, \emph{Dagstuhl Follow-Ups}, vol.~5, pp. 1--37.
  Schloss Dagstuhl--Leibniz-Zentrum fuer Informatik, Dagstuhl, Germany (2013).
\newblock \doi{10.4230/DFU.Vol5.10452.1}.
\newblock \urlprefix\url{http://drops.dagstuhl.de/opus/volltexte/2013/4288}

\bibitem{PALMGREN2007314}
Palmgren, E., Vickers, S.: Partial horn logic and cartesian categories.
\newblock Annals of Pure and Applied Logic \textbf{145}(3), 314--353 (2007).
\newblock \doi{https://doi.org/10.1016/j.apal.2006.10.001}.
\newblock
  \urlprefix\url{https://www.sciencedirect.com/science/article/pii/S0168007206001229}

\bibitem{relolog}
Patterson, E.: Knowledge representation in bicategories of relations.
\newblock \url{https://arxiv.org/abs/1706.00526}  (2017)

\bibitem{reyes_reyes_zolfaghari_2004}
Reyes, M.L.P., Reyes, G.E., Zolfaghari, H.: Generic figures and their glueings:
  a constructive approach to functor categories.
\newblock Polimetrica (2004)

\bibitem{riehl}
Riehl, E.: Category theory in context.
\newblock Dover Publication Inc. (2016)

\bibitem{patrick}
Schultz, P., Spivak, D.I., Vasilakopoulou, C., Wisnesky, R.: Algebraic
  databases.
\newblock Theory and Applications of Categories \textbf{32}(16), 547--619
  (2017)

\bibitem{wadt}
Schultz, P., Spivak, D.I., Wisnesky, R.: Algebraic model management: A survey.
\newblock In: P.~James, M.~Roggenbach (eds.) Recent Trends in Algebraic
  Development Techniques, pp. 56--69. Springer International Publishing, Cham
  (2017)

\bibitem{DBLP:journals/jfp/SchultzW17}
Schultz, P., Wisnesky, R.: Algebraic data integration.
\newblock Journal of Functional Programming \textbf{27}, e24 (2017)

\bibitem{DBLP:books/daglib/0037819}
Sedgewick, R., Wayne, K.: Algorithms, 4th edn.
\newblock Addison-Wesley Professional (2011)

\bibitem{spivak2014}
Spivak, D.I.: Database queries and constraints via lifting problems.
\newblock Mathematical Structures in Computer Science \textbf{24}(6), e240602
  (2014)

\bibitem{relfound}
Spivak, D.I., Wisnesky, R.: Relational foundations for functorial data
  migration.
\newblock In: Proceedings of the 15th Symposium on Database Programming
  Languages, DBPL 2015, pp. 21--28. ACM, New York, NY, USA (2015)

\bibitem{Wells94sketches:outline}
Wells, C.: Sketches: Outline with references.
\newblock In: Dept. of Computer Science, Katholieke Universiteit Leuven (1994)

\end{thebibliography}
 
\section{Appendix}
 
\subsection{The Core Chase}
\label{core}

The purpose of this section is to show that the core chase could be used for computing left Kan extensions, since it too computes finite free models of cartesian theories, and it is complete (Lemma~\ref{corechase}).

The {\it core chase}~\cite{core} is a canonical (determined up to isomorphism) chase algorithm which is intractable (exponential time on each step~\cite{core}) but easy to work with in theory.

We first extend the notion of ``core'' in database theory.

\begin{definition}
Let $C$ be a category.  An object $c\in C$ is called {\it core} if whenever there is an object $c'\in C$, a morphism $c\to c'$, and a monomorphism $m:c'\hookrightarrow c$, we have that $i$ is an isomorphism.

A {\it core of} an object $c$ is a core object $c'$ with a monomorphism $m:c'\hookrightarrow c$ and a morphism $p:c\to c'$.

If $\sigma$ is a signature and $C=\sigma {\sf -Inst}$, then the definition above reduces to the usual definition of ``core''.

Now suppose that $I_0$ is an instance on $\sigma$ and $C=I_0/{\sf id}_{\sigma {\sf -Inst}}$.  We say that an instance $I$ is {\it core under $i:I_0\to I$} if the object $(I,i:I_0\to I)\in I_0/{\sf id}_{\sigma {\sf -Inst}}$ is core.  Explicitly, this means that whenever there is a subinstance $I'\subseteq I$ including the image of $i$ and a morphism $p:I\to I'$ such that $p\circ i = i$, we have that $I'=I$.

We also define in this case a {\it core of an instance $I$ under $i:I_0\to I$} as a core of $(I,i:I_0\to I)\in I_0/{\sf id}_{\sigma {\sf -Inst}}$. Explicitly, this is a subinstance $I'\subseteq I$ including the image of $i$ and a morphism $p:I\to I'$ such that $p\circ i = i$.

(Working with cores ``under $i$'' in this way is equivalent to considering the images of $i$ as constants.)
\end{definition}

Any universal statement about cores of instances over morphisms specializes to a statement about cores, since $\varnothing/{\sf id}_{\sigma {\sf -Inst}}\cong \sigma {\sf -Inst}$, where $\varnothing$ is the empty instance.

\begin{lemma}\label{bisurjective} If an instance $I$ is core under $i:I_0\to I$ and an instance $J$ is core under $j:I_0\to I$, there are morphisms $f:I\to J$ and $g:J\to I$ with $f\circ i = j$ and $g\circ j = i$, and $I$ is finite, then $f$ and $g$ are isomorphisms.
\end{lemma}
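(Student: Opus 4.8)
The plan is to apply the hypothesis ``core under'' \emph{twice}: once to upgrade the endomorphism $g\circ f$ to an isomorphism using finiteness of $I$, and once, via a carefully chosen self-map of $J$, to force $f$ to be surjective. (Throughout I read the typed statement as $j:I_0\to J$, so that ``$J$ is core under $j$'' makes sense.)

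First I would study the endomorphism $e:=g\circ f:I\to I$. Since $f\circ i=j$ and $g\circ j=i$, it fixes the base point: $e\circ i=g\circ f\circ i=g\circ j=i$. Hence $\mathrm{im}(i)\subseteq\mathrm{im}(e)$, and the corestriction $\bar e:I\to\mathrm{im}(e)$ is a morphism into a subinstance $I':=\mathrm{im}(e)$ containing the image of $i$; it satisfies $\bar e\circ i=i$ as maps into $I$, because the inclusion $\mathrm{im}(e)\hookrightarrow I$ composed with $\bar e$ is $e$. By the hypothesis that $I$ is core under $i$, this forces $\mathrm{im}(e)=I$, i.e.\ $e$ is surjective on every sort and every relation. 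Because $I$ is finite, each surjection $e_s$ is a bijection; surjectivity on relations together with bijectivity on sorts then makes $e$ reflect every relation, so $e=g\circ f$ is an isomorphism of $\sigma$-instances. Here I use that an instance morphism is determined by its sort-components and that the sortwise inverse is again a morphism precisely because $e$ is surjective on relations.

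With $e$ invertible, $f$ is a split monomorphism via the left inverse $e^{-1}\circ g$, hence injective on sorts and relations. The only genuinely clever step is to prove $f$ surjective, and for this I would use the core property of $J$ through the self-map $\rho:=f\circ e^{-1}\circ g:J\to J$. Since $e\circ i=i$ gives $e^{-1}\circ i=i$, one checks $\rho\circ j=f\circ e^{-1}\circ g\circ j=f\circ e^{-1}\circ i=f\circ i=j$, so $\rho$ fixes $j$. Moreover $\rho$ factors through the subinstance $J':=f(I)\subseteq J$, which contains $\mathrm{im}(j)=\mathrm{im}(f\circ i)=f(\mathrm{im}(i))$. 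Thus $J'$ together with the corestriction of $\rho$ meets exactly the hypotheses in ``$J$ is core under $j$'', whence $J'=J$; that is, $f$ is surjective on sorts and relations.

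Finally I would assemble the pieces: $f$ is injective and surjective on every sort and relation, so (again using that morphisms are determined sortwise and that the sortwise inverse is a morphism because $f$ is surjective on relations) $f$ is an isomorphism, and then $g=e\circ f^{-1}$ is a composite of isomorphisms, hence an isomorphism. The main obstacle I anticipate is not any single computation but the bookkeeping: checking that ``core under'' is applied only to honest subinstances containing the relevant base image and to morphisms genuinely fixing the base point, and verifying surjectivity and injectivity \emph{on relations as well as sorts}, so that one concludes a true isomorphism of $\sigma$-instances rather than a mere bijection of underlying sets. The construction of $\rho$, engineered to fix $j$ while landing inside $f(I)$, is the one nonobvious ingredient.
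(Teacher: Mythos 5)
Your proof is correct and follows essentially the same route as the paper's: both first use the core property of $I$ under $i$ plus finiteness to upgrade the endomorphism $g\circ f$ (which fixes $i$) to an isomorphism, and then apply the core property of $J$ under $j$ to a self-map fixing $j$ whose image lies in the subinstance $f(I)$. The only difference is cosmetic: the paper uses $f\circ g$ where you use $\rho=f\circ e^{-1}\circ g$, and your variant concludes $f(I)=J$ (hence $f$ surjective) directly, sparing the paper's intermediate steps that $J$ is finite and that $f\circ g$ is an isomorphism.
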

\begin{proof}
The morphism $g\circ f: I\to J\to g(J)$ satisfies $g\circ f\circ i = i$, so $g(J)=I$, and $g\circ f$ is surjective.  Since $I$ is finite, $g\circ f$ must be an isomorphism, so $f$ must be injective and $g$ must be surjective.  Similarly, $f\circ g:J\to I\to g(I)$ must be surjective, so $J$ is finite.  Thus $f\circ g$ is an isomorphism, so both $f$ and $g$ are.
\qed \end{proof}

\begin{lemma}
Every finite instance $I_0$ has a core under every morphism $i:I\to I_0$, and all of its cores under the same morphism $i$ are isomorphic.  Thus, we can speak of ``the core'' of $I_0$ under $i$, writing ${\sf core}_i(I_0)$.  
\end{lemma}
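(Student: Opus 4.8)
The statement splits into existence and uniqueness, and the plan is to obtain uniqueness for free from Lemma~\ref{bisurjective} and to prove existence by minimising over the finitely many endomorphisms of $I_0$ that fix $i$. For uniqueness, let $J_1,J_2\subseteq I_0$ be two cores of $I_0$ under $i\colon I\to I_0$; each contains the image of $i$, so $i$ factors as $\iota_k\circ i_k$ through the inclusion $\iota_k\colon J_k\hookrightarrow I_0$, and each $J_k$ is core under its corestriction $i_k$ and carries a morphism $p_k\colon I_0\to J_k$ with $p_k\circ i=i_k$. I would set $f:=p_2\circ\iota_1\colon J_1\to J_2$ and $g:=p_1\circ\iota_2\colon J_2\to J_1$ and verify the triangles $f\circ i_1=p_2\circ i=i_2$ and $g\circ i_2=p_1\circ i=i_1$. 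Both $J_k$ are finite, being subinstances of the finite $I_0$, so Lemma~\ref{bisurjective} applies and makes $f,g$ mutually inverse isomorphisms over $I$; hence the core is unique up to isomorphism and the notation ${\sf core}_i(I_0)$ is justified.

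For existence, the plan is to note that a subinstance $J\subseteq I_0$ containing the image of $i$ equipped with a morphism $p\colon I_0\to J$ satisfying $p\circ i=i$ is produced by any endomorphism $e\colon I_0\to I_0$ with $e\circ i=i$: take $J:=e(I_0)$, the image subinstance (with $Jr$ the image of $I_0r$), which contains the image of $i$ since $e\circ i=i$ forces $\operatorname{im}(i)=e(\operatorname{im}(i))\subseteq J$, and let $p$ be the corestriction of $e$, so that $p$ is surjective and $p\circ i=i$. Since $I_0$ is finite there are only finitely many such $e$, so I would choose one whose image $J$ has minimal total cardinality; this $J$, together with its $p$ and inclusion $\iota$, is the proposed core.

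It then remains to check that the minimal $J$ is a core object, i.e.\ core under $i_J:=p\circ i$. Arguing by contradiction, suppose there were a proper subinstance $J'\subsetneq J$ containing the image of $i_J$ and a morphism $q\colon J\to J'$ with $q\circ i_J=i_J$. I would assemble the endomorphism $e':=\iota_{J'}\circ q\circ p\colon I_0\to I_0$, check $e'\circ i=\iota_{J'}\circ q\circ i_J=i$, and observe that its image satisfies $e'(I_0)=q(p(I_0))=q(J)\subseteq J'\subsetneq J$, so $|e'(I_0)|<|J|$. As $e'$ again fixes $i$, this contradicts the minimality of $|J|$; therefore no such $J'$ exists and $J$ is core, completing existence.

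The main obstacle is precisely this last step: guaranteeing that the minimal-image endomorphism yields a genuine core rather than merely some retract-like subinstance. The crux is to repackage any hypothetical strictly smaller self-map $q$ of $J$ into an endomorphism $e'$ of the \emph{whole} instance $I_0$, by sandwiching $q$ between $p$ and the inclusion, whose image is strictly inside $J$, so that minimality can be invoked. A secondary point requiring care is to treat the image of an instance morphism as an honest subinstance, so that $p$ is a well-defined morphism onto $J$ and $\iota$ is a monomorphism, as the categorical definition of core demands.
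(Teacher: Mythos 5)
Your proof is correct in outline, and it splits into one half that matches the paper and one half that genuinely differs. The uniqueness argument is the paper's own: the paper takes two cores $p:I_0\to J$, $q:I_0\to K$, forms the restrictions $q|_J$ and $p|_K$, and applies Lemma~\ref{bisurjective}; your $f=p_2\circ\iota_1$ and $g=p_1\circ\iota_2$ are exactly these restrictions, with the triangle identities spelled out rather than left implicit. (One small overstatement: Lemma~\ref{bisurjective} concludes that $f$ and $g$ are each isomorphisms, not that they are mutually inverse; this does not affect the conclusion that the cores are isomorphic.) For existence, the paper descends iteratively: if $I_0$ is not already core under $i$, pick a proper retract $I_1\subsetneq I_0$ fixing $i$, repeat, and let finiteness force the chain $I_0\supsetneq I_1\supsetneq\cdots$ to terminate; the composite of the retractions then witnesses $I_n$ as a core. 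You instead make a single global extremal choice: among the finitely many endomorphisms $e$ of $I_0$ with $e\circ i=i$ (nonempty, as ${\sf id}_{I_0}$ qualifies), take one with minimal image, and show that image is core by the sandwich $e'=\iota_{J'}\circ q\circ p$, whose image lands strictly inside $J$. Both are extremal uses of finiteness; the paper's chain argument never needs to discuss images of morphisms, while yours identifies the core in one step as a minimal retract, which is the standard argument in the graph- and database-theoretic literature on cores.

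One step needs tightening: the inference ``$e'(I_0)\subseteq J'\subsetneq J$, so $|e'(I_0)|<|J|$.'' If your ``total cardinality'' counts only \emph{elements}, this fails: by the paper's definition of subinstance, a proper subinstance $J'\subsetneq J$ may have $J's=Js$ for every sort $s$ and lose only relation tuples ($J'r\subsetneq Jr$ for some $r$), in which case containment in $J'$ gives no strict drop in element count and minimality is not contradicted. The repair is cheap: minimize the quantity $\sum_s|Js|+\sum_r|Jr|$ (elements \emph{and} tuples), or minimize $e(I_0)$ with respect to the subinstance order itself, which is well-founded on a finite instance; with either choice, $e'(I_0)\subseteq J'\subsetneq J$ genuinely contradicts minimality, since a proper subinstance always has strictly smaller total size. (Alternatively, keep element-counting and add a second case: if $q$ does not drop the element count, it is a bijection on elements, hence injective on tuples, and $q(Jr)\subseteq J'r\subseteq Jr$ with $|q(Jr)|=|Jr|$ forces $J'=J$, again a contradiction.) With that one clarification, your existence argument is complete and valid.
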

\begin{proof}
We first construct a core of $I_0$ under $i$.  If $I_0$ is already core, we are done.  Otherwise, there is a proper subinstance $I_1\subseteq I_0$ including the image of $i$ and a morphism $p_1:I_0\to I_1$ with $p_1\circ i = i$.  If $I_1$ is core, we are done.  Otherwise, there is a proper subinstance $I_2\subseteq I_1$ including the image of $i$ and $p_2:I_1\to I_2$ with $p_2\circ i=i$.  Since $I_0$ is finite, this sequence must terminate, resulting in a core subinstance $I_n$ under $i$ and a composite morphism $p$ of the path $I_0\To{p_1} I_1\To{p_2} I_2\To{p_3}\cdots \To{p_{n-1}} I_n$.  We have $p\circ i = i$, so $I_n$ is a core of $I_0$ under $i$.

Now consider two cores $p:I_0\to J$ and $q:I_0\to K$ of $I_0$ under $i$.  Let $f=q|_J:J\to K$ be the restriction of $q$ to $J$. and let $g=p|_K:K\to J$ be the restriction of $p$ to $K$.  By Lemma~\ref{bisurjective}, $f$ is an isomorphism.
\qed \end{proof}

The perceptive reader might have realized that the last lemma constructs merely an isomorphism of core instances, not an isomorphism of core instances along with their respective morphisms $p:I\to{\sf core}_i(I)$.  It turns out that the ordered pair $({\sf core}_i(I),p:I\to {\sf core}_i(I))$ is not defined up to isomorphism.

\begin{example}
Consider the uni-typed signature $\sigma$ with a single binary relation symbol $r$.  The instance $I$ with elements $\{{\sf foo},{\sf bar},{\sf baz}\}$ and $Ir=\{({\sf foo},{\sf bar})\}$ has core $J$ with elements $\{{\sf foo},{\sf bar}\}$ and $Jr=\{({\sf foo},{\sf bar})\}$.  We exhibit two distinct morphisms $I\to J$.  Let $p({\sf foo})={\sf foo},p({\sf bar})={\sf bar},p({\sf baz})={\sf foo}$, and $q({\sf foo})={\sf foo},q({\sf bar})={\sf bar},q({\sf baz})={\sf bar}$.  Then there are no isomorphisms $f$ and $g$ such that the following square commutes:
\[ \xymatrix{
I \ar[r]^f \ar[d]_p & I \ar[d]^q \\
J \ar[r]^g & J
} \]
Indeed, the only isomorphisms $f$ and $g$ are identities, so the commutativity of the square reduces to the false claim that $p=q$.
\end{example}

The assumption that $I$ is finite is also required for existence and uniqueness of cores.

\begin{example}
{\bf Existence fails in the infinite case:} Consider the uni-typed signature $\sigma$ with one binary relation symbol $r$ and the instance $N$ whose elements are the natural numbers $\mathbb{N}$ and where $Nr=\{(m,n)\mid m+1=n\}$.  This instance does not have a core.
\end{example}

\begin{example}
{\bf Uniqueness fails in the infinite case:} Consider the uni-typed signature $\sigma$ with one binary relation symbol $r$ and one unary relation symbol $a$.  Consider the instance $X$ whose elements are $\{n,n'\mid n\in\mathbb{Z}\}$ and where $Xr=\{(n,n+1),(n',(n+1)')\mid n\in\mathbb{Z}\}$ and $Xa=\{n,n'\mid n\geq 2\}\cup \{0'\}$.  Then the unprimed and primed components of $X$ are nonisomorphic cores of $X$.  See \cite[Theorem 29]{bauslaugh} for an example on the signature $\{r\}$.
\end{example}

\begin{definition}
The (standard) {\it core $\ax$-chase}~\cite{Deutsch:2008:CR:1376916.1376938} is a chase algorithm in which two steps alternate:
\begin{description}
\item[{\bf Parallel chase step}] $I_n\To{\mathcal{C}({\mathcal{F}_n})}I'_{n+1}$, where $\mathcal{F}_n$ is the set of all triggers in $I_n$ of EDs in $\ax$.
\item[{\bf Core step}] $I'_{n+1}\To{p_{n+1}} {\sf core}(I'_{n+1})\eqqcolon I_{n+1}$.
\end{description}

We also consider the {\it categorical core $\ax$-chase}, which is like the standard core chase except that it uses the following variant of the core step instead:
\begin{description}
\item[{\bf Categorical core step}] $I'_{n+1}\To{p_{n+1}} {\sf core}_{i'_{n+1}}(I'_{n+1})\eqqcolon I_{n+1}$, where $i'_{n+1}$ is the composite of the path $I_0\To{\mathcal{C}({\mathcal{F}_0})}I'_1\To{p_1}I_1\to\cdots\to I'_{n+1}$.
\end{description}

Intuitively, the categorical core chase differs from the standard in that it considers the images of the elements of the input instance as constants when doing core steps, but not when doing parallel chase steps.  The rationale for this is that we want to merge these input elements only when egds force us to, not just to make something core.

In either core chase, when the result of the core step is a model, the chase halts.  Note that both core chases are equivalent when $I_0=\varnothing$.  Just as in the standard and parallel chase, we use $\mathcal{C}(k,n)$ to denote the composite of the path $I_k\to\cdots\to I_n$.
\end{definition}

It is known~\cite[Theorem 7]{Deutsch:2008:CR:1376916.1376938} that given a regular theory $\ax$ and an input instance $I$, there is a finite universal model of $\ax$ on $I$ iff the core chase terminates and yields this model.  However, the core chase fails to compute finite weakly free models.  For example, take the uni-typed signature with no relation symbols, $\ax=\varnothing$, and $I_0=\{{\sf foo},{\sf bar}\}$.  Then the core chase terminates in one step: $I_0\To{p} I_1\coloneqq {\sf core}(I)=\{{\sf foo}\}$, and this is not weakly free, e.g. there is no morphism $f:I_1\to I_0$ such that $f\circ p = {\sf id}_{I_0}$.  The categorical chase computes weakly free models instead of universal models, but at the price of terminating on a strictly smaller class of inputs:  there are inputs (see Section~\ref{initweaklyinit}) which have finite universal models but all of whose weakly free models are infinite.

\begin{lemma}\label{corechase}
Let $\ax$ be a regular theory and consider a standard (categorical) core $\ax$-chase sequence $I_0\To{\mathcal{C}_0} I'_1\to I_1\to \cdots$.
\begin{enumerate}
    \item If this sequence terminates, then it computes a universal (weakly free) model of $\ax$ on $I_0$.  If we are using the categorical core chase and $\ax$ is cartesian, then we compute a free model of $\ax$ on $I_0$.
    \item If there is a finite universal (weakly free) model of $\ax$ on $I_0$, then this sequence terminates with a finite universal (weakly free) model of $\ax$ on $I_0$.
\end{enumerate}
\end{lemma}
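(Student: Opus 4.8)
The plan is to treat the two variants in parallel, proving the standard/``universal'' claims and the categorical/``weakly free'' claims by the same bookkeeping, and to cite \cite{Deutsch:2008:CR:1376916.1376938} for the portions concerning the standard core chase and universal models. Throughout I assume $\ax$ and $I_0$ are finite, so that every instance in the sequence stays finite (a full parallel chase step of a finite instance under a finite theory is finite, and cores of finite instances exist); this is what makes the core steps well-defined. The single structural observation driving everything is that the sequence alternates two kinds of maps with complementary behaviour: a full parallel chase step $I_n\to I'_{n+1}$ lifts morphisms \emph{forwards} into any model (Lemma~\ref{chasecomputesweaklyinit}), while a core step $I'_{n+1}\to I_{n+1}$ comes equipped with a monomorphism $m_{n+1}\colon I_{n+1}\hookrightarrow I'_{n+1}$ along which morphisms out of $I'_{n+1}$ restrict \emph{backwards}.

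For part 1, suppose first that the sequence terminates at a model $M=I_N$ with composite $h\colon I_0\to M$. Given any model $A$ and $a\colon I_0\to A$, I would push $a$ along the sequence: at each chase step lift the current map $I_k\to A$ to $I'_{k+1}\to A$ via Lemma~\ref{chasecomputesweaklyinit}, and at each core step precompose with $m_{k+1}$ to obtain $I_{k+1}\to A$. For the standard chase this uses only existence of the lift, yielding a universal model. For the categorical chase the core step is taken \emph{under} the composite $I_0\to I'_{n+1}$, so $m_{n+1}$ fixes the image of $I_0$; tracking this shows every produced map commutes with $h$, yielding a weakly free model. When $\ax$ is cartesian I would upgrade weak freeness to freeness by proving uniqueness of $g$ exactly as in Proposition~\ref{chasecomputesinit}: given $p,q\colon M\to A$ with $p\circ h=q\circ h$, set $p_X=p\circ c_X$ and $q_X=q\circ c_X$ for each instance $X$ in the sequence (with $c_X\colon X\to M$) and induct forwards. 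Agreement transports across a chase step by the pushout together with the cartesian uniqueness egds (Lemma~\ref{satisfiesED}), and across a core step because the core map is a surjection, hence an epimorphism, which may be cancelled.

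For part 2, the standard/universal direction is the cited characterization of \cite[Theorem~7]{Deutsch:2008:CR:1376916.1376938}. The new content is the categorical/weakly-free direction, which I expect to be the main obstacle. Let $(B,b)$ be a finite weakly free model; replacing $B$ by its core under $b$ (still finite and weakly free, since any morphism out of $B$ restricts along the inclusion of the core) I may assume $B$ is core under $b$. Run the categorical core chase and let $I$ be the colimit of the whole sequence, with legs $l_n$. I would first show $(I,l_0)$ is a weakly free model of $\ax$ on $I_0$ by an argument parallel to Lemmas~\ref{coollemma} and~\ref{chasecomputesweaklyinit}: each round fires \emph{all} current triggers, and a core step is a surjection that carries the witness of a fired trigger forward, so every trigger of $I$ (which by Lemma~\ref{filteredfinite} already occurs at some finite stage) is eventually satisfied; the weakly free lift is then assembled from the per-step lifts by the universal property of the colimit.

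Finally I would extract termination. Since $I$ is a model and $B$ is weakly free there is $\gamma\colon B\to I$ with $\gamma\circ b=l_0$; as $B$ is finite, Lemma~\ref{filteredfinite} (with the optional $g=b$) produces a stage $n$ and $\gamma'\colon B\to I_n$ with $l_n\circ\gamma'=\gamma$ and $\gamma'\circ b=c_n$, the composite $I_0\to I_n$. Conversely, applying the stage-by-stage lifting of part 1 to $b\colon I_0\to B$ yields $\beta_n\colon I_n\to B$ with $\beta_n\circ c_n=b$. Now $I_n$ is core under $c_n$ and $B$ is core under $b$, both finite, with maps $\beta_n,\gamma'$ over $I_0$, so Lemma~\ref{bisurjective} forces $\beta_n$ and $\gamma'$ to be isomorphisms. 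Hence $I_n\cong B$ is already a model, the core step at stage $n$ halts the chase, and the result is a finite weakly free model (free, when $\ax$ is cartesian, by part 1). The delicate points I expect to spend the most care on are verifying that the colimit of the alternating chase/core sequence is genuinely weakly free (the core steps are not chase steps, so Lemma~\ref{coollemma} does not apply verbatim) and checking that the hypotheses of Lemma~\ref{bisurjective} hold over $I_0$.
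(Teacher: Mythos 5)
Your proposal is correct and shares the paper's overall architecture---stage-wise lifting (chase-step lifts as in Lemma~\ref{chasecomputesweaklyinit}, then restriction along the core inclusion) for part 1, and ``colimit is a model'' plus Lemma~\ref{filteredfinite} plus Lemma~\ref{bisurjective} for part 2---but it diverges at the two hardest points, and in both places your route is a legitimate alternative. For the cartesian upgrade in part 1 the paper does not run a uniqueness induction: it invokes Corollary~\ref{existence} to get an abstract free model $(A,a)$, produces $f:I_n\to A$ and $g:A\to I_n$ with $f\circ g={\sf id}_A$, and uses core-ness of $I_n$ to force $g(A)=I_n$, so $g$ is an isomorphism. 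Your induction \`a la Proposition~\ref{chasecomputesinit} with epi-cancellation at core steps works too, but needs one added line: surjectivity of a core retraction is not part of the paper's definition of core---it follows because the retraction fixes the image of $I_0$ and restricts to a self-map of the core whose image, by core-ness, is everything. In part 2 the paper obtains the backward morphism $I_n\to A$ as $g\circ l_n$, where $g:I\to A$ comes from the claim that the colimit $(I,l_0)$ is weakly free (asserted ``by an argument similar to Lemma~\ref{coollemma}''); you instead build $\beta_n:I_n\to B$ stage by stage and use only that $I$ is a model and $B$ is weakly free for the forward direction. This is a real gain in robustness: the weak-freeness-of-the-colimit claim, which you also assert but rightly flag as delicate and never actually use, cannot be proved by assembling per-step lifts through the colimit's universal property, since those lifts do not form a cocone---with $m_k:I_k\hookrightarrow I'_k$ the core inclusion and $p_k$ the retraction, the needed compatibility $b_k\circ p_k=b'_k$ would require $b'_k$ to coequalize $m_k\circ p_k$ with the identity, which is not automatic. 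Since your termination argument bypasses it, you should simply delete that intermediate assertion (for a terminating sequence, weak freeness of the result follows from your part 1 anyway), keeping only the unproblematic statement that $I$ is a model, whose proof via pushing witnesses of fired triggers forward along the colimit legs is unaffected by the core steps.
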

\begin{proof}
$1$: Suppose this sequence terminates at $I_n$, and call the composition of the whole sequence $\mathcal{C}(0,n)$.  Then $I_n$ is a core model of $\ax$.  For a model $B$ of $\ax$ and a morphism $b_0:I_0\to B$, construct a morphism $b'_1:I'_1\to B$ with $b'_1\circ\mathcal{C}_0=b_0$ just as in Lemma~\ref{chasecomputesweaklyinit}.  Let $b_1:I_1\to B$ be the restriction of $b'_1$ to $I_1\subseteq I'_1$.  If we are doing the categorical core chase, then we have $p_1\circ \mathcal{C}_0=\mathcal{C}_0$, so $b_1\circ p_1\circ \mathcal{C}_0 = b_1\circ \mathcal{C}_0 = b'_1\circ \mathcal{C}_0 = b_0$.  Continuing this process by induction, we obtain $b_n:I_n\to B$, so $I_n$ is a finite universal model of $\ax$ on $I_0$.  If we are doing the categorical core chase, then we also obtain $b_n\circ \mathcal{C}(0,n) = b_0$ from the induction, so $(I_n, \mathcal{C}(0,n))$ is a finite weakly free model of $\ax$ on $I_0$.

If we are using the categorical chase and $\ax$ is cartesian, then let $(A,a)$ be a free model of $\ax$ on $I_0$ (using Corollary~\ref{existence}).  By ``weak free-ness'', there is a morphism $f:I_n\to A$ with $f\circ \mathcal{C}(0,n) = a$.  By ``freeness'' of $A$, there is a morphism $g:A\to I_n$ with $g\circ a = \mathcal{C}(0,n)$, and $f\circ g = {\sf id}_A$.  Thus $g(A)$ is a subinstance of $I_n$ with a morphism $g\circ f:I_n\to g(A)$.  Since $I_n$ is core, $I_n=g(A)$, so $g$ is an isomorphism and $(I_n,\mathcal{C}(0,n))$ is free.

$2$: Suppose there is a finite universal model $A$ (weakly free model $(A,a)$) of $\ax$ on $I_0$.  Without loss of generality we can choose $A$ to be core (under $a$).  Let $I$ be the colimit of the core chase sequence, with legs $l_n:I_n\to I,l'_n:I'_n\to I$.  By an argument similar to the proof of Lemma~\ref{coollemma}, $I$ is a universal model ($(I,l_0)$ is a weakly free model) of $\ax$ on $I_0$.  By universality (weak free-ness), there exist morphisms $f:A\to I$ and $g:I\to A$ (such that $f\circ a = l_0$ and $g\circ l_0 = a$).  By Lemma~\ref{filteredfinite}, there is an $n$ and a morphism $f_n:A\to I_n$ such that $l_n\circ f_n=f$ (and $f_n\circ a = \mathcal{C}(0,n)$).  Since $A$ is core (under $a$) and $I_n$ is core (under $\mathcal{C}(0,n)$) and we have morphisms $f_n:A\to I_n$ and $g\circ l_n: I_n\to A$ between them (with $f_n\circ a = \mathcal{C}(0,n)$ and $g\circ l_n\circ \mathcal{C}(0,n) = a$). Lemma~\ref{bisurjective} gives that $f_n$ is an isomorphism.
\qed \end{proof}


\subsection{Proof of Existence of Free Models Using the Ehresmann-Kennison Theorem}

We herein prove the ``cartesian'' half of Proposition~\ref{existence} a different way, using the Ehresmann-Kennison Theorem in the theory of sketches.

\begin{lemma}\label{existence2}
Given a signature $\sigma=(S,R)$ and  cartesian theory $\ax$, as in \cref{flt}, for any instance $I$ on  $\sigma$, there exists a free model $(\init_\ax(I), h)$ of $\ax$ on $I$. 

Moreover, $\init_\ax$ extends to a left adjoint to the forgetful functor $U:{\sf Mod}(\ax)\to \sigma{\sf -Inst}$, so ${\sf Mod}(\ax)$ is a reflective subcategory of $\sigma{\sf -Inst}$.
\end{lemma}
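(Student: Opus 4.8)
The plan is to realize $\ax$ as a finite-limit sketch and then invoke the Ehresmann--Kennison theorem \cite{ttt}, which guarantees that the category of $\Set$-valued models of any (small) finite-limit sketch is reflective in the ambient presheaf category. Since a signature carries only a \emph{set} of sorts and relation symbols and a cartesian theory is a \emph{set} of axioms (cf. \cref{flt}), everything in sight is small, so the theorem applies.

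First I would fix the category $C$ of Lemma~\ref{instancesascopresheafs}, so that $\sigma{\sf -Inst}$ sits as a full subcategory of $\Set^C$ and ${\sf Mod}(\ax)$ sits inside $\sigma{\sf -Inst}$. The goal is to build a finite-limit sketch $\mathcal{S}$ with underlying category $C$ whose models, viewed as a full subcategory of $\Set^C$, are exactly ${\sf Mod}(\ax)$. This requires encoding two things as finite-limit cones: (i) the instance condition, that each relation object $r$ is sent to a subobject of the product of its arity (i.e. the projections are jointly monic, a condition expressible by a cone forcing the relevant kernel pair to be trivial); and (ii) each cartesian axiom~(\ref{cartED}). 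For the latter I would use the translation between cartesian theories and finite-limit sketches alluded to in \cref{flt}: the frozen $\phi$-instance and $(\phi\wedge\psi)$-instance of Lemma~\ref{satisfiesED} are finite, so their interpretations $I\phi$ and $I(\phi\wedge\psi)$ are finite limits over $I$, and the $\exists!$ clause says precisely that the projection $I(\phi\wedge\psi)\to I\phi$ is an isomorphism, which is recorded by a cone declaring $I\phi$ itself to be the limit of the $(\phi\wedge\psi)$-diagram.

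Having built $\mathcal{S}$ and checked that ${\sf Mod}(\mathcal{S})={\sf Mod}(\ax)$ as subcategories of $\Set^C$, I would apply the Ehresmann--Kennison theorem to obtain a reflector $R\colon\Set^C\to{\sf Mod}(\ax)$, left adjoint to the full inclusion $\iota\colon{\sf Mod}(\ax)\hookrightarrow \Set^C$. Writing $\iota=\iota'\circ U$, where $\iota'\colon\sigma{\sf -Inst}\hookrightarrow\Set^C$ is the full inclusion of Lemma~\ref{instancesascopresheafs} and $U\colon{\sf Mod}(\ax)\hookrightarrow\sigma{\sf -Inst}$ is the forgetful functor, I would set $\init_\ax\coloneqq R\circ\iota'$. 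For $I\in\sigma{\sf -Inst}$ and $A\in{\sf Mod}(\ax)$ the natural isomorphisms
\[
{\sf Mod}(\ax)(\init_\ax(I),A)\;\cong\;\Set^C(\iota'I,\iota A)\;\cong\;\sigma{\sf -Inst}(I,UA),
\]
where the first comes from $R\dashv\iota$ and the second from full faithfulness of $\iota'$ together with $\iota A=\iota' U A$, show $\init_\ax\dashv U$. Finally, Lemma~\ref{lemma:adjunction} identifies $(\init_\ax(I),\eta_I)$ (with $\eta$ the unit) as an initial object of $I/U$, that is, a free model of $\ax$ on $I$, and simultaneously exhibits ${\sf Mod}(\ax)$ as a reflective subcategory of $\sigma{\sf -Inst}$.

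The main obstacle I anticipate is the first step: faithfully translating the relational instance structure and the $\exists!$ axioms into finite-limit cones and verifying that the sketch-theoretic models coincide on the nose with the logical models of ${\sf Mod}(\ax)$ inside $\Set^C$. In particular, the jointly-monic subobject condition is not literally a single limit cone and must be expressed through kernel pairs, and one must check that a $\Set$-valued functor sending the chosen diagrams to limits satisfies exactly the cartesian sequents and conversely. Once the sketch is correctly in hand, reflectivity is precisely the content of Ehresmann--Kennison and the remaining adjunction bookkeeping, via the full-faithfulness argument above and Lemma~\ref{lemma:adjunction}, is routine.
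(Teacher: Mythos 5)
Your overall strategy---encode $\ax$ as a finite-limit sketch and invoke Ehresmann--Kennison---is exactly the paper's, but there is a genuine gap at the step you yourself flag as the ``main obstacle,'' and it is not a routine verification: in general you \emph{cannot} build the required sketch with underlying category $C$ itself. In a finite-limit sketch, every distinguished cone must have its vertex an object of the underlying category and its legs morphisms of that category. Your encoding of condition (i) respects this (the trivial wide-kernel-pair cone has vertex $r$ and legs ${\sf id}_r$ and $p^r_j$, all in $C$), but your encoding of (ii) does not: for a general cartesian axiom, the interpretations $I\phi$ and $I(\phi\wedge\psi)$ are limits in $\Set$ that are not the value of $I$ at any object of $C$, and the would-be legs from ``$I\phi$'' to the sorts of the existentially quantified variables are precisely the witnessing maps, which exist only in models where the axiom already holds and are not morphisms of $C$. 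So ``a cone declaring $I\phi$ itself to be the limit of the $(\phi\wedge\psi)$-diagram'' is not legitimate sketch data over $C$. Note this already bites in the paper's main application: the path-equation egds of the cartesian theory of a category presentation have fronts $f_0(x,y_0)\wedge\cdots\wedge g_n(z_{n-1},z_n)$ interpreted as wide pullbacks, which are not representable in $C$; only the totality/functionality axioms (whose fronts are single sorts) admit your direct encoding.

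The paper resolves this by \emph{enlarging} the category: it builds $\mathfrak{C}\supseteq C$ by adjoining product objects $P_r$ (with cones making them products and making $r\to P_r$ monic), new objects $p,q$ for each axiom (with cones forcing them to be the limits interpreting $\phi$ and $\phi\wedge\psi$), and a comparison morphism $p\to q$ constrained to commute with projections. The price of this fix is that Ehresmann--Kennison then yields reflectivity of ${\sf Mod}(\mathfrak{C},\mathfrak{L})$ in $\Set^{\mathfrak{C}}$, not in $\Set^{C}$, so an extra step is needed that your proposal omits entirely: the adjunction $\Sigma_i\dashv\Delta_i$ along the inclusion $i:C\hookrightarrow\mathfrak{C}$, the composite adjunction $R\circ\Sigma_i\dashv\Delta_i\circ J$, and the check that the image of $\Delta_i\circ J$ lands in the full subcategory $\sigma{\sf -Inst}\subseteq\Set^C$, so that $I/(\Delta_i\circ J)\cong I/U$ and Lemma~\ref{lemma:adjunction} applies. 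Your concluding bookkeeping (composing $R\dashv\iota$ with full faithfulness of $\iota'$) is correct as stated, but it is conditional on a sketch over $C$ that your construction does not actually produce; once the sketch is moved to $\mathfrak{C}$, that bookkeeping must be replaced by the composite-adjunction argument above.
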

\begin{proof}
The proof uses the theory of sketches; see \cite{ttt} and \cite{Wells94sketches:outline}.  Note that understanding the proof is not required to understand our left Kan algorithm; in fact, we again prove this lemma using the chase in Lemma \ref{existence2}.  Given the signature $(S,R)$ and the theory $\ax$ there is a category $\mathfrak{C}$ and a set $\mathfrak{L}$ of cones such that the category of models for the sketch $(\mathfrak{C},\mathfrak{L})$ is equivalent to the category of $\ax$-models. Indeed, begin with the category with objects $S\sqcup R$ and a morphism $r \to s_i$ for each $r\in R$ with arity $s_0,\ldots, s_n$ and $0\leq i\leq n$.
Now for each $r\in R$ with arity $s_0,\ldots, s_n$, add an object $P_r$, a cone that forces it to be the product $s_0\times\cdots\times s_n$, and a cone that forces the canonical map $r\to P_r$ to be a monomorphism. Finally, for each axiom 
$$
  \forall (x_0:s_0) \cdots  (x_n:s_n) \ldotp
  \phi(x_0 , \cdots , x_n) \Rightarrow
  \exists ! (x_{n+1}:s_{n+1}) \cdots \ (x_m:s_m)\ldotp \psi(x_0, \ldots, x_m)
$$
in $\ax$, the conjunctions $\phi$ and $\psi$ are given by finite limits, say $p$ and $q$, which we introduce as new objects along with cones expressing that they are limits, and we finish by adding a morphism $p\to q$ and constraining it to commute correctly with projections.

By this process we obtain a finite-limit sketch $(\mathfrak{C}, \mathfrak{L})$, and it is tedious but not hard to show that the category of models of this sketch is equivalent to that of $\ax$-models.  


Now we use the Ehresmann-Kennison Theorem~\cite[Theorem 4.2.1]{ttt}, which states that the category of models of a finite limit sketch $(\mathfrak{C},\mathfrak{R})$ is a reflective subcategory of the functor category ${\sf Set}^{\mathfrak{C}}$, with reflector $R\dashv J$.

Now notice that the category $C$ constructed in Lemma~\ref{instancesascopresheafs} embeds in $\mathfrak{C}$ canonically --- call this embedding $i$.  Since $C$ is small, we then have an adjunction $\Sigma_i\dashv \Delta_i$.  We can combine the two adjunctions \cite{riehl} we have just constructed to form a composite adjunction $R\circ\Sigma_i\dashv\Delta_i\circ J$, as shown in the following diagram:

\[
\begin{tikzcd}
{\sf Mod}(\ax) \arrow[rr,"U", below]
&&
\sigma {\sf -Inst} \\
{\sf Mod}(\mathfrak{C},\mathfrak{L})
\arrow[r, "J"{name=J, below}, bend right=25]
\arrow[u, phantom, "\cong" rotate=-90]
&
{\sf Set}^{\mathfrak{C}}
\arrow[l, "R"{name=R,above}, bend right=25]
\arrow[phantom, from=J, to=R, "\dashv" rotate=-90]
\arrow[r, "\Delta_i"{name=Delta, below}, bend right=25]
&
{\sf Set}^C\arrow[l, "\Sigma_i"{name=Sigma, above}, bend right=25]
\arrow[phantom, from=Delta, to=Sigma, "\dashv" rotate=-90] 
\arrow[u, phantom, "\subseteq" rotate=-90]
\\
\end{tikzcd}
\]

For any instance $I$ on $\sigma$, considered as a functor $C\to{\sf Set}$, Lemma \ref{lemma:adjunction} gives that the category $I/(\Delta_i\circ J)$ has an initial object.  But the image of $\Delta_i\circ J$ is contained in the full subcategory $\sigma{\sf -Inst}$ (see Lemma~\ref{instancesascopresheafs}), so it lifts to $U$.  Thus we have an equivalence $I/(\Delta_i\circ J)\cong I/U$, proving the lemma.

\qed\end{proof}

\subsection{Semi-Na\"{i}ve Optimization}

In this section we discuss the ``semi-na\"{i}ve'' optimization for Algorithm~\ref{alg:fast}.  In this optimization, we avoid considering triggers that have, so to speak, already been considered.  The description of this optimization in \cite{Benedikt:2017:BC:3034786.3034796} relies on a particular construction of chase steps, but in our treatment thus far we have abstracted away from the construction of chase steps, opting instead to define them up to isomorphism, through a universal property.  We will stay abstract and bring into our abstraction only what is needed.

\begin{definition}
Let $\sigma$ be a signature. An {\it edit} of $\sigma$-instances $I$ and $J$ is a commutative diagram
\[
\begin{tikzcd}
I \arrow[rr, "f"] && J \\
&A \arrow[ul,hookrightarrow, "i"] \arrow[ur, hookrightarrow, "j"{below}]
\end{tikzcd}
\]
where $i$ and $j$ are monic, and we have $f\circ j = i$.
We write this edit succinctly as $(f,A,i,j):I\Longrightarrow J$.
\end{definition}

The instance $A$ expresses the data which is unchanged in the edit, occurring in both $I$ (via the embedding $i$) and $J$ (via the embedding $j$).  The edit is executed by discarding the additional data in $I$ and adding in the additional data in $J$.

To discuss semi-naive optimization, we must envision chase steps as edits rather than mere morphisms.  A morphism can be trivially upgraded to an edit by setting $A=\varnothing$, but standard implementations will do better than this.  For example, a chase step for a trigger of a tgd can easily be implemented with $A=I$ and $i=\textrm{id}_I$.  Loosely, the larger that $A$ can be made, the more the algorithm can be optimized.

\pagebreak 

\begin{definition}
Let $\sigma\textrm{-Edit}$ be the category of $\sigma$-instances and edits.  In this category, edits are composed through a pullback:

\[
\begin{tikzcd}
I \arrow[rr, "f"] && J  \arrow[rr, "g"] && K \\
&  A \arrow[ul,hookrightarrow, "i"] \arrow[ur, hookrightarrow, "j"{below}]
&& B \arrow[ul,hookrightarrow, "j'"] \arrow[ur, hookrightarrow, "k"{below}] \\
&& A\cap B \arrow[ul,hookrightarrow] \arrow[ur, hookrightarrow] \arrow[uu, phantom, "\rotatebox{135}{\scalebox{1.5}{$\lrcorner$}}" , very near start, color=black]
\end{tikzcd}
\]

And for any instance $I$, we have an ``identity'' edit $\overline{\textrm{id}}_I$, defined as the tuple $(\textrm{id}_I,I,\textrm{id}_I,\textrm{id}_I)$.
\end{definition}

\begin{definition}
Consider an edit $e\coloneqq (u,A,i,j):I\Longrightarrow J$ and a morphism $f:K\to J$.  We say that $f$ is $e$-{\it old} if the image of $f$ is contained in the image of $j$.  Otherwise, we say that $f$ is $e$-{\it new}.

\end{definition}

Equivalently, $f$ is $e$-old if it factors through $i$, as shown:
\[
\begin{tikzcd}
I \arrow[rr, "f"] && J \\
&A \arrow[ul,hookrightarrow, "i"] \arrow[ur, hookrightarrow, "j"{below}] \\
&& K \arrow[uu,"f"{right}] \arrow[dashed, ul, "g"] \\
\end{tikzcd}
\]
In this diagram we require that $g\circ j = f$.

For the semi-na\"{i}ve optimization, we introduce the variables $e_{\textrm{tgd}}$ and $e_{\textrm{egd}}$ to keep track of tgd and egd chase steps so as not to re-consider triggers that have already been chased.  In steps 5 and 12, we can see that triggers are required to be new with respect to these edits before it is even checked whether or not they are active.  We also split the variable $I_{\textrm{cur}}$ occurring in Algorithm~\ref{alg:fast} into $I_{\textrm{tgd}}$ and $I_{\textrm{egd}}$ for the more sophisticated bookkeeping necessary here.  The variables $I_{\textrm{tgd}}^{\textrm{prev}}$ and $I_{\textrm{egd}}^{\textrm{prev}}$ are entirely for the purpose of the discussion in Proposition~\ref{seminaive} --- they are not necessary for the algorithm to work.

\begin{algorithm}[H]\label{alg:seminaive}
\setstretch{1.35}
\SetAlgoLined
\caption{Semi-Na\"{i}ve Fast Parallel Chase Algorithm}\label{alg:seminaive}
\KwData{$\sigma$, $\ax$, $I_0$}
\KwResult{$I$, $i:I_0\to I$}
$i\coloneqq$ parallel chase $I_0\xRightarrow{\mathcal{C}(\textrm{all active triggers of EDs in $\ax$ with empty }{\sf front})} I$\;
$e_{\textrm{tgd}} \coloneqq (\varnothing \xRightarrow{(!,\varnothing,!,!)} I)$\;
$I_{\textrm{tgd}}\coloneqq I$\;
${\sf first}\coloneqq {\sf true}$\;
\While{$\textrm{TGD}\coloneqq \{\textrm{triggers } f \textrm{ of tgds in }I_{\textrm{tgd}} \mid f\textrm{ is $e_{\textrm{tgd}}$-new} \wedge f\textrm{ is active}\}$ is nonempty OR ${\sf first}$}{
  parallel chase $I_{\textrm{tgd}}\xRightarrow{\mathcal{C}(\textrm{TGD})}I$\;
  $i\coloneqq\textrm{compose}(I_0\xRightarrow{i}I_{\textrm{tgd}}\xRightarrow{\mathcal{C}(\textrm{TGD})} I)$\;
  $e_{\textrm{tgd}}\coloneqq \mathcal{C}(\textrm{TGD})$\;
  $e_{\textrm{egd}}\coloneqq {\sf first} ? (!,\varnothing,!,!) : \mathcal{C}(\textrm{TGD})$\;
  $I_{\textrm{egd}}^{\textrm{prev}}\coloneqq I_{\textrm{egd}}$\;
  $I_{\textrm{egd}}\coloneqq I$\;
  \While{$\textrm{EGD}\coloneqq \{\textrm{triggers } f \textrm{ of egds in }I_{\textrm{egd}} \mid f\textrm{ is $e_{\textrm{egd}}$-new} \wedge f\textrm{ is active}\}$ is nonempty}{
    parallel chase $I_{\textrm{egd}}\xRightarrow{\mathcal{C}(\textrm{EGD})}I$\;
    $i\coloneqq\textrm{compose}(I_0\xRightarrow{i}I_{\textrm{egd}}\xRightarrow{\mathcal{C}(\textrm{EGD})} I)$\;
    $e_{\textrm{tgd}}\coloneqq\textrm{compose}(I_{\textrm{tgd}}\xRightarrow{e_{\textrm{tgd}}}I_{\textrm{egd}}\xRightarrow{\mathcal{C}(\textrm{EGD})} I)$\;
    $e_{\textrm{egd}}\coloneqq \mathcal{C}(\textrm{EGD})$\;
    $I_{\textrm{egd}}^{\textrm{prev}}\coloneqq I_{\textrm{egd}}$\;
    $I_{\textrm{egd}}\coloneqq I$\;
  }
  $I_{\textrm{tgd}}^{\textrm{prev}}\coloneqq I_{\textrm{tgd}}$\;
  $I_{\textrm{tgd}}\coloneqq I$\;
  ${\sf first}\coloneqq {\sf false}$\;
}
\end{algorithm}

\begin{proposition} \label{seminaive}
Assume that the parallel chase step is implemented as an edit, not merely a morphism.  Then Algorithm~\ref{alg:seminaive} computes the same function as Algorithm~\ref{alg:fast}, and both functions converge for exactly the same inputs.
\end{proposition}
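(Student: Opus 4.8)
The plan is to show that the two algorithms proceed in lockstep. I would prove by induction on the number of chase steps that, on any input, Algorithm~\ref{alg:seminaive} and Algorithm~\ref{alg:fast} construct the \emph{same} sequence of instances (up to unique isomorphism) together with the same accompanying morphism $i$, and fire the \emph{same} set of triggers at each corresponding chase step. Since the sole difference between the two procedures is that Algorithm~\ref{alg:seminaive} restricts its candidate triggers to those that are $e$-new for the appropriate bookkeeping edit $e$, the whole statement reduces to one claim: this restriction discards only \emph{inactive} triggers, so the set of active $e$-new triggers of a given class equals the set of \emph{all} active triggers of that class. Once the fired sets agree at every step, the resulting pushouts agree up to unique isomorphism by the universal property defining a chase step (as exploited in Proposition~\ref{chasecomputesinit}), and the loop guards of Algorithm~\ref{alg:seminaive} at lines~5 and~12 (``$\textrm{TGD}$ nonempty'', ``$\textrm{EGD}$ nonempty'') evaluate exactly as the guards of Algorithm~\ref{alg:fast} (``has active triggers of tgds/egds''), giving both equality of the computed functions and identical termination behaviour.

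The key lemma I would isolate is the following. Suppose a (possibly composite) chase step is presented as an edit $e=(u,A,i,j)\colon I_{\textrm{src}}\Longrightarrow I_{\textrm{tgt}}$, where $u$ factors through a designated chase step on $I_{\textrm{src}}$ that fires all active triggers of the relevant class. Then every $e$-old trigger $f\colon{\sf front}\to I_{\textrm{tgt}}$ is inactive in $I_{\textrm{tgt}}$. The proof is short: being $e$-old means $f=j\circ g$ for some $g\colon{\sf front}\to A$, and since the edit commutes, $u\circ i=j$, so $f=u\circ(i\circ g)=u\circ f_{\textrm{src}}$ with $f_{\textrm{src}}\coloneqq i\circ g\colon{\sf front}\to I_{\textrm{src}}$. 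Now either $f_{\textrm{src}}$ was active in $I_{\textrm{src}}$, in which case the designated step fires it and its image becomes inactive, with inactivity then propagating along the remaining morphisms of $u$; or $f_{\textrm{src}}$ was already inactive. Either way there is a witness $g'\colon{\sf back}\to I_{\textrm{src}}$ with $g'\circ h=f_{\textrm{src}}$ (in the first case after the firing), and pushing it forward gives $(u\circ g')\circ h=u\circ f_{\textrm{src}}=f$, so $f$ is inactive. This uses only that inactivity (factoring through $h$) is preserved by post-composition with any morphism, and nothing about the size of $A$; hence it applies to \emph{any} edit-implementation of the chase step, which is precisely the hypothesis of the proposition.

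With the key lemma in hand, the remaining work is bookkeeping. I would verify two loop invariants: that $e_{\textrm{egd}}$ always equals the edit from the instance just before the most recent egd step — or the post-tgd instance, on the first inner pass — to the current instance; and that $e_{\textrm{tgd}}$ always equals the composite edit from $I_{\textrm{tgd}}$, the instance on which all active tgd triggers were last fired, to the current instance, accumulated across the intervening egd steps. These invariants are maintained exactly by the updates at lines~2,~8,~9,~15 and~16 together with the pullback composition of edits (whose unchanged part is $A\cap B$). Granting them, the source instance attached to each bookkeeping edit is one on which all active triggers of the corresponding class had already been disposed of, so the key lemma applies and yields that $\textrm{TGD}$ is the set of all active tgd triggers and $\textrm{EGD}$ the set of all active egd triggers at each round — precisely the sets chased by Algorithm~\ref{alg:fast}. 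The induction then closes.

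The hard part will be the invariant for $e_{\textrm{tgd}}$, since it must survive an entire inner egd loop: I must check that the pullback-composition genuinely tracks the persistently-unchanged elements (those present in $I_{\textrm{tgd}}$ that are neither superseded by the tgd step nor merged by any egd step), so that an $e_{\textrm{tgd}}$-old trigger really descends to a trigger of $I_{\textrm{tgd}}$ and not merely to some intermediate instance; the reset of $e_{\textrm{egd}}$ at line~9 (the empty edit on the first outer iteration, the tgd-step edit otherwise) needs separate justification via the same lemma, the non-first case being vacuous because the previous outer iteration left all egds satisfied. A final minor point is to reconcile the explicit empty-${\sf front}$ pre-step of Algorithm~\ref{alg:seminaive} (line~1) with the way Algorithm~\ref{alg:fast} disposes of empty-${\sf front}$ EDs inside its first round; these fire the same triggers, so the two initial states agree and the induction may begin.
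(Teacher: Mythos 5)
Your key lemma is precisely the content of the paper's own proof: the paper reduces the proposition to showing that the $e$-new filters at lines 5 and 12 reject only inactive triggers, and establishes this by the same argument you give --- an $e$-old trigger factors through the unchanged part $A$, hence descends along the edit to a trigger on the source instance, where it is either disposed of by the designated chase step or already inactive, and the witness ${\sf back}\to\cdot$ pushes forward (inactivity is preserved by post-composition). The paper organizes this as four diagram cases (line 5 first/subsequent; line 12 with $e_{\textrm{egd}}$ the empty edit, $\mathcal{C}(\textrm{TGD})$, or $\mathcal{C}(\textrm{EGD})$), which your single lemma plus two loop invariants repackage; your remark that the non-first line-9 case rests on $I_{\textrm{tgd}}$ satisfying all egds, rather than on a firing step, is exactly the paper's third case. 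So the substance is correct and is the paper's argument.

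The step that would fail as stated is your wrapper. The two algorithms do \emph{not} construct the same sequence of instances, nor fire the same trigger sets at corresponding steps, so the lockstep induction has a false base case, and the culprit is exactly the line-1 pre-step you defer to the end: after Algorithm~\ref{alg:seminaive} fires the empty-front EDs, its first $\textrm{TGD}$ set is computed on the enlarged instance and can contain triggers whose fronts involve the freshly created elements, whereas Algorithm~\ref{alg:fast} cannot fire those until its second round, since they do not exist as triggers on $I_0$. Concretely, take $\ax=\{\top\Rightarrow\exists y\ldotp R(y),\ R(x)\Rightarrow\exists z\ldotp S(x,z),\ P(x)\Rightarrow\exists w\ldotp Q(x,w)\}$ and $I_0=\{a\}$ with $P(a)$: after one step Algorithm~\ref{alg:fast} has $Q(a,w_1)$ but no $S$-tuple, while Algorithm~\ref{alg:seminaive} has neither, and the trigger sets fired differ; so the patch ``these fire the same triggers, so the two initial states agree'' is not available. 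The repair is to drop lockstep altogether: once your key lemma shows the guards of Algorithm~\ref{alg:seminaive} are equivalent to ``there exists an active trigger,'' that algorithm is itself a fair parallel chase satisfying conditions 1--3 of Proposition~\ref{chasecompleteness}, just as Algorithm~\ref{alg:fast} is. Hence each algorithm, when it halts, outputs a finite free model of $\ax$ on $I_0$ (Proposition~\ref{chasecomputesinit}), unique up to unique isomorphism, and each halts iff such a finite model exists (Proposition~\ref{chasecompleteness} together with Lemma~\ref{finiteweaklyinitial}). That yields ``same function'' and ``same convergence'' without matching schedules step by step --- and it is also the honest reading of the paper's phrase that the filtered algorithm is ``essentially the same'' as Algorithm~\ref{alg:fast}.
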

\begin{proof}
It is apparent that if the conjunctive conditions in lines 5 and 12 of Algorithm~\ref{alg:seminaive} were replaced with simply ``$f$ is active'', the algorithm would become essentially the same as Algorithm~\ref{alg:fast}.  Thus it suffices to show that these conditions are equivalent to ``$f$ is active'' whenever these lines are run, i.e. that the first conjunct fails only on inactive triggers.

First we take line 5.  If $f:{\sf front}\to I_{\mathrm{tgd}}$ is a trigger of a tgd in $I_{\mathrm{tgd}}$ which is $e_{\mathrm{tgd}}$-old, we must show that it is inactive. The first time that this line is encountered, $e_{\mathrm{tgd}}$ is equal to $(!,\varnothing,!,!)$ as given in line 2, so we must have ${\sf front}=\varnothing$.  We see from the following diagram that $f$ is inactive, since the dashed arrow exists by line 1.

\begin{tikzcd}
\varnothing \ar[rd, "f"] \ar[d] \ar[r] & I_0 \ar[d, "i"] \\
{\sf back} \ar[r, dashed] & I_{\textrm{tgd}}
\end{tikzcd}

The following times that line 5 is encountered, $e_{\mathrm{tgd}}$ is given by lines 9 and 15 --- it is the composite of the chase from $I_{\textrm{tgd}}^{\textrm{prev}}$ to $I_{\textrm{tgd}}$.  Since $f$ is $e_{\mathrm{tgd}}$-old, it factors through $I_{\textrm{tgd}}^{\textrm{prev}}$; thus by line 6 we have the dashed arrow in the following diagram, so $f$ is inactive.

\begin{tikzcd}
&& I_{\textrm{tgd}}^{\textrm{prev}} \ar[d, "\mathcal{C}(\textrm{TGD})"] \\
{\sf front} \ar[rrd, "f" description] \ar[r] \ar[d] & A \ar[ur, hookrightarrow] \ar[dr, hookrightarrow] & \cdot \ar[d] \\
{\sf back} \ar[rru, dashed, bend right=15] && I_{\textrm{tgd}}
\end{tikzcd}

Now we take line 12.  If $f:{\sf front}\to I_{\mathrm{egd}}$ is a trigger of an egd in $I_{\mathrm{egd}}$ which is $e_{\mathrm{egd}}$-old, we must show that it is inactive.  The first time this line is encountered, $e_{\mathrm{tgd}}$ is equal to $(!,\varnothing,!,!)$ as given in line 9, so we must have ${\sf front}=\varnothing$.  Since the morphism ${\sf front}\to{\sf back}$ must be surjective for a tgd, we have ${\sf back}=\varnothing$ as well, so $f$ is clearly inactive.  If line 12 is encountered an additional time at the beginning of the ``egd'' while loop, we have $e_{\mathrm{egd}}=\mathcal{C}(\textrm{TGD})$.  Then $f$ factors through $I_{\textrm{tgd}}$, as shown in the following diagram.  By the inner while loop (lines 12-18), $I_{\textrm{tgd}}$ satisfies all egds, giving us the dashed arrow.  Thus $f$ is inactive.

\begin{tikzcd}
&& I_{\textrm{tgd}} \ar[dd, "\mathcal{C}(\textrm{TGD})"] \\
{\sf front} \ar[rrd, "f" description] \ar[r] \ar[d] & A \ar[ur, hookrightarrow] \ar[dr, hookrightarrow] & \\
{\sf back} \ar[rruu, dashed, bend right=40] && I_{\textrm{egd}}
\end{tikzcd}

Finally, we have the case that line 12 is encountered an additional time within the inner while loop.  In this case, $e_{\mathrm{egd}}=\mathcal{C}(\textrm{EGD})$; then $f$ factors through the previous value of $I_{\textrm{egd}}$.  The dashed arrow in the following diagram then exists by line 13, so $f$ is inactive.

\begin{tikzcd}
&& I_{\textrm{egd}}^{\textrm{prev}} \ar[dd, "\mathcal{C}(\textrm{EGD})"] \\
{\sf front} \ar[rrd, "f" description] \ar[r] \ar[d] & A \ar[ur, hookrightarrow] \ar[dr, hookrightarrow] & \\
{\sf back} \ar[rr, dashed] && I_{\textrm{egd}}
\end{tikzcd}

\qed\end{proof}

\tableofcontents

 \end{document}